%% file: main.tex
\documentclass[aps,physrev,twocolumn,superscriptaddress]{revtex4-2}
\input{pretex}

\usepackage{amsmath,amsfonts,amssymb,array,graphicx,mathtools,multirow,bm,times,tcolorbox,relsize,booktabs, quantikz}
\usepackage[utf8]{inputenc}
\usepackage[T1]{fontenc}
\usepackage[paperwidth=199.8mm,paperheight=297mm,centering,hmargin=20mm,vmargin=2cm]{geometry}
\usepackage[ruled, vlined, linesnumbered]{algorithm2e}

\definecolor{colortwo}{rgb}{0.4,0.77,0.17}
\definecolor{colorthree}{rgb}{0.01,0.51,0.93}

\allowdisplaybreaks

\begin{document}
\title{A Nonstabilizerness Resource Law for Universal Quantum State Purification}

\author{Keming He}
\affiliation{QudeLeap Research, Shanghai 200030, China.}
\affiliation{Thrust of Artificial Intelligence, Information Hub, \\ The Hong Kong University of Science and Technology (Guangzhou), Guangdong 511453, China.}
\author{Enji Xiong}
\affiliation{QudeLeap Research, Shanghai 200030, China.}
\affiliation{Thrust of Artificial Intelligence, Information Hub, \\ The Hong Kong University of Science and Technology (Guangzhou), Guangdong 511453, China.}
\author{Xin Wang}
\email{felixxinwang@hkust-gz.edu.cn}
\affiliation{Thrust of Artificial Intelligence, Information Hub, \\ The Hong Kong University of Science and Technology (Guangzhou), Guangdong 511453, China.}

\date{\today}

\begin{abstract}

Quantum state purification aims to recover higher-fidelity quantum states from
multiple noisy copies and is a fundamental primitive for quantum information
processing. Magic resources enable operations beyond classically simulable
dynamics and are central to universal fault-tolerant quantum computation.
Recent no-go results show that classically simulable operations cannot achieve
a nontrivial universal fidelity gain. This motivates a quantitative theory of
the magic required for purification at prescribed success probability and
target fidelity. For universal purification with two input copies, we prove an exact linear
mana law in odd dimensions and a two-sided linear robustness law for
multi-qubit systems, which becomes exact for a single qubit. We also identify
an explicit successful purification map that makes the tradeoff transparent.
These results establish universal purification as a task obeying a quantitative
magic--fidelity law and link magic resources to error mitigation and
fault-tolerant quantum information processing.

\end{abstract}

\maketitle

\textit{Introduction}.---
Quantum devices are intrinsically noisy: as a state is prepared, processed, or stored, imperfections gradually drive it away from its intended trajectory. Quantum state purification offers a general remedy for this drift~\cite{Barenco1997,cirac1999optimal,keyl2001rate,fiuravsek2004optimal,fu2016quantum,yao2024protocols,li2024optimal,li2026quantum}. Given several noisy replicas of an unknown target state, the goal is to distill a output with higher fidelity to that target. In the universal setting, a protocol must work for arbitrary pure states, making purification a state-independent primitive for noise reduction rather than a scheme tailored to a particular input.

A simple but fundamental route to universal purification is symmetrization. Ideal replicas of the same pure state lie entirely within the symmetric subspace, whereas noise introduces components that leak outside it; repeatedly projecting the replicas back onto this subspace therefore suppresses errors without requiring any knowledge of the target state. Such symmetric-projection protocols—implementable through swap tests or controlled permutations—provide fundamental benchmarks for the achievable fidelity and success probability, and a substantial body of work has characterized their optimal performance under specific noise models and operational criteria. Unlike quantum error correction, which relies on carefully engineered codes and repeated syndrome extraction, symmetrization-based purification is hardware- and algorithm-agnostic, offering a complementary safeguard against noise accumulation in quantum computation.

The operational value of quantum state purification, however, depends on which transformations are physically accessible without incurring additional resource cost. In entanglement purification, for instance, LOCC and related free-operation classes determine the achievable distillation performance and reveal the resource character of the task~\cite{bennett1996purification,bennett1996mixed,deutsch1996quantum,dur2007entanglement,horodecki2009quantum}. An analogous question arises for general state purification~\cite{zhao2026power}: even when unrestricted quantum operations can improve a state, it remains unclear whether the same improvement can be achieved within a prescribed free class, or what additional resource it costs beyond that class. For probabilistic purification, this naturally leads to a quantitative question—posed at fixed success probability and target fidelity—of how much resource is truly required to purify a state.

A natural and key resource to examine in this context is magic, which lies at the heart of universal fault-tolerant quantum computation. Stabilizer operations, including Clifford gates, Pauli measurements, and stabilizer state preparations, are efficiently classically simulable by the Gottesman-Knill theorem~\cite{gottesman1997stabilizer}, while universal computation requires magic states or non-Clifford operations~\cite{gottesman1999demonstrating,zhou2000methodology,bravyi2005universal,Watson2016,Qassim2021}. Recent experiments have demonstrated key primitives for logical magic state processing, including distillation, cultivation, and code switching in fault-tolerant architectures~\cite{gupta2024encoding, sales2025experimental,rosenfeld2025magic, bluvstein2026fault}. This progress underscores the need for quantitative measures of magic, a need addressed by the resource theory of magic~\cite{wang2019quantifying,seddon2019quantifying,howard2017application,veitch2014resource,chen2025physical,seddon2021quantifying}. In the purification setting, recent impossibility results show that classically simulable operations cannot achieve a nontrivial universal fidelity gain~\cite{he2026no}. These results identify the zero-magic boundary of the task, but leave open the resource law beyond this boundary: how much magic is the price of a prescribed universal fidelity gain?


In this work, we answer this question by introducing magic resource quantifiers for successful purification under depolarizing noise. We assign a resource measure to each successful trace-non-increasing operation and show that the corresponding optimization admits a semidefinite programming (SDP) formulation. For purification with two input copies, we establish a nonstabilizerness resource law: in odd dimensions, the exponentiated mana is exactly linear in the fidelity gain over the unpurified state; in multi-qubit systems, the robustness of the Choi state is bounded above and below by analytic linear functions of the same fidelity gain, with equality attained in the single-qubit case. We further identify an explicit two-copy purification map achieving any prescribed success probability and target fidelity, whose parametrization renders the underlying tradeoff transparent. These results establish that universal purification obeys a quantitative magic–fidelity law~(cf.~Fig.~\ref{fig:framework}), linking purification performance to nonstabilizerness in a manner relevant to error mitigation and fault-tolerant quantum computation.


\begin{figure}[htbp]
    \centering
    \includegraphics[width=1.05\linewidth]{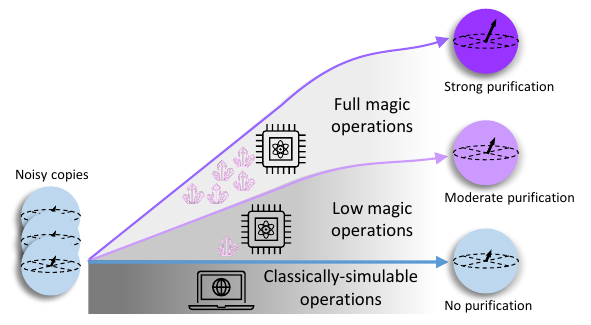}
    \caption{Schematic illustration of the role of magic in universal quantum state purification. Starting from noisy copies, classically simulable operations remain at the zero-gain boundary, operations with limited magic yield moderate purification, and operations with larger magic can reach stronger fidelity improvement.}
    \label{fig:framework}
\end{figure}


\textit{Preliminaries}.---
Let $\cH_A$ be a $d$-dimensional Hilbert space with computational basis $\{\ket{j}\}_{j=0}^{d-1}$. A quantum channel $\cN_{A\to B}$ is a completely positive and trace-preserving~(CPTP) map from $\cL(\cH_A)$ to $\cL(\cH_B)$. We also use completely positive trace-non-increasing~(CPTN) maps to describe successful branches of probabilistic protocols. The depolarizing channel with error parameter \(\delta\) is denoted by \(\cD_\delta(\rho)=(1-\delta)\rho+\delta I_d/d\). The Choi-Jamio\l{}kowski operator of \(\cN_{A\to B}\) is \(J^\cN_{AB}\coloneqq\sum_{i,j=0}^{d-1}\ketbra{i}{j}_A\otimes\cN_{A\to B}(\ketbra{i}{j})\), and the normalized Choi state is \(\Phi^\cN_{AB}\coloneqq J^\cN_{AB}/d_A\).

Magic, or nonstabilizerness, is the resource that separates general quantum states and operations from those generated by stabilizer states and Clifford operations. In odd-dimensional systems, this resource is naturally described by the discrete Wigner representation~\cite{gross2006hudson,veitch2014resource,wang2019quantifying}. For prime \(d\), the Heisenberg--Weyl operators are \(T_{\mathbf u}=\tau^{-a_1a_2}Z^{a_1}X^{a_2}\), where \(\tau=e^{(d+1)\pi i/d}\), \(\mathbf u=(a_1,a_2)\in\mathbb Z_d\times\mathbb Z_d\), and \(X,Z\) are the shift and boost operators. The phase space point operators are defined by \(A^{\mathbf 0}=d^{-1}\sum_{\mathbf u}T_{\mathbf u}\) and \(A^{\mathbf u}=T_{\mathbf u}A^{\mathbf 0}T_{\mathbf u}^\dagger\). For composite odd dimensions, we use the corresponding tensor-product construction. The Wigner function of a state is \(W_\rho(\mathbf u)=d^{-1}\tr[A^{\mathbf u}\rho]\). By the discrete Hudson theorem, a pure state in odd dimension is a stabilizer state if and only if its Wigner function is non-negative~\cite{gross2006hudson}. Therefore, Wigner negativity gives a direct witness of magic.

For operations, we use the channel Wigner function \(W_\cN(\mathbf v|\mathbf u)\), defined from the Choi operator of \(\cN\), and quantify operational magic by the mana \(\cM(\cN)=\log\max_{\mathbf u}\sum_{\mathbf v}|W_\cN(\mathbf v|\mathbf u)|\). This definition extends directly to trace-non-increasing maps, which are the objects that appear after conditioning on success in probabilistic purification. For multi-qubit systems, the discrete Wigner representation is not equally convenient as a magic witness while keeping the usual stabilizer operations free~\cite{delfosse2015wigner,raussendorf2017contextuality,raussendorf2020phase}. We therefore use the robustness of magic \(\cR(\rho)\), which measures the minimal stabilizer decomposition cost of a state \(\rho\)~\cite{howard2017application,seddon2019quantifying}. For a map \(\cN\), we quantify its Choi-state magic by \(\cR(\Phi^\cN_{AB})\). In the probabilistic setting, the relevant object is the Choi-state robustness of the successful trace-non-increasing operation. Further details are given in Appendix~\ref{appendix:preliminaries}.

\textit{Magic cost of probabilistic purification}.---
Noise and decoherence degrade quantum states prepared on near-term and fault-tolerant devices. Quantum state purification addresses this problem by using several noisy copies of an unknown target state to produce an output with higher fidelity. This task is usually formulated as a performance optimization problem, asking which fidelity and success probability can be achieved by a given class of protocols~\cite{fiuravsek2004optimal}. In the universal setting, however, this performance question is constrained by resource theory: classically simulable operations cannot achieve a nontrivial universal fidelity gain~\cite{he2026no}. This motivates the quantitative question studied here. Once a target success probability and fidelity are prescribed, how much magic must be present in the successful branch that realizes them? The same optimization framework can be
formulated for more general noise models once the noisy input ensemble is
specified, but in this paper we focus on depolarizing noise, for which the
universal resource laws can be made explicit.


Let \(\ket{\psi}\) be the ideal pure state and write \(\psi=\ketbra{\psi}{\psi}\). We assume that each copy undergoes the depolarizing channel \(\cD_\delta\), so the fidelity of one noisy copy with the target is \(\lambda_0\coloneqq\tr[\cD_\delta(\psi)\psi]=1-\frac{d-1}{d}\delta\). Given \(n\) noisy copies, a successful branch of a probabilistic protocol is described by a CPTN map \(\cE_{A^n\to A}\), with unnormalized output \(\sigma_\psi=\cE_{A^n\to A}(\cD_\delta(\psi)^{\ox n})\).

For a finite test set \(\Psi=\{\psi_i\}_{i=1}^{|\Psi|}\), we define the average conditional fidelity and average success probability~\cite{fiuravsek2004optimal, yao2024protocols} as
\begin{equation}\label{eq:max_fidelity}
\begin{aligned}
    F(\cE;\Psi)
    &=
    \frac{\sum_{\psi_i\in\Psi}\tr[\sigma_{\psi_i}\psi_i]}
    {\sum_{\psi_i\in\Psi}\tr[\sigma_{\psi_i}]},
    \\
    P(\cE;\Psi)
    &=
    \frac{1}{|\Psi|}
    \sum_{\psi_i\in\Psi}\tr[\sigma_{\psi_i}].
\end{aligned}
\end{equation}
For a prescribed target pair \((p,f)\), we require \(F(\cE;\Psi)=f\) and \(P(\cE;\Psi)=p\). We focus on the purification regime \(f\ge\lambda_0\), where the conditional output improves on the original noisy copy. The corresponding feasible set is
\begin{equation}
\begin{aligned}
\mathcal A_\delta(n,f,p;\Psi)
\coloneqq
\Big\{
\cE_{A^n\to A}:\;
\cE \ {\rm is\ CPTN},\\
F(\cE;\Psi)=f,\;
P(\cE;\Psi)=p
\Big\}.
\end{aligned}
\end{equation}

Fixing \((p,f)\) changes the role of purification from finding the best attainable performance to pricing a desired operational target. A less demanding target may require less nonstabilizerness in the successful branch, while a larger fidelity gain should require more. We make this idea precise by minimizing a magic measure over all operations in \(\mathcal A_\delta(n,f,p;\Psi)\).

For odd-dimensional qudit systems, we define the \textit{mana of purification for $\Psi$} at target pair \((p,f)\) by
\begin{equation}
\begin{aligned}
\cM_{\cD_\delta}(n,f,p;\Psi)
&\coloneqq
-\log p
+
\min_{\cE\in\mathcal A_\delta(n,f,p;\Psi)}
\cM(\cE) \\
=
\log &
\min_{\cE\in\mathcal A_\delta(n,f,p;\Psi)}
\max_{\mathbf u}
\frac{1}{p}
\sum_{\mathbf v}
\left|W_{\cE}(\mathbf v|\mathbf u)\right|.
\end{aligned}
\end{equation}
For multi-qubit systems, we define the \textit{robustness of purification for $\Psi$} at target pair \((p,f)\) by
\begin{equation}
    \cR_{\cD_\delta}(n,f,p;\Psi)
    \coloneqq
    \min_{\cE\in\mathcal A_\delta(n,f,p;\Psi)}
    \frac{1}{p}
    \cR(\Phi^\cE_{A_I^nA_O}),
\end{equation}
where \(\Phi^\cE_{A_I^nA_O}=J^\cE_{A_I^nA_O}/d_{A_I^n}\) is the Choi state of the successful branch. The factor \(1/p\) converts the unnormalized successful CPTN branch into the
map conditioned on success. This is the natural quantity for a probabilistic
protocol, because it separates the successful purification map itself from its
success probability
~\cite{vidal2000entanglement,regula2022probabilistic,
fang2018probabilistic,takagi2019general}. \(\cM_{\cD_\delta}\) and
\(\cR_{\cD_\delta}\) quantify the magic of the successful purification map
itself, while \(p\) records its success probability. This separation between
branch magic and success probability becomes quantitative in the universal
two-copy depolarizing setting studied below.

Postselection can trade rate for fidelity, but it does not remove the resource cost of a fidelity gain. The definitions above therefore give a resource analogue of purification: rather than asking only whether \((p,f)\) is achievable, we ask for the minimum nonstabilizerness needed to implement it.


We extend the finite-set task to universal purification by replacing the finite average over \(\Psi\) with the Haar average over all pure input states. This gives the \textit{mana of universal purification} \(\cM_{\cD_\delta}(n,f,p,d)\) and the \textit{robustness of universal purification} \(\cR_{\cD_\delta}(n,f,p,d)\). After expressing the fidelity and success constraints as linear constraints on the Choi operator of \(\cE_{A^n\to A}\), both versions admit semidefinite programming formulations. The detailed definitions and SDP derivations are given in Appendix~\ref{app:sdp}.

\textit{Resource laws for universal purification}.---
We now specialize to two input copies, \(n=2\), and study the mana and robustness of universal purification. By definition, \(\cM_{\cD_\delta}(2,f,p,d)\) and
\(\cR_{\cD_\delta}(2,f,p,d)\) are obtained by optimizing over all feasible
successful trace-non-increasing branches satisfying the target pair
\((p,f)\). The theorems below therefore give global resource laws. The main result is that the magic required by the successful branch is controlled directly by the universal fidelity gain \(f-\lambda_0\). 

For odd-dimensional qudit systems, the exponentiated mana admits an exact linear law.

\begin{theorem}\label{thm:mana-universal}{\rm (Mana law for universal purification)}
For two-to-one universal probabilistic purification of odd-dimensional qudit states under depolarizing noise, the exponentiated mana of universal purification is exactly linear in the fidelity gain \(f-\lambda_0\).
\end{theorem}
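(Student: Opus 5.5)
The plan is to reduce the SDP defining \(\cM_{\cD_\delta}(2,f,p,d)\) to a small linear program and solve it in closed form.

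\emph{Symmetry reduction.} The Haar-averaged fidelity and success constraints are invariant under \(\cE\mapsto \mathcal U^\dagger\circ\cE\circ(\mathcal U^{\otimes 2})\) for every unitary \(U\), and under permuting the two inputs. The objective \(\max_{\mathbf u}\sum_{\mathbf v}|W_\cE(\mathbf v|\mathbf u)|\) is only invariant under the Clifford subgroup, since Cliffords permute phase-space points covariantly. Twirling over the full unitary group would therefore not obviously decrease mana. I would twirl over the Clifford group instead, which preserves feasibility and, by convexity of the \(\ell_1\)-norm and invariance of the max, does not increase the objective.

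The Clifford group is a unitary 2-design but not a 3-design, and the Choi operator lives on three copies. So I would check whether Clifford twirling on \(A_I^2A_O\) already projects onto the span of permutation operators on three systems. For odd \(d\) it does not exactly, because the commutant contains extra stabilizer-code projectors. I would therefore argue separately that the constraints depend only on the component of \(J^\cE\) in the permutation algebra, and that the remaining components can be set to zero without increasing the Wigner \(\ell_1\)-norm.

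\emph{Reduced problem.} In the reduced problem, \(J^\cE\) is a combination of the six permutation operators of \(S_3\). Symmetrizing over the input swap leaves four or five real parameters, which I would organize by the isotypic projectors: symmetric, antisymmetric, and the two-dimensional irrep. In this basis, positivity and trace non-increase (\(\tr_{A_O}J\le I\)) become a few linear inequalities.

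The constraints \(F=f\) and \(P=p\) become two linear equations. I would compute them using
\[
\int \psi^{\otimes 3}\,d\psi=\frac{\Pi_{\rm sym}^{(3)}}{\binom{d+2}{3}}
\]
together with the explicit expansion of \(\cD_\delta(\psi)^{\otimes 2}\). The key fact I expect is that \(f-\lambda_0\) is proportional, with coefficient \(p\) in the denominator, to one particular parameter: the weight of the branch that deviates from a trivial discard-one-copy map, essentially the antisymmetric or symmetric-projector weight.

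\emph{Wigner norm and optimization.} Next I would compute the Wigner functions of the permutation operators. The identity and swap have simple Wigner functions: swap corresponds to a point-reflection-type structure, because \(A^{\mathbf u}\) are parity operators. The cyclic permutations on three systems have Wigner functions with constant modulus of order \(1/d\) on phase space, with signs determined by the parity.

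With these, \(\max_{\mathbf u}\sum_{\mathbf v}|W_\cE(\mathbf v|\mathbf u)|/p\) becomes a piecewise-linear function of the parameters. Minimizing it subject to the linear constraints should give a value of the form \(1+c_d\,(f-\lambda_0)/(\text{something in }\delta)\), with \(p\) cancelling. This is the claimed linear law for \(2^{\cM}\).

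I would confirm the lower bound by exhibiting a dual SDP/LP feasible point with matching value, and the upper bound with an explicit map. That map interpolates between keeping one copy (zero mana) and the symmetric projection followed by partial trace, rescaled to success probability \(p\).

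\emph{Main obstacle.} I expect the hardest step to be the Clifford-versus-unitary twirling gap: justifying that the non-permutation commutant components cannot lower the mana. I would first attempt to handle it through the dual program, constructing a dual witness that is itself unitarily invariant, so that the lower bound holds without any symmetry reduction of the primal.
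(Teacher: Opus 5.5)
The gap is the lower bound, which is what makes the law exact. Your upper bound is fine and is the paper's: the rescaled interpolation between discarding a copy and the symmetric projection is the family $J^\cE=\mu_1(\mathbf P_3((13))^{T_3}+\mathbf P_3((23))^{T_3})+\mu_2(\mathbf P_3((123))^{T_3}+\mathbf P_3((132))^{T_3})$. For it, $\max_{\mathbf u}\sum_{\mathbf v}|W_\cE(\mathbf v|\mathbf u)|=2\mu_1+2d\mu_2$, attained at $\mathbf u_1=\mathbf u_2$. One small correction: for odd $d$, $\tr[A^{\mathbf u_1}A^{\mathbf u_2}A^{\mathbf v}]$ is a $d$-th root of unity, not a sign. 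The two $3$-cycles therefore contribute $\frac{2\mu_2}{d}\cos(\frac{4\pi}{d}\cC(\mathbf u_1,\mathbf u_2,\mathbf v))$, which does not change the maximum.

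Your primary route to the lower bound relies on dropping the non-permutation part of the Clifford commutant without increasing the Wigner $\ell_1$-norm. You do not prove this, and it amounts to full unitary twirling, which you note is not a priori mana-monotone. The proportionality $f-\lambda_0\propto\mu_2/p$ also holds only inside your explicit family; on the full reduced algebra the $S_\pm$ components enter as well. So minimizing the reduced LP would only give another upper bound. Your fallback, a unitarily invariant dual point, is the right idea, but you never construct it, and it is essentially the whole content of the lower bound.

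The missing idea is that the swap can be written with Wigner coefficients of total mass $1/p$. Take $Y=0$ and $S_{\mathbf u,\mathbf v}=\gamma_{\mathbf u}=\delta(\mathbf u_1,\mathbf u_2)/(d^2p)$, so that $\sum_{\mathbf u}\gamma_{\mathbf u}=1/p$. Two facts give $C=\mathbf P_3((12))/(dp)$, which is $U\otimes U\otimes\overline U$-invariant:
\begin{itemize}
\item $\tr[A^{\mathbf u_1}A^{\mathbf u_2}]=d\,\delta(\mathbf u_1,\mathbf u_2)$, which is your own remark about the swap's Wigner function, gives $\sum_{\mathbf j}A^{\mathbf j}\otimes A^{\mathbf j}=d\,\mathbf P_2((12))$;
\item $\sum_{\mathbf v}A^{\mathbf v}=dI$.
\end{itemize}
Now set $\alpha=\cK_\cM/p$ and $\beta=(1-\cK_\cM\lambda_0)/p$, with $\cK_\cM=\frac{d+\delta(2-\delta)}{\lambda_0(1-\delta)\delta}$. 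The dual value is then $1+\cK_\cM(f-\lambda_0)$, which matches $z/p$ for your explicit map. After a full transpose, the semidefinite constraint becomes positivity of an element of the algebra generated by $X$ and $V$. The Eggeling--Werner criterion confirms it: $s_\pm\ge0$ and $s_0=s_1=s_2=s_3=0$. Weak duality then bounds every feasible map directly, so no twirling of the primal is needed.
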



For multi-qubit systems, the corresponding resource law is expressed in terms of Choi-state robustness.


\begin{theorem}\label{thm:robustness-universal}{\rm (Robustness law for universal purification)}
For two-to-one universal probabilistic purification of multi-qubit states under depolarizing noise, the robustness of universal purification is bounded above and below by linear functions of the fidelity gain \(f-\lambda_0\). For a single-qubit system, the two bounds coincide, giving an exact linear law.
\end{theorem}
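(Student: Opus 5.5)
We use the symmetry of the universal two-copy task to shrink the SDP to a few real variables, and then bound the stabilizer robustness of the resulting Choi operators from both sides.

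\emph{Step 1: twirling.} The Haar-averaged fidelity and success constraints are invariant under $\cE\mapsto \cU^\dagger\circ\cE\circ\cU^{\ox 2}$ for $U\in SU(2^m)$. They are also invariant under swapping the two input copies. Robustness is not unitarily invariant, so we twirl only over the Clifford group. Since robustness is convex and Clifford-invariant, replacing an optimal $\cE$ by its Clifford twirl does not increase $\cR(\Phi^\cE)$ and keeps the map feasible.

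For the upper bound we go further and restrict to fully $U^{\ox 3}$-covariant maps. This yields a feasible candidate, though possibly a suboptimal one. Covariant two-to-one maps have Choi operators in the algebra spanned by permutations of three systems. Combined with the swap symmetry on the inputs, this leaves a small real parameter space. In that space, $P$ and $F$ are linear functionals of the parameters, computable from $\tr[\cD_\delta(\psi)^{\ox2}\ox\psi\,\pi]$ for permutations $\pi$. Fixing $(p,f)$ gives a one- or two-parameter family.

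\emph{Step 2: explicit map and upper bound.} We take the paper's explicit map to be a convex mixture. With weight $a$ it applies symmetric-subspace projection followed by a partial trace, which gives the high-fidelity branch. With weight $b$ it discards one copy, which is free and gives fidelity $\lambda_0$. The parameters are chosen so that $P=p$ and $F=f$. Solving these constraints makes the weight on the non-free part proportional to $(f-\lambda_0)/p$, up to a constant determined by $\delta$ and $d$.

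Robustness is subadditive under convex mixing, and the discard branch has robustness $1$. The upper bound therefore has the form
\[
\cR_{\cD_\delta}\le 1+c_{\rm up}(d,\delta)\,(f-\lambda_0).
\]
Here $c_{\rm up}$ involves an upper bound on $\cR$ of the normalized symmetric-projector Choi state. We obtain that bound by exhibiting an explicit affine decomposition into stabilizer states. For multi-qubit systems, we decompose $\Pi_{\rm sym}$ through a Pauli expansion of the swap, $\mathrm{SWAP}=2^{-m}\sum_P P\ox P$, which leads to an $\ell_1$-type bound.

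\emph{Step 3: lower bound.} We use SDP duality for robustness. A Hermitian witness $W$ with $|\tr[W s]|\le1$ for every stabilizer state $s$ gives $\cR(\rho)\ge\tr[W\rho]$. We choose $W$ as an affine combination of the operators defining $F$ and $P$:
\[
W=\alpha\,\mathbb 1+\beta\,\big(\text{fidelity operator}-\lambda_0\cdot\text{success operator}\big).
\]
On every feasible Choi state, $\tr[W\Phi^\cE]/p$ is then exactly $1+\beta'(f-\lambda_0)$. The constraint $|\tr[Ws]|\le1$ over stabilizer states fixes the largest admissible $\beta$. Evaluating it requires maximizing the fidelity gain over stabilizer (free) Choi operators. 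This is exactly the zero-gain no-go of \cite{he2026no}, quantitatively sharpened, and it yields $c_{\rm low}(d,\delta)$.

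\emph{Step 4: single qubit.} For $d=2$ the stabilizer polytope on three qubits is explicitly enumerable. We check that the witness from Step 3 attains the value of the decomposition from Step 2, so that $c_{\rm up}=c_{\rm low}$ and the linear law is exact.

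The main obstacle is Step 3 for general $m$. We need the maximum of the witness over all $3m$-qubit stabilizer states, which amounts to bounding overlaps of stabilizer states with the symmetric projector combined with the depolarizing structure. The first attempt will use the Clifford-twirled form together with the fact that stabilizer states have Pauli spectra supported on an isotropic subgroup. This reduces the maximization to counting Paulis $P\ox P\ox P$ in the stabilizer group, which is bounded by $2^{m}$. The mismatch between this counting bound and the exact value for $m>1$ is what separates the two linear bounds.
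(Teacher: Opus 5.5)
\textbf{Upper bound: sound, but your decomposition is too loose.} Your upper-bound route is in substance the paper's. The paper's Choi operator $\mu_1(\bP_3((13))^{T_3}+\bP_3((23))^{T_3})+\mu_2(\bP_3((123))^{T_3}+\bP_3((132))^{T_3})$ is exactly $2(\mu_1-\mu_2)$ times the input-symmetrized discard map plus $4\mu_2$ times the symmetric-projection map. Subadditivity then gives a linear bound. One small correction: the non-free weight scales like $p(f-\lambda_0)$, not $(f-\lambda_0)/p$. The real difference is the stabilizer decomposition. Splitting a Pauli expansion of the swap into eigenprojectors costs $\ell_1$-weight that grows with the number of Pauli terms. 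The paper instead decomposes the antisymmetric piece $\Omega_-$ into states $(\ket{x}\pm\ket{y})/\sqrt2$ with total weight $2d-3$. For $m=1$ you need that sharp value: the normalized symmetric-projection Choi state must have robustness at most $5/4$.

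\textbf{The genuine gap is Step 3, and therefore Step 4.} There are two problems.

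\emph{The identity term is not fixed by the constraints.} $\tr[I\,\Phi^\cE]=\tr\cE(I/d^2)$ is the acceptance probability on the maximally mixed input, and the constraints do not determine it. So your $W$ does not evaluate to $p[1+\beta'(f-\lambda_0)]$. You would need the success operator $d^2R^{T_{A_I^2}}$ in place of $I$.

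\emph{Even after that repair, any witness in the span of $Q^{T_{A_I^2}}$ and $R^{T_{A_I^2}}$ is too weak for the exact single-qubit law.} Such a witness never uses positivity of the feasible Choi states. Take $d=2$, $\delta=1/2$, so $\lambda_0=3/4$ and the exact slope is $28/3$. The span witness with that value is $W=4\bigl[\tfrac{28}{3}Q^{T_{A_I^2}}+(1-\tfrac{28}{3}\lambda_0)R^{T_{A_I^2}}\bigr]$. For the product stabilizer state $\ketbra{001}{001}$ one finds $4\tr[R\,\ketbra{001}{001}]=13/12$ and $4\tr[Q\,\ketbra{001}{001}]=3/8$. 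Hence $\tr[W\,\ketbra{001}{001}]=13/12-\tfrac{7}{16}\cdot\tfrac{28}{3}=-3$, which violates $|\tr[W\phi]|\le 1$. This one state caps every span witness with intercept $1$ at slope $100/21<28/3$. The obstruction therefore sits on the side $\tr[W\phi]\ge-1$, from states with strongly negative gain. It does not come from maximizing the fidelity gain of free operations, so enumerating stabilizer states in Step 4 cannot close the gap.

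\textbf{What you are missing} is a witness outside that span whose excess is nonnegative on feasible states because they are positive semidefinite. In dual-SDP terms, you need slack $D\ge\alpha Q^{T}+\beta R^{T}$ rather than equality. The paper builds this as follows.
\begin{enumerate}
\item Clifford-twirl the Choi state. The multi-qubit Clifford group is a $3$-design, so the twirled state lies in the $U\ox U\ox\overline U$ commutant.
\item Symmetrize over the two inputs, which forces $s_2=s_3=0$.
\item Solve the two constraints for $s_0,s_1$ in terms of $s_\pm$.
\item Choose $W=\frac{1}{d-1}W_1+(1-\frac{1}{d-1})W_2$. Then $\tr[W\Psi^\cE]=p[1+\cK_\cR^l(f-\lambda_0)]+c_+s_++c_-s_-$ with $c_\pm>0$, and $s_\pm\ge0$ by the Eggeling--Werner criterion.
\item Bound $|\tr[W_i\phi]|\le1$ on stabilizer states using the GHZ/Bell/local normal form of tripartite stabilizer states together with $|x_{123}|\le\sqrt{x_{13}x_{23}}$, not by counting $P\ox P\ox P$ in the stabilizer group.
\end{enumerate}
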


For odd-dimensional systems, the exact slope is given explicitly in
Appendix~\ref{app:mana-proof}; for multi-qubit systems, the lower and upper
slopes are given in Appendix~\ref{app:robustness-proof}.

The robustness law places our previous single-qubit no-go result~\cite{he2026no}
into a broader quantitative picture. At \(f=\lambda_0\), the lower and upper robustness bounds both reduce to one, so \(\cR_{\cD_\delta}(2,\lambda_0,p,2^m)=1\), where $m\ge 1$ is the number of qubits. For any nontrivial purification target with \(f>\lambda_0\), the lower bound becomes strictly larger than one. Thus completely stabilizer preserving operations are confined to the zero-gain point and cannot realize two-to-one universal probabilistic purification with a fidelity gain.

\begin{corollary}\label{cor:multiqubit-no-go-quantitative}
There is no two-to-one universal probabilistic purification protocol using completely stabilizer preserving operations for multi-qubit states under depolarizing noise.
\end{corollary}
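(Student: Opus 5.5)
The corollary follows from the lower bound in Theorem~\ref{thm:robustness-universal} together with the defining property of robustness on stabilizer objects. We argue by contradiction. Suppose a two-to-one universal probabilistic purification protocol exists for $m$-qubit states under depolarizing noise, with successful branch $\cE_{A^2\to A}$ completely stabilizer preserving. Suppose also that it achieves a target pair $(p,f)$ with $p>0$ and a genuine gain $f>\lambda_0$.

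The first step is to bound the robustness of the normalized Choi state of the conditioned map $\cE/p$. Complete stabilizer preservation means $(\mathrm{id}\otimes\cE)$ maps stabilizer states to (subnormalized) stabilizer states. The input $\Phi_{A_I^2A_I'^2}$ is a maximally entangled state on qubits and is therefore a stabilizer state. Hence the Choi operator $J^\cE$ is a nonnegative multiple of a stabilizer state, and so is $\Phi^\cE$.

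We must track the normalization carefully. The Haar-averaged success probability equals $p$, and for the universal (unitarily covariant after twirling) setting the trace of $\Phi^\cE$ is determined by the trace of $\cE$ applied to the averaged input. I would check in Appendix~\ref{app:sdp} that the SDP constraint ties $\tr\Phi^\cE$ to $p$ exactly, so that $\frac1p\cR(\Phi^\cE)$ is the robustness of a normalized object. If the Choi trace and $p$ differ by a factor, I would first replace $\cE$ by its twirled version. Twirling by local Cliffords and swaps preserves complete stabilizer preservation, the fidelity, and the success probability, because a unitary 2-design average reproduces the Haar average. For the twirled version the identification holds. Since robustness equals $1$ on stabilizer states (the convex hull of stabilizer states, normalized), this yields $\cR_{\cD_\delta}(2,f,p,2^m)\le \frac1p\cR(\Phi^\cE)=1$.

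On the other hand, Theorem~\ref{thm:robustness-universal} provides a linear lower bound of the form $1+c_m(f-\lambda_0)$ with $c_m>0$; the text after the theorem states it reduces to one at $f=\lambda_0$ and is strictly larger than one otherwise. For $f>\lambda_0$ this gives $\cR_{\cD_\delta}(2,f,p,2^m)>1$, contradicting the previous step. Therefore every completely stabilizer preserving successful branch has $f\le\lambda_0$, so no nontrivial purification is possible.

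The main obstacle is the normalization step: confirming that a CSP successful branch yields value exactly $1$ (not merely $\le 1/p$ times something) in the definition of $\cR_{\cD_\delta}$. This is where the twirling argument and the precise trace constraint from the SDP formulation in Appendix~\ref{app:sdp} are needed. The remaining steps are immediate.
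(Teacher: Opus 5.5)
There is a gap in the one step you flag as the main obstacle. Your route is otherwise the paper's own. The corollary is read off from Theorem~\ref{thm:robustness-universal}, whose lower bound $1+\cK_\cR^l(f-\lambda_0)$ exceeds $1$ for $f>\lambda_0$, against the value $1$ of a stabilizer branch.

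\textbf{Why the twirl does not help.} Twirling cannot make $\tr\Phi^\cE$ equal to $p$. The quantity $\tr\Phi^\cE=\tr\cE(I/d^2)$ is the acceptance probability on the maximally mixed input. By contrast, $p=\tr\cE(\bar\rho)$ with $\bar\rho=\int\cD_\delta(\psi)^{\ox 2}\,d\psi\neq I/d^2$. Both numbers are invariant under the Clifford twirl and the input swap, so if they differ before twirling they differ after.

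A concrete example: for one qubit, take $\cE(\sigma)=\bra{s}\sigma\ket{s}\,I/2$, i.e.\ postselect a Bell measurement on the singlet $\ket{s}$ and output $I/2$. This branch is already twirl-invariant, and its Choi operator $\ketbra{s}{s}\ox I/2$ lies in the stabilizer cone. Yet $\tr\Phi^\cE=1/4$ while $p=\delta(2-\delta)/4$, so $\frac1p\cR(\Phi^\cE)=1/(\delta(2-\delta))>1$.

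So the identification $\frac1p\cR(\Phi^\cE)=1$ holds only if you restrict to accepted branches $\cE=p\Lambda$ with $\Lambda$ a CPTP CSPO. In that case $\tr\Phi^\cE=p$ trivially and no twirl is needed; this is the reading under which the paper's one-line argument is complete. For general postselected stabilizer branches, your argument does not establish it.

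\textbf{What closes the gap.} You only need $\tr\Phi^\cE\le p$, and this holds whenever $f\ge\lambda_0$. The example above has $f=1/2<\lambda_0$, so it is not a counterexample to this weaker claim.

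After twirling and swap-symmetrizing, write
$J^\cE=\alpha_0I+\alpha_1\bP_3((12))+\beta_1\bigl(\bP_3((13))^{T_3}+\bP_3((23))^{T_3}\bigr)+\beta_2\bigl(\bP_3((123))^{T_3}+\bP_3((132))^{T_3}\bigr)$.
With $\rho=\cD_\delta(\psi)$, the four basis maps send $\rho^{\ox2}$ to $I$, $\tr[\rho^2]\,I$, $2\rho$ and $2\rho^2$ respectively. From this:
\begin{itemize}
\item The acceptance operator is $\cE^\dagger(I)=aI+b\,\bP_2((12))$ with $b=d\alpha_1+2\beta_2$.
\item Hence $p-\tr\Phi^\cE=b(\kappa-1)/d$, where $\kappa=d\tr[\rho^2]=(2-\delta)\delta+d(1-\delta)^2$ and $\kappa-1=(d-1)(1-\delta)^2\ge0$.
\item Using $\bra{\psi}\rho^2\ket{\psi}=\lambda_0^2$, one gets $p(f-\lambda_0)=\frac{(d-1)(1-\delta)}{d}\bigl(2\lambda_0\delta\beta_2-d\alpha_0-\kappa\alpha_1\bigr)$.
\item $J^\cE$ acts on the range of $S_\pm$ as the scalar $\alpha_0\pm\alpha_1$. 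Positivity therefore gives $\alpha_0+\alpha_1\ge0$, and, for $d\ge4$ (where $S_-\neq0$), $\alpha_0-\alpha_1\ge0$.
\end{itemize}

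Now suppose $b<0$. Since $\kappa+d\lambda_0\delta=d-(d-2)\delta$, one gets
$2\lambda_0\delta\beta_2-d\alpha_0-\kappa\alpha_1<-d(\alpha_0+\alpha_1)+(d-2)\delta\alpha_1\le0$, i.e.\ $f<\lambda_0$.

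Thus $f\ge\lambda_0$ forces $b\ge0$, and so $\tr\Phi^\cE\le p$. This also holds for the untwirled map, since both sides are twirl-invariant. It follows that $\frac1p\cR(\Phi^\cE)=\tr\Phi^\cE/p\le1$ for every branch whose Choi operator lies in the stabilizer cone. With this in place, your contradiction with Theorem~\ref{thm:robustness-universal} goes through.
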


Together, Theorems~\ref{thm:mana-universal} and~\ref{thm:robustness-universal} convert the qualitative no-go statement for universal purification by classically simulable operations~\cite{he2026no} into a quantitative magic--fidelity law. The zero-resource point is fixed at the fidelity
\(\lambda_0\) of the unpurified state. Any universal improvement beyond this point requires positive magic in the operation conditioned on success, and the amount of required magic is controlled by the fidelity gain \(f-\lambda_0\). Thus the theorems give a fundamental resource limit for two-copy universal purification under depolarizing noise. 


We sketch the proofs of Theorems~\ref{thm:mana-universal} and~\ref{thm:robustness-universal} as follows. For the mana law, the lower bound is obtained from the dual SDP after using the symmetry of the universal task. The matching upper bound is attained by the explicit two-copy branch introduced below in Eq.~\eqref{eq:fixed-pf-projection-map}, with coefficients chosen to realize the prescribed success probability and fidelity. Since the two bounds coincide, Theorem~\ref{thm:mana-universal} gives the exact odd-dimensional mana law.

For the robustness law, the upper bound uses the same two-copy branch together with a stabilizer decomposition of its Choi state. The lower bound is obtained by Clifford twirling an arbitrary feasible map without changing the purification constraints, evaluating a robustness witness on the resulting invariant Choi state, and using the fact that any tripartite qubit stabilizer state is locally Clifford equivalent to a tensor product of GHZ states, Bell pairs, and single-qubit stabilizer states~\cite{bravyi2006ghz,looi2011tripartite}. This structure lets us bound the relevant stabilizer contributions explicitly. The complete proofs are given in Appendices~\ref{app:mana-proof} and~\ref{app:robustness-proof}.

Figure~\ref{fig:resource-law} illustrates these resource laws for \(\delta = 0.5\).
The laws quantify the operation conditioned on success. 
The factor \(1/p\) converts the unnormalized successful operation into the resource content of the selected branch. 
Thus \(p\) enters through the feasibility of the target pair \((p,f)\) and determines the rate at which the branch is obtained, while the mana or Choi-state robustness of the successful branch is controlled by the fidelity gain \(f-\lambda_0\).

\begin{figure}[t]
\centering
\includegraphics[width=1.02\columnwidth]{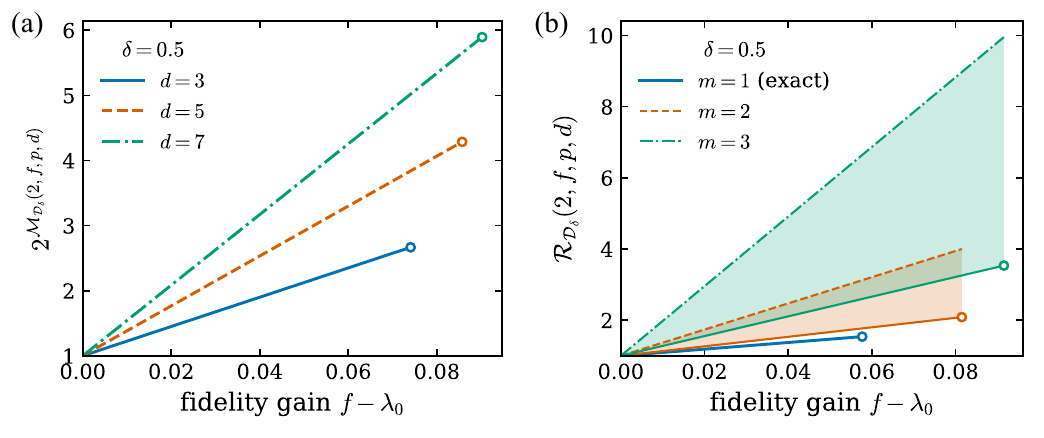}
\caption{Resource laws for two-copy universal purification at \(\delta=0.5\). The exponentiated mana and the robustness bounds grow linearly with the fidelity gain \(f-\lambda_0\). In panel (b), shaded regions indicate the intervals between lower and upper robustness bounds.}
\label{fig:resource-law}
\end{figure}

These proof structures also suggest a broader program. The two-copy
depolarizing setting is the first analytically tractable regime in which the
resource law can be made explicit. The same optimization framework extends to more input copies and more general noise models, but the relevant symmetry and positivity structures quickly become richer, making both direct numerical computation and analytic bounds substantially harder. On the
achievability side, one needs higher-copy purification branches together
with stabilizer-state decompositions of their Choi states, whose complexity
grows super-exponentially with the total number of qubits. Reducing this
complexity would require a better understanding of the relation among
multi-copy stabilizer states, permutation operators, and 
entanglement~\cite{nezami2020multipartite}. On the converse
side, one needs a sharper characterization of higher-copy Clifford
commutant algebras and their positivity. These algebras become
substantially richer at higher tensor powers~\cite{zhu2017multiqubit,
gross2021schur,bittel2025complete}, making their invariant structures and
positivity constraints central to extending the resource law beyond the
two-copy setting.

\textit{Resource tradeoff for the successful purification map}.---
We now make the successful two-copy purification map explicit and use it to
interpret the resource tradeoff. A
natural reference protocol is given by the swap test. Let
\(\mathbf P_2((12))\) be the swap operator between the two input systems. The
swap test measures the two projectors $\Pi_{\mathrm{sym}}
=
\left(I_2+\mathbf P_2((12))\right)/2,
\;
\Pi_{\mathrm{asym}}
=
\left(I_2-\mathbf P_2((12))\right)/2$,
which project onto the symmetric and antisymmetric subspaces. Accepting only
the symmetric outcome gives the usual symmetric projection protocol and reaches
the extremal high fidelity point. For a general feasible target pair \((p,f)\),
the operation conditioned on success need not be purely symmetric. An explicit
map that achieves the target \((p,f)\) is given by a weighted combination of
the two projection branches:
\begin{equation}\label{eq:fixed-pf-projection-map}
\begin{aligned}  
    \cE_{A_I^2\to A_O}(\cdot)
    =
    \tr_{A_{O, 2}}\big[
    2(\mu_1+\mu_2)
    \Pi_{\mathrm{sym}}(\cdot)\Pi_{\mathrm{sym}}^\dagger  \\
    + 
    2(\mu_1-\mu_2)
    \Pi_{\mathrm{asym}}(\cdot)\Pi_{\mathrm{asym}}^\dagger
    \big].
    \end{aligned}
\end{equation}
Here the coefficients \(\mu_1\) and \(\mu_2\) are fixed by the prescribed
success probability \(p\) and conditional fidelity \(f\), and
$\tr_{A_{O, 2}}$ denotes tracing out the second copy after the projection. Their
explicit expressions, together with the verification that
Eq.~\eqref{eq:fixed-pf-projection-map} satisfies the desired constraints, are
given in Appendix~\ref{app:mana-proof}.

This map explains the operational origin of the resource laws. Writing
\(t=\mu_2/\mu_1\) and \(s=2(\mu_1+\mu_2)\), Eq.~\eqref{eq: p f depending on s t}
in Appendix~\ref{app:mana-proof} shows that the conditional fidelity is fixed
by \(t\), whereas the success probability is rescaled by \(s\). Figure~\ref{fig:selected-branch-parameter-space} visualizes this separation in the \((t,s)\) plane: the background color, given by \(2^{\cM_{\cD_\delta}(2,f,p,3)}-1\), depends only on \(t\), whereas the contour lines encode the success probability \(p\), and the inset shows how increasing \(t\) suppresses the antisymmetric branch along \(s=1\). Increasing \(t\) suppresses the antisymmetric branch and raises the universal
fidelity gain, and the resource laws associate this change with a larger magic
cost for the selected operation. By contrast,
changing \(s\) changes only the probability of obtaining that selected
operation. At \(t=1\), the antisymmetric branch is fully suppressed and the branch reduces
to the symmetric projection protocol. This gives the golden point
\((p^\star,f^\star)\) of the two-copy purification tradeoff~\cite{yao2024protocols}. For a target
success probability \(p\le p^\star\), one can keep \(t=1\) and choose
\(s=p/p^\star\), so the conditional fidelity remains \(f^\star\). For
\(p>p^\star\), keeping \(t=1\) would require \(s>1\), which is forbidden by
trace non-increase. The feasible branch must then move away from the symmetric projection by taking
\(s=1\) and \(t<1\). The resulting antisymmetric contribution decreases the
universal fidelity gain, and the resource laws imply a correspondingly smaller
magic cost for the selected operation, exactly for the exponentiated mana in
odd dimensions and within the robustness bounds for multi-qubit systems.

\begin{figure}[t]
    \centering
    \includegraphics[width=0.8\linewidth]{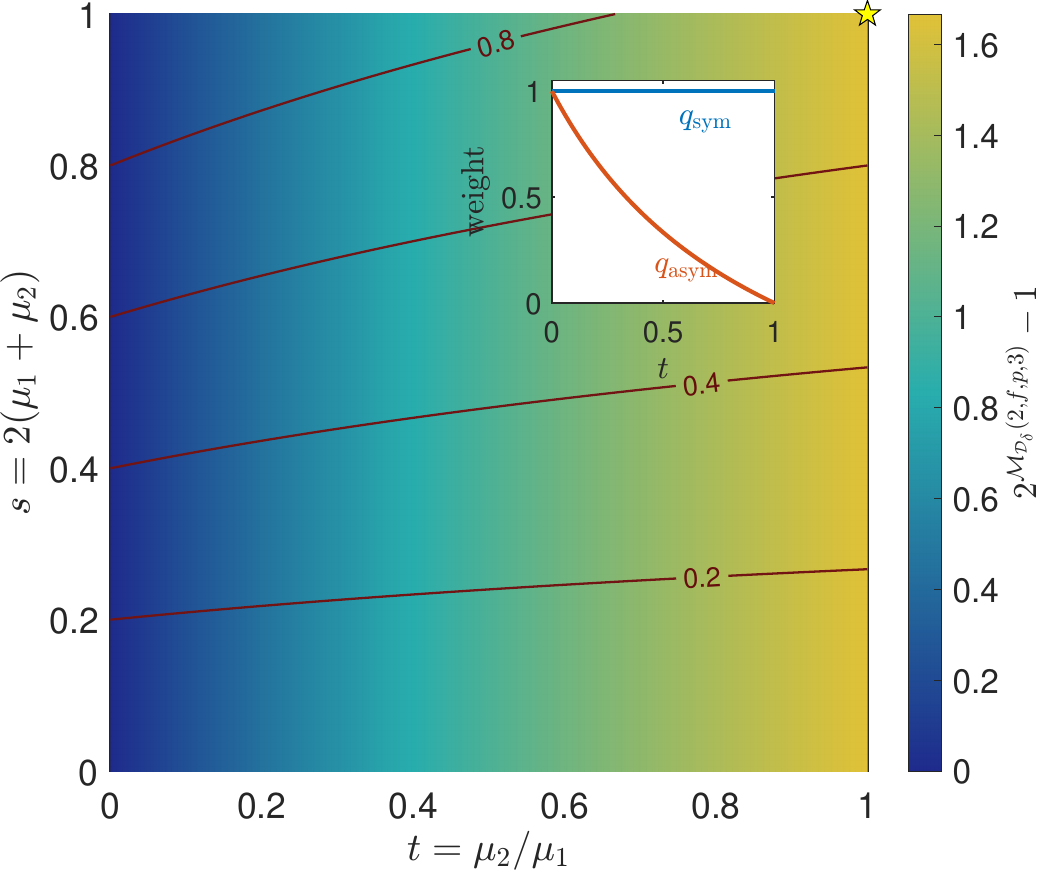}
    \caption{Parameter space structure of the successful purification map for odd-dimensional universal purification at \(\delta=0.5\) and \(d=3\). The background color shows \(2^{\cM_{\cD_\delta}(2,f,p,3)}-1\), and the contour lines show the success probability \(p\). The point \((t,s)=(1,1)\) is the preimage of the golden point. Inset: retained branch weights at \(s=1\), where \(q_{\mathrm{sym}}\coloneqq 2(\mu_1+\mu_2)=s\) and \(q_{\mathrm{asym}}\coloneqq 2(\mu_1-\mu_2)=s(1-t)/(1+t)\). Increasing \(t\) suppresses the antisymmetric branch and increases both the fidelity gain and the magic of the selected operation.}
    \label{fig:selected-branch-parameter-space}
\end{figure}

This map can be realized by an ordinary swap test followed by an additional
acceptance step. The corresponding implementing unitary may contain
non-Clifford ingredients, depending on how the acceptance step is performed.
The resource measures studied here, however, are assigned to the
trace-non-increasing operation obtained after measuring the accept flag,
tracing out the unused systems, and conditioning on success, rather than to a
particular unitary implementation of the whole instrument. Thus lowering the
target fidelity can reduce the mana or Choi-state robustness of the selected
operation, even if a concrete implementation contains additional
non-Clifford structure.

This is the natural level for probabilistic resource theories. A selective
process has several possible outcomes, and an individual accepted outcome may
carry a different amount of resource from the other outcomes or from the full
instrument. Probabilistic resource theories therefore distinguish the resource
of the overall process, the average resource over all branches, and the
resource of a specified postselected branch
~\cite{vidal2000entanglement,regula2022probabilistic,
fang2018probabilistic,takagi2019general}. Related distinctions between
resources invested in a measurement process and resources associated with
measurement outcomes also appear in studies of measurement-induced
magic~\cite{li2025invested}. In the present task, the accepted branch is precisely the purification
operation whose magic enters the resource laws. The resource
laws show that whenever this accepted branch gives a universal fidelity gain,
the accepted operation carries a corresponding amount of magic.

\textit{Concluding remarks}.---
We investigated the magic cost of probabilistic quantum state purification with
prescribed success probability and fidelity. We formulated mana and
robustness measures for the successful trace-non-increasing operation and
derived SDP characterizations for them. For two-copy universal purification,
we proved an exact linear law for the exponentiated mana in odd-dimensional
systems and two-sided linear robustness bounds for multi-qubit systems. A key
technical insight behind the robustness law is the interplay between symmetry
reduction and stabilizer-state structure, which allows both upper and lower
bounds to be controlled analytically and yields an exact law in the
single-qubit case. These results show that any universal fidelity improvement
beyond the unpurified value requires magic in the successful operation, and
that the required amount is controlled directly by the fidelity gain. In this
sense, two-copy universal purification obeys a concrete resource--performance
law for a probabilistic quantum information task.

Our results also support the broader view that magic is relevant to a wider
range of operational tasks beyond universal quantum computation. Recent works have connected nonstabilizerness to quantum capacity~\cite{bu2025magic}, state discrimination under restricted operations~\cite{zhu2024limitations,kwon2025nonstabilizerness}, measurement-conditioned many-body dynamics and monitored circuits~\cite{li2025invested,zhen2026invariant}, quantum-enhanced metrology~\cite{hernandez2026nonstabilizerness}, and algorithmic complexity in Shor's factoring task~\cite{paviglianiti2026true}. In this context, our results identify universal purification as another basic probabilistic task whose achievable performance is quantitatively governed by magic. These developments suggest that understanding how nonstabilizerness governs
operational advantages across quantum information tasks will remain an
important direction for future study in the resource theory of magic.


\textit{Acknowledgements}.---
This work was partially supported by the National Natural Science Foundation of China (Grant Nos.~92576114, 12447107) and the Guangdong Provincial Quantum Science Strategic Initiative (Grant Nos.~GDZX2403008, GDZX2503001, and GDZX2403001). E.X. would like to thank the support from the X Program from HKUST(Guangzhou).

\bibliography{ref}

\onecolumngrid

\appendix
\setcounter{subsection}{0}
\setcounter{table}{0}
\setcounter{figure}{0}

\vspace{3cm}

\begin{center}
\Large{\textbf{Appendix --- A Nonstabilizerness Resource Law for Universal Quantum State Purification} 
}
\end{center}

\renewcommand{\theequation}{S\arabic{equation}}
\renewcommand{\theproposition}{S\arabic{proposition}}
\renewcommand{\thedefinition}{S\arabic{definition}}
\renewcommand{\thefigure}{S\arabic{figure}}
\setcounter{equation}{0}
\setcounter{table}{0}
\setcounter{section}{0}
\setcounter{proposition}{0}
\setcounter{definition}{0}
\setcounter{figure}{0}


In this Supplemental Material, we provide the detailed definitions and proofs supporting the results in the manuscript ``A Nonstabilizerness Resource Law for Universal Quantum State Purification.'' Appendix~\ref{appendix:preliminaries} collects the notation and background on the discrete Wigner representation, robustness of magic, and partially transposed permutation algebras used throughout the proofs. Appendix~\ref{app:sdp} formulates the mana and robustness of purification as semidefinite programs by expressing the fidelity and success constraints in terms of the Choi operator. Appendix~\ref{app:mana-proof} proves the explicit form of the exact linear mana law for two-to-one universal purification in odd-dimensional systems. Appendix~\ref{app:robustness-proof} proves the explicit two-sided linear robustness law for multi-qubit systems, including the exact single-qubit case.

\section{Preliminaries}\label{appendix:preliminaries}

We consider a finite-dimensional Hilbert space $\cH_A$ representing the quantum system $A$ with dimension $d$. Let $\{\ket{j} \}_{j=0,\cdots,d-1}$ be a standard computational basis. We use $\cL(\cH_A)$ to represent the set of linear operators that map from $\cH_A$ to itself. A density operator is a positive semidefinite operator in $\cL(\cH_A)$ with trace one, and $\cD(\cH_A)$ denotes the set of all density operators in $\cH_A$. A quantum channel $\cN_{A\to B}$ is a linear map from $\cL(\cH_A)$ to $\cL(\cH_{B})$ that is completely positive and trace-preserving (CPTP). We also introduce a quantum operation that is completely positive and trace non-increasing (CPTN). Their associated Choi-Jamio\l{}kowski operators are $J^{\cN}_{AB}\coloneqq \sum_{i, j=0}^{d-1}\ketbra{i}{j} \ox \cN_{A \to B}(\ketbra{i}{j})$~\cite{choi1975completely, jamiolkowski1972linear}. We denote $\cI$ as the identity channel, and $\cD_\delta$ as a depolarizing channel with error parameter $\delta$. The symmetric group of degree $n$ is denoted by $\cS_n$  and $\mathbf{P}_n(c)$ represents the permutation operator for $c\in\cS_n$. We denote $\Pi_n\coloneqq\frac{1}{n!}\sum_{c\in\cS_n}\mathbf{P}_n(c)$ as the projector on the symmetric subspace of $\cH_{A}^{\ox n}$.

\subsection{The discrete Wigner function}
We recall the discrete Wigner representation used in this work~\cite{gross2006hudson,veitch2014resource,wang2019quantifying}.
For prime dimension $d$, the unitary boost and shift operators $X,Z\in\cL(\cH)$ are defined by
\begin{equation}
    X\ket{j}=\ket{j\oplus1}, \quad Z\ket{j}=\omega^j\ket{j},
\end{equation}
where $\omega=e^{2\pi i/d}$ and $\oplus$ denotes addition modulo $d$. The Heisenberg--Weyl operators are defined as 
\begin{equation}
    T_{\mathbf{u}} = \tau^{-a_1a_2}Z^{a_1}X^{a_2},
\end{equation}
where $\tau=e^{(d+1)\pi i/d}$, $\mathbf{u}=(a_1, a_2)\in \ZZ_d \times \ZZ_d$. For the composite system $\cH_A \ox \cH_B$, the Heisenberg--Weyl operators are the tensor product of the subsystem Heisenberg--Weyl operators:
\begin{equation}
    T_{\mathbf{u}_A \oplus \mathbf{u}_B} = T_{\mathbf{u}_A} \ox T_{\mathbf{u}_B},
\end{equation}
where $\mathbf{u}_A \oplus \mathbf{u}_B \in \ZZ_d \times \ZZ_d \times \ZZ_d \times \ZZ_d$. For each point $\mathbf{u}$ in the discrete phase space, there is a corresponding phase space point operator $A^{\mathbf{u}}$ defined as 
\begin{equation}
    A^{\mathbf{0}} \coloneqq\frac{1}{d}\sum_\mathbf{u}T_{\mathbf{u}}, \quad A^{\mathbf{u}} \coloneqq T_{\mathbf{u}}A_{\mathbf{0}}T_{\mathbf{u}}^\dagger.
\end{equation}
The discrete Wigner function of a state $\rho$ at the point $\mathbf{u}$ is then defined as
\begin{equation}
    W_{\rho}(\mathbf{u})=\frac{1}{d}\tr [A^{\mathbf{u}}\rho].
\end{equation}
More generally, we can replace $\rho$ with a Hermitian operator $H$ for the discrete Wigner function. Some useful properties are listed:
\begin{enumerate}
    \item $A^{\mathbf{u}}$ is Hermitian;
    \item $\sum_\mathbf{u}A^\mathbf{u}/d=I$;
    \item $\tr[A^\mathbf{u}A^{\mathbf{u'}}]=d\delta(\mathbf{u}, \mathbf{u'})$, where $\delta(a,b)$ is the discrete Dirac delta function;
    \item $\tr[A^\mathbf{u}]=1$;
    \item $H = \sum_{\mathbf{u}}W_H(\mathbf{u})A^\mathbf{u}$;
    \item $\{A^\mathbf{u}\}_\mathbf{u} = \{(A^\mathbf{u})^T\}_\mathbf{u}$;
    \item For the composite system $\cH_A \ox \cH_B$, the phase space point operators are the tensor product of the subsystem phase space point operators $A^{\mathbf{u}_A\oplus \mathbf{u}_B} = A^{\mathbf{u}_A} \ox A^{\mathbf{u}_B}$.
\end{enumerate}

A Hermitian operator $H$ has non-negative discrete Wigner functions if $\forall \mathbf{u}, W_H(\mathbf{u})\geq 0$. For odd dimensions, according to the discrete Hudson's theorem~\cite{gross2006hudson}, a pure state is a stabilizer state if and only if it has non-negative discrete Wigner functions. 

The sum negativity and mana are two kinds of magic witness in terms of the discrete Wigner function~\cite{veitch2014resource}. The sum negativity of a state $\rho$ is defined as 
\begin{equation}
    \text{sn}(\rho) = \sum_{\mathbf{u}: W_\rho(\mathbf{u}) < 0}|W_\rho(\mathbf{u})| = \frac{1}{2} \left(\sum_{\mathbf{u}}|W_\rho(\mathbf{u})| -1  \right).
\end{equation}
Mana is defined as
\begin{equation}
    \cM(\rho) = \log \left( \sum_{\mathbf{u}}|W_\rho(\mathbf{u})| \right) = \log (2\cdot \text{sn}(\rho) + 1).
\end{equation}

The discrete Wigner function of a state can be naturally extended to quantum channels. We recall the definition of the discrete Wigner function of a quantum channel from~\cite{mari2012positive} (see the discussion surrounding~\cite[Eq.~(10)]{mari2012positive}), which is related to the Wigner function of a quantum channel as defined in~\cite[Eq.~(95)]{Bartlett_2012}, and formally defined in~\cite{wang2019quantifying}.

\begin{definition}{\rm (Discrete Wigner function of a quantum channel~\cite{wang2019quantifying})}
Given a quantum channel $\cN_{A\rightarrow B}$, its discrete Wigner function is defined as 
\begin{equation}
\begin{aligned}
    W_{\cN}(\mathbf{v}|\mathbf{u}) &\coloneqq \frac{1}{d_B}\tr \big[(A^\mathbf{u}_A \ox A^\mathbf{v}_B) J^\cN_{AB}\big] \\
    &= \frac{1}{d_B}\tr \big[A^\mathbf{v}_B\cN(A^\mathbf{u}_A)\big].
\end{aligned}
\end{equation}
\end{definition}

\begin{definition}
    [Mana of a quantum channel]\label{def: channel mana} The mana of a quantum channel $\cN$ is defined as
    \begin{equation}
    \cM(\cN) =  \log \max_\mathbf{u} \frac{1}{d_B} \sum_\mathbf{v} \left|\tr\left[ (A^\mathbf{u}_A \otimes A^\mathbf{v}_B) J^{\cN}_{AB} \right] \right| = \log \max_\mathbf{u} \sum_\mathbf{v} \left| W_{\cN}(\mathbf{v}|\mathbf{u}) \right|,
\end{equation}
where $J^{\cN}_{AB}$ is the Choi matrix for the map $\cN$.
\end{definition}
The calculation of mana can be formulated as finding maximal value over the vector $w(\mathbf{u}) = \sum_\mathbf{v} | W_{\cN}(\mathbf{v}|\mathbf{u})|$ in terms of index $\mathbf{u}$, which can be computed by linear programming, 
\begin{equation}
    \min_z\;  z, \quad \text{s.t.}\;  z\geq w(\mathbf{u}), \; \forall \mathbf{u}.  
\end{equation}
Thus the mana can be computed by the following semidefinite program:
\begin{equation}
\begin{aligned}
\min_{ z} &\;\;  z\\
 {\rm s.t.}
&\; \tr_{B}[J^{\cN}_{AB}] = I_{A}, \;J^{\cN}_{AB} \geq 0, \\
&\;  z \geq \frac{1}{d_B} \sum_{\mathbf{v}} \left| \tr\Big[\left(A^{\mathbf{u}}_{A})\ox A^{\mathbf{v}}_{B}\right)J^{\cN}_{AB}\Big] \right|, \; \forall \mathbf{u},
\end{aligned}
\end{equation}
then $\cM(\cN) = \log z$. This definition can be naturally extended to trace-non-increasing maps by
applying the same Choi operator to the successful branch, which is the
setting relevant to probabilistic universal purification.

\subsection{The robustness of magic}

For qubit systems, however, the discrete phase space approach cannot be applied
in the same way as in odd dimensions. In particular, one has to either
exclude some Clifford operations from the set of free operations
~\cite{delfosse2015wigner,raussendorf2017contextuality}, or lose compatibility
with tensor products~\cite{raussendorf2020phase}. Therefore, to keep all
multi-qubit stabilizer operations free, we use a definition of magic based on
stabilizer states. We denote  $\mathrm{STAB}_n$ as the set of all $n$-qubit stabilizer states. 
Refs.~\cite{howard2017application, seddon2019quantifying} introduce a scheme to decompose density matrices as real linear combinations of pure stabilizer state projectors.

The pure states in $\mathrm{STAB}_n$ form an overcomplete basis for the set of $2^n$-dimensional density matrices, where $n$ is the number of qubits. Any density matrix can be decomposed as an affine combination of pure stabilizer state projectors, i.e.,
\begin{equation}
    \rho = \sum_j x_j \ketbra{\phi_j}{\phi_j}, \quad \sum_jx_j=1, \quad \ketbra{\phi_j}{\phi_j}\in \mathrm{STAB}_n.
\end{equation}
Robustness of magic is defined as the minimal $l_1$-norm $\|\mathbf{x}\|_1 = \sum_j |x_j|$ over all possible decompositions~\cite{howard2017application}, i.e.,
\begin{equation}
    \cR(\rho) \coloneqq \min_\mathbf{x}\Big\{ \|\mathbf{x}\|_1: \sum_j x_j\ketbra{\phi_j}{\phi_j}=\rho,~\ketbra{\phi_j}{\phi_j}\in \mathrm{STAB}_n\Big\}.
\end{equation}
Equivalently, it is given by
\begin{equation}
    \cR(\rho) = \min_{\rho_{\pm} \in \mathrm{STAB}_n} \big\{2t + 1: (1+t)\rho_+ - t\rho_-=\rho, ~t \geq 0 \big\}.
\end{equation}
The optimization of the robustness can be written in terms of a linear system as~\cite{howard2017application}
\begin{equation}\label{eq: LP_robustness of magic}
    \cR(\rho) = \min_{\mathbf{x}} \big\{\|\mathbf{x}\|_1:   G\mathbf{x}=b \big\},
\end{equation}
where $G_{ij} = \tr [P_i \ketbra{\phi_j}{\phi_j}]$, $\ketbra{\phi_j}{\phi_j} \in \mathrm{STAB}_n$, $b_i=\tr[P_i \rho]$, and $P_i$ is the $i$-th Pauli operator for the given number of qubits.
The robustness of magic is faithful: if \(\rho\) lies in the stabilizer
polytope, then \(\cR(\rho)=1\), and vice versa. For a positive semidefinite
unnormalized operator \(\rho\) with \(\tr\rho=p\), one has $\cR(\rho)=p$
if and only if \(\rho/p\) lies in the stabilizer polytope. 


For deterministic quantum operations, robustness of magic can also be extended
to a channel robustness, where the free operations are completely
stabilizer-preserving operations~\cite{seddon2019quantifying}. We briefly
recall this notion for comparison.

\begin{definition}{\rm (Completely stabilizer preserving operations~\cite{seddon2019quantifying})}\label{def:cspo operations} 
A CPTP map $\cN_{A\rightarrow B}$ is called completely stabilizer-preserving if for any system $R$, the following holds
\begin{equation}
    ({\rm \cI}_R \ox \cN_{A\to B})(\rho_{RA})\in {\rm STAB}_{m + n}, \quad \forall \rho_{RA} \in {\rm STAB}_{m + n},
\end{equation}
where $m,n$ are the numbers of qubits for systems $R$ and $A$, respectively. 
\end{definition}
We abbreviate completely stabilizer-preserving operations as CSPOs for convenience. It was proved that a quantum channel is a CSPO if and only if its Choi state $\Phi_{AB}^{\cN} = J_{AB}^{\cN}/d_A$ is a stabilizer state~\cite[Theorem~3.1]{seddon2019quantifying}. The channel robustness of magic $\cR_*(\cN)$ is then defined as 
\begin{equation}
    \cR_*(\cN) = \min_{\Lambda_{\pm}\in \text{CSPO}} \big\{2t+1: (1+t)\Lambda_+ - t\Lambda_-=\cN, ~t\geq 0 \big\}.
\end{equation}
An equivalent definition in terms of the Choi state is
\begin{equation}
    \cR_*(\cN) = \min_{\rho_{\pm} \in \mathrm{STAB}_{2n}} \Big\{2t+1: (1+t)\rho_+ - t\rho_-=\Phi_{AB}^{\cN}, ~t \geq 0,
    ~ \tr_B[\rho_{\pm}]=\frac{I_A}{d_A} \Big\}.
\end{equation}
This quantity can also be computed using the linear program in Eq.~\eqref{eq: LP_robustness of magic}. Channel robustness is also faithful: if $\cN$ is a CSPO, then $\cR_*(\cN)=1$; otherwise $\cR_*(\cN)>1$.

In this work, however, the purification protocol is probabilistic and we focus
on the successful trace-non-increasing branch. Therefore, the main quantity used
below is not the CPTP channel robustness \(\cR_*(\cN)\), but the stabilizer
robustness of the Choi state \(\cR(\Phi^\cN_{AB})\).

\subsection{Tripartite states with partially transposed permutation symmetry}

Positivity in permutation-type algebras is useful in symmetry-reduced quantum information problems. Here we use the partially transposed permutation algebra, which gives a finite-dimensional representation of the walled Brauer algebra. In the present tripartite setting, we consider Hermitian operators \(\omega\) whose partial transpose on the third subsystem belongs to the permutation algebra,
\begin{equation}
    \omega^{T_3}\in \operatorname{span}\{\mathbf P_3(\pi):\pi\in \cS_3\}.
\end{equation}
Equivalently, \(\omega\) belongs to the commutant of
\(U\otimes U\otimes \overline U\).

For our purposes, it is convenient to introduce
\begin{equation}
    \begin{aligned}
    &I\coloneqq \mathbf{P}_3((1)),\quad
X\coloneqq \mathbf{P}_3((23))^{T_3},\quad
V\coloneqq \mathbf{P}_3((12)). \\
&\bP_3((13))^{T_3} = VXV, \quad \bP_3((132))^{T_3} = XV, \quad \bP_3((123))^{T_3} = VX.
\end{aligned}
\end{equation}
These operators satisfy $X^2=dX$, $V^2=I$, $XVX=X$, $\tr X=d^2$, and $\tr[XV]=d$.
Following \cite{eggeling2001separability}, define
\begin{equation}
\begin{aligned}
&S_+ \coloneqq \frac{I+V}{2}\left(I-\frac{2X}{d+1}\right)\frac{I+V}{2},\qquad
S_- \coloneqq \frac{I-V}{2}\left(I-\frac{2X}{d-1}\right)\frac{I-V}{2},\\
&S_0 \coloneqq \frac{1}{d^2-1}\big[d(X+VXV)-(XV+VX)\big],\\
&S_1 \coloneqq \frac{1}{d^2-1}\big[d(XV+VX)-(X+VXV)\big],\\
&S_2 \coloneqq \frac{1}{\sqrt{d^2-1}}(X-VXV),\qquad
S_3 \coloneqq \frac{i}{\sqrt{d^2-1}}(XV-VX).
\end{aligned}
\end{equation}
These matrices give a convenient parametrization of the Hermitian part of the
algebra generated by \(X\) and \(V\).
\begin{lemma}[Positivity criterion~\cite{eggeling2001separability}]\label{lem:EW-positivity}
Let $\omega$ be a Hermitian operator with $\tr[\omega]=1$ in the algebra generated by $X$ and $V$. Define
\begin{equation}
    s_+\coloneqq \tr[\omega S_+],\qquad
s_-\coloneqq \tr[\omega S_-],\qquad
s_j\coloneqq \tr[\omega S_j],\quad j\in\{0,1,2,3\}.
\end{equation}
Then $\omega$ is a density matrix if and only if
\begin{equation}\label{eq:EW-positivity-conditions}
s_+ + s_- + s_0 = 1, \quad s_+\ge 0,\quad
s_-\ge 0,\quad
s_0\ge 0,\quad
s_1^2+s_2^2+s_3^2\le s_0^2.
\end{equation}
\end{lemma}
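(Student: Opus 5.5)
The plan is to prove Lemma~\ref{lem:EW-positivity} by decomposing the $*$-algebra $\mathcal A\coloneqq\operatorname{alg}\{X,V\}$ into its simple blocks. The relations $X^2=dX$, $V^2=I$ and $XVX=X$ show that $\mathcal A$ is spanned by the six words $I,V,X,VXV,XV,VX$. I expect $\mathcal A\cong\mathbb C\oplus\mathbb C\oplus M_2(\mathbb C)$ (with the second $\mathbb C$ absent when $d=2$, where $S_-$ will vanish). In this picture, $S_+$ and $S_-$ should be the two one-dimensional central projections, and $S_0,\dots,S_3$ should span the $M_2$ block with $S_0$ its unit.

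The first step is to check these identities directly from the relations. Specifically, I want to show:
\begin{enumerate}
\item $S_+$, $S_-$ and $S_0$ are mutually orthogonal projections with $S_++S_-+S_0=I$.
\item $S_+$ and $S_-$ are central in $\mathcal A$.
\item $S_jS_0=S_0S_j=S_j$ for $j=1,2,3$, and $S_jS_k=\delta_{jk}S_0+i\epsilon_{jkl}S_l$.
\end{enumerate}
Thus $(S_1,S_2,S_3)$ act on $\operatorname{ran}S_0$ as $\mathbb 1\otimes(\sigma_x,\sigma_y,\sigma_z)$, where the $\mathbb 1$ has multiplicity $m$. For the $S_0$ relations, note that $S_0$ is the orthogonal projector onto $\operatorname{ran}X+\operatorname{ran}VXV$. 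The Gram data $\tr X=d^2$ and $\tr[XV]=d$ then give the normalizations of $S_0,\dots,S_3$. Expanding $(I\pm V)/2$ and using $XVX=X$ verifies that $S_\pm$ are projections annihilated by $X$ and $VXV$.

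The second step uses this block structure to write any Hermitian $\omega\in\mathcal A$ uniquely as
\[
\omega=\alpha_+S_++\alpha_-S_-+\textstyle\sum_{j=0}^3\beta_jS_j,
\]
with real coefficients, since the $S$'s form a Hermitian basis of $\mathcal A$. Taking traces against each $S$ and using the multiplication table expresses the expectations through the coefficients:
\[
s_\pm=\alpha_\pm\tr S_\pm,\qquad s_j=\beta_j\tr S_0\quad(j=0,1,2,3).
\]
Here $\tr[S_jS_k]=\delta_{jk}\tr S_0$ because $S_1,S_2,S_3$ are traceless.

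The third step reads off positivity. Since $\omega$ is block diagonal, $\omega\ge0$ if and only if $\alpha_\pm\ge0$ and $\beta_0\mathbb 1+\sum_{j\ge1}\beta_j\sigma_j\ge0$. The latter condition is $\beta_0\ge\sqrt{\beta_1^2+\beta_2^2+\beta_3^2}$, which after rescaling by $\tr S_0>0$ becomes $s_0\ge0$ and $s_1^2+s_2^2+s_3^2\le s_0^2$. Finally, the trace condition is $\tr\omega=\tr[\omega(S_++S_-+S_0)]=s_++s_-+s_0=1$. This gives exactly the conditions in Eq.~\eqref{eq:EW-positivity-conditions}.

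The main obstacle is the first step: verifying the $\mathfrak{su}(2)$ multiplication table, especially $S_2^2=S_3^2=S_0$ and $S_1S_2=iS_3$, with the stated $1/(d^2-1)$ and $1/\sqrt{d^2-1}$ normalizations. I would do this by writing the products of $X,VXV,XV,VX$ as a $2\times2$ matrix-unit system. The candidates are $E_{11}\propto X$, $E_{22}\propto VXV$ and $E_{12}\propto XV$, after orthonormalizing the nonorthogonal pair $\{X/d,\,VXV/d\}$, whose overlap is $\tr[XVXV]/d^2=1/d$. The Pauli combinations then follow by linear algebra.
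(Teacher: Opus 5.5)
Your proof is correct, and it does more than the paper does. The paper gives no argument for this lemma; it imports it directly from Eggeling--Werner. Your block decomposition $\operatorname{alg}\{X,V\}=\mathbb{C}S_+\oplus\mathbb{C}S_-\oplus\bigl(M_2(\mathbb{C})\otimes I_d\bigr)$, with $(S_1,S_2,S_3)$ a Pauli triple on $\operatorname{ran}S_0$, is exactly the structure the citation stands in for, so it makes the lemma self-contained.

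The step you flag as the main obstacle can be done faster than by building matrix units:
\begin{itemize}
\item Put $P_\pm=(I\pm V)/2$ and $E_\pm=\frac{2}{d\pm1}P_\pm XP_\pm$. The identity $XP_\pm X=\frac{d\pm1}{2}X$ shows that $E_\pm\le P_\pm$ are projections.
\item Direct expansion then gives $S_\pm=P_\pm-E_\pm$, $S_0=E_++E_-$, and $S_1=E_+-E_-=VS_0$. The last identity makes $S_1^2=S_0$ immediate.
\item The remaining relations are $S_2^2=S_3^2=S_0$, $S_1S_2=iS_3$, $S_2S_3=iS_1$ and $S_3S_1=iS_2$. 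They follow by direct multiplication using only $X^2=dX$, $V^2=I$ and $XVX=X$, and they hold with exactly your sign convention.
\end{itemize}

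Two minor precisions:
\begin{itemize}
\item If you do use matrix units, the off-diagonal one is $E_{12}=(XV-X/d)/\sqrt{d^2-1}$, not a multiple of $XV$ itself. With $E_{11}=X/d$ one checks $E_{12}E_{12}^\dagger=E_{11}$.
\item For $d=2$ one has $E_-=P_-$, so $S_-=0$ and $\tr S_-=0$. The expansion of $\omega$ is then unique only after dropping $S_-$, and the condition $s_-\ge0$ holds vacuously. You already noted that the second $\mathbb{C}$ disappears, but your rescaling step $\alpha_\pm\ge0\iff s_\pm\ge0$ implicitly uses $\tr S_\pm>0$, which fails for $S_-$ when $d=2$.
\end{itemize}
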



\section{Formal definitions of mana and robustness of purification as SDPs}\label{app:sdp}

In this appendix, we give the SDP formulations for the two resource quantities used in the main text. We first express the fidelity and success constraints as linear constraints on the Choi operator, and then derive the corresponding primal and dual programs for mana and robustness of purification.

Let $\ket{\psi}$ be the ideal pure state and write $\psi=\ketbra{\psi}{\psi}$. We assume that each copy undergoes the depolarizing channel $\cD_\delta(\cdot)=(1-\delta)(\cdot)+\delta I_d/d$. The fidelity of a single noisy copy with the target is $\lambda_0 \coloneqq \tr[\cD_\delta(\psi)\psi] =1-\frac{d-1}{d}\delta$. Given $n$ noisy copies, a probabilistic protocol is described by a completely positive trace-non-increasing map $\cE_{A^n\to A}$, and the unnormalized output conditioned on success is $\sigma_\psi=\cE_{A^n\to A}(\cD_\delta(\psi)^{\ox n})$~\cite{fiuravsek2004optimal}. For a finite test set $\Psi=\{\psi_i\}_{i=1}^{|\Psi|}$, the average purification fidelity of a protocol is
\begin{equation}\label{eq:max_fidelity}
    F(\cE;\Psi)
    =
    \frac{\sum_{\psi_i\in\Psi}\tr[\sigma_{\psi_i}\psi_i]}
    {\sum_{\psi_i\in\Psi}\tr[\sigma_{\psi_i}]},
\end{equation}
where $\sigma_{\psi_i}=\cE_{A^n\to A}(\cD_\delta(\psi_i)^{\ox n})$
is the unnormalized postselected state. The corresponding average success
probability is
\begin{equation}
    P(\cE;\Psi)
    =
    \frac{1}{|\Psi|}
    \sum_{\psi_i\in\Psi}\tr[\sigma_{\psi_i}].
\end{equation}
For a prescribed target pair \((p,f)\), we require $F(\cE;\Psi)=f,\;P(\cE;\Psi)=p.$ We focus on the nontrivial purification regime \(f\ge\lambda_0\), where the
conditional output fidelity improves on the original noisy copy.

For the fixed test set \(\Psi\) and depolarizing channel \(\cD_\delta\), let
\begin{equation}
\begin{aligned}
\mathcal A_\delta(n,f,p;\Psi)
\coloneqq
\Big\{
\cE_{A^n\to A}:\;
\cE \ {\rm \in \ CPTN},
F(\cE;\Psi)=f,\;
P(\cE;\Psi)=p
\Big\}.
\end{aligned}
\end{equation}
This is the set of probabilistic purification operations that attain the
prescribed success probability and fidelity.

For odd-dimensional qudit systems, we quantify the magic of the operation
conditioned on success by its mana
~\cite{gross2006hudson,veitch2014resource,wang2019quantifying}.

\begin{definition}\label{def:purification_mana}
{\rm (Mana of purification for given fidelity and probability)}
For target fidelity \(\lambda_0\le f<1\), success probability
\(p\in(0,1]\), and \(n\) noisy copies, the mana of purification is
\begin{equation}
\begin{aligned}
        \cM_{\cD_\delta}(n,f,p;\Psi)
    \coloneqq \log
    \min_{\cE\in\mathcal A_\delta(n,f,p;\Psi)}
    \max_{\mathbf u}
    \frac{1}{p}
    \sum_{\mathbf v}
    \left|W_{\cE}(\mathbf v|\mathbf u)\right|.
\end{aligned}
\end{equation}
\end{definition}

The fidelity and success probability constraints can be written compactly as linear constraints on the Choi operator. To this end, define
\begin{equation}\label{eq:discrete Q and R}
\begin{aligned}
     Q_{A^n_IA_O} &\coloneqq \frac{1}{|\Psi|}\sum_{\psi_i\in \Psi}\cD_{\delta}(\psi_i)^{\ox n} \ox \psi_i = (\cD_{\delta}^{\ox n} \ox \cI)\Big(\frac{1}{|\Psi|}\sum_{\psi_i\in \Psi}\psi_i^{\ox {n+1}}\Big),\\
     R_{A^n_IA_O} &\coloneqq \cD_{\delta}^{\ox n} \Big(\frac{1}{|\Psi|}\sum_{\psi_i\in \Psi}\psi_i^{\ox n} \Big)\ox I,
\end{aligned}
\end{equation}

\begin{proposition}\label{sdp: mana of purification}
    For a finite set of pure states $\Psi = \{\psi_i\}_{i=1}^{|\Psi|} \subset \cD(\cH_A)$ under a $d$-dimensional depolarizing channel $\cD_\delta$, the mana of purification with feasible target pair \((p,f)\) and \(n\) noisy copies is computed by the following SDPs:
    \begin{equation}
\begin{aligned}
&\underline{\textbf{Primal Program}}\\
\min_{J^{\cE}_{A^n_IA_O}, z} &\;\;  z/p\\
 {\rm s.t.}
&\;  \tr\Big[J^{\cE}_{A^n_IA_O}Q_{A^n_IA_O}^{T_{A^n_I}}\Big] = pf \\
&\; \tr\Big[J^{\cE}_{A^n_IA_O}R_{A^n_IA_O}^{T_{A^n_I}}\Big]=p, \\
&\; \tr_{A_O}[J^{\cE}_{A^n_IA_O}]\leq I_{A^n_I}, \;J^{\cE}_{A^n_IA_O}\geq 0, \\
&\; z \geq
\frac{1}{d}
\sum_{\mathbf v}
\left|
\tr\Big[
\left(A^{\mathbf u}_{A_I^n}\otimes A^{\mathbf v}_{A_O}\right)
J^\cE_{A_I^nA_O}
\Big]
\right|, \\
&\qquad \forall \mathbf u .
\end{aligned}
\begin{aligned}
&\underline{\textbf{Dual Program}}\\
\max_{\alpha,\beta,Y_{A^n_I},S_{\mathbf{u},\mathbf{v}},\gamma_{\mathbf{u}}}
&  pf \alpha + p \beta - \tr[Y_{A^n_I}] \\
{\rm s.t.}\;
& Y_{A^n_I}\otimes I_{A_O}
+ C_{A^n_I A_O}
\ge \alpha Q_{A^n_IA_O}^{T_{A^n_I}}
+ \beta R_{A^n_IA_O}^{T_{A^n_I}}, \\
& C_{A^n_I A_O} = \sum_{\mathbf{u},\mathbf{v}}
\frac{S_{\mathbf{u},\mathbf{v}}}{d}
\left(A^{\mathbf{u}}_{A^n_I} \otimes A^{\mathbf{v}}_{A_O}\right) \\
& |S_{\mathbf{u},\mathbf{v}}| \le \gamma_{\mathbf{u}},
\; \forall \mathbf{u},\mathbf{v}, \\ 
& \sum_{\mathbf{u}} \gamma_{\mathbf{u}} \le \frac{1}{p}, \; Y_{A^n_I}\ge 0,\quad \gamma_{\mathbf{u}}\ge 0.
\end{aligned}
\end{equation}
where $J^{\cE}_{A^n_IA_O}$ denotes the Choi operator of $\cE_{A^n \to A}$, $T_{A^n_I}$ denotes the partial transpose operation on system $A^n_I$, and $Q_{A^n_IA_O}, R_{A^n_IA_O}$ are given in Eq.~\eqref{eq:discrete Q and R}.  $A^{\mathbf{u}}_{A_I^n}$ and 
$A^{\mathbf{v}}_{A_O}$ denote the $n$-copy and 1-copy phase space point operators, respectively. The mana of purification is then the logarithm of the optimal value of these programs.
\end{proposition}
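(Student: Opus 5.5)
This is a standard SDP duality argument. We first show that the primal program computes $\min_{\cE\in\mathcal A_\delta}\max_{\mathbf u}\frac1p\sum_{\mathbf v}|W_\cE(\mathbf v|\mathbf u)|$, and then derive the dual by Lagrangian duality.

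\textbf{Primal.} We translate each defining condition of $\mathcal A_\delta(n,f,p;\Psi)$ into a linear condition on $J^\cE$.

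For any CP map, $\cE(\rho)=\tr_{A_I^n}[J^\cE(\rho^{T}\otimes I)]$. Hence
\begin{equation*}
\tr[\sigma_{\psi_i}\psi_i]=\tr\big[J^\cE\,(\cD_\delta(\psi_i)^{\otimes n})^{T}\otimes\psi_i\big].
\end{equation*}
Averaging over $\Psi$ gives $\frac1{|\Psi|}\sum_i\tr[\sigma_{\psi_i}\psi_i]=\tr[J^\cE Q^{T_{A_I^n}}]$. In the same way, $P(\cE;\Psi)=\tr[J^\cE R^{T_{A_I^n}}]$.

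Since $F=\big(\sum_i\tr[\sigma_{\psi_i}\psi_i]\big)/\big(|\Psi|P\big)$, the pair of conditions $F=f$, $P=p$ is equivalent to the two linear equalities $\tr[J^\cE Q^{T}]=pf$ and $\tr[J^\cE R^{T}]=p$. CPTN is equivalent to $J^\cE\ge0$ together with $\tr_{A_O}J^\cE\le I$.

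For the objective, we introduce an epigraph variable $z$ with
\begin{equation*}
z\ge \frac1d\sum_{\mathbf v}\big|\tr[(A^{\mathbf u}\otimes A^{\mathbf v})J^\cE]\big|\quad\text{for all }\mathbf u.
\end{equation*}
This is the channel Wigner sum from Definition~\ref{def: channel mana} with $d_B=d$. Minimizing $z/p$ then reproduces the definition, and taking the logarithm gives $\cM_{\cD_\delta}$.

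The absolute values become SDP-representable by splitting $\tr[(A^{\mathbf u}\otimes A^{\mathbf v})J]$ into $\pm$ slacks $t_{\mathbf u\mathbf v}$ with $-t_{\mathbf u\mathbf v}\le \tr[\cdots]\le t_{\mathbf u\mathbf v}$ and $\sum_{\mathbf v}t_{\mathbf u\mathbf v}\le dz$. Each Wigner value is real because $A^{\mathbf u}\otimes A^{\mathbf v}$ and $J$ are Hermitian.

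\textbf{Dual.} We write the Lagrangian with the following multipliers:
\begin{itemize}
\item real $\alpha,\beta$ for the two equalities;
\item $Y\ge0$ for $I-\tr_{A_O}J\ge0$;
\item a PSD multiplier for $J\ge0$;
\item nonnegative multipliers for the slack inequalities.
\end{itemize}
These combine into signed coefficients $S_{\mathbf u\mathbf v}$ with $|S_{\mathbf u\mathbf v}|\le\gamma_{\mathbf u}$, where $\gamma_{\mathbf u}\ge0$ is the multiplier of the $\mathbf u$-th $z$-constraint.

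Stationarity in $z$ gives $\sum_{\mathbf u}\gamma_{\mathbf u}\le 1/p$; the inequality appears because the coefficient of $z$ is $1/p-\sum\gamma_{\mathbf u}\ge0$ after absorbing a slack. Stationarity in $J$ gives the operator inequality
\begin{equation*}
Y\otimes I+C\ge\alpha Q^{T}+\beta R^{T},\qquad C=\sum_{\mathbf u,\mathbf v}\frac{S_{\mathbf u\mathbf v}}{d}\,A^{\mathbf u}\otimes A^{\mathbf v}.
\end{equation*}
The remaining terms of the Lagrangian give the dual objective $pf\alpha+p\beta-\tr Y$.

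The sign convention for $C$ must be checked carefully. The bound $\tr[J\,C]\ge -\sum_{\mathbf u}\gamma_{\mathbf u}\sum_{\mathbf v}|W|\ge -z/p$ is what makes weak duality go through. Since the sign of $S_{\mathbf u\mathbf v}$ is free, the version with $+C$ on the left is equivalent to the version with $-C$.

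\textbf{Weak and strong duality.} Weak duality is a one-line computation. Take $\tr[J\cdot(\text{dual inequality})]$ and use $\tr[J(Y\otimes I)]=\tr[Y\tr_{A_O}J]\le\tr Y$ together with the bound on $\tr[JC]$ above. This gives dual value $\le z/p$.

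Strong duality is the main obstacle, since the equality constraints can block strict feasibility in the primal. We would instead verify Slater's condition on the dual side. Choose $\alpha=\beta=0$, $S=0$, $\gamma_{\mathbf u}$ strictly positive with $\sum_{\mathbf u}\gamma_{\mathbf u}<1/p$, and $Y=I$. These make every dual inequality strict, so the dual is strictly feasible. Feasibility of $(p,f)$ makes the primal feasible, and the primal value is bounded below by $0$. Strong duality therefore follows from the conic duality theorem, and the common optimal value, after taking $\log$, equals the mana of purification.
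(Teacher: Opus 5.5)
Your proposal is correct and follows essentially the same route as the paper. Both linearize the absolute values with slacks and form the Lagrangian, merging the multipliers $\lambda^\pm_{\mathbf u,\mathbf v}$ into $S_{\mathbf u,\mathbf v}$ with $|S_{\mathbf u,\mathbf v}|\le\gamma_{\mathbf u}$, obtain $\sum_{\mathbf u}\gamma_{\mathbf u}\le 1/p$ from the implied constraint $z\ge 0$, and get strong duality from Slater's condition on the dual with $\alpha=\beta=0$, $S=0$, $Y=I$ and small positive $\gamma_{\mathbf u}$. The only slip is in your weak-duality remark. With $+C$ on the left of the dual constraint you need the upper bound $\tr[J^\cE C]\le z/p$, not the lower bound you wrote; both follow from $|\tr[J^\cE C]|\le\sum_{\mathbf u}\gamma_{\mathbf u}\sum_{\mathbf v}|W_\cE(\mathbf v|\mathbf u)|\le z/p$, so nothing breaks.
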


\begin{proof}
We first derive the primal SDP from the Definition~\ref{def:purification_mana}, then derive its dual program.
The objective function of the primal SDP is given by the definition of the mana of a quantum channel in Def.~\ref{def: channel mana} after linearizing the absolute values. The first constraint is calculated by
\begin{equation}\label{eq:objective discrete sdp}
\begin{aligned}
    \frac{1}{p|\Psi|}\sum_{\psi_i \in \Psi}\tr[\sigma_{\psi_i} \psi_i] &= \frac{1}{p|\Psi|}\sum_{\psi_i \in \Psi} \tr\big[\cE_{A^n \to A}(\cD_\delta(\psi_i)^{\ox n}) \psi_i \big] \\
    &= \frac{1}{p|\Psi|}\sum_{\psi_i \in \Psi} \tr\Big[(J^{\cE}_{A^n_IA_O})^{T_{A_I^n}} \big(\cD_\delta(\psi_i)^{\ox n} \ox \psi_i\big)\Big] \\
    &= \frac{1}{p}\tr \left[(J^{\cE}_{A^n_IA_O})^{T_{A_I^n}}\frac{1}{|\Psi|}\sum_{\psi_i \in \Psi} \cD_\delta(\psi_i)^{\ox n} \ox \psi_i  \right] \\
    &= \frac{1}{p}\tr\Big[J^{\cE}_{A^n_IA_O}Q_{A^n_IA_O}^{T_{A_I^n}}\Big]\\
    &= f, 
\end{aligned}
\end{equation}
and similarly $\frac{1}{|\Psi|} \sum_{\psi_i \in \Psi}\tr[\sigma_{\psi_i}] = \tr\Big[J^{\cE}_{A^n_IA_O} R_{A^n_IA_O}^{T_{A_I^n}}\Big] = p$. Furthermore, the constraints are derived from the calculation of mana for the map $\cE_{A^n \to A}$ where $\tr_{A_O}[J^{\cE}_{A^n_IA_O}]\leq I_{A^n_I},\;J^{\cE}_{A^n_IA_O}\geq 0$ imply the Choi matrix of a CPTN map. This completes the primal program.

Introducing auxiliary variables $t_{\mathbf{u},\mathbf{v}}$ to linearize the absolute value, the primal becomes
\begin{equation}
 \begin{aligned}
    \min_{J^{\cE}_{A^n_IA_O},  z} &\;\; z/p\\
{\rm s.t.}\quad
&\;  \tr\Big[J^{\cE}_{A^n_IA_O}Q_{A^n_IA_O}^{T_{A^n_I}}\Big] = pf, \quad \tr\Big[J^{\cE}_{A^n_IA_O}R_{A^n_IA_O}^{T_{A^n_I}}\Big]=p, \\
&\; \tr_{A_O}[J^{\cE}_{A^n_IA_O}]\leq I_{A^n_I}, \; J^{\cE}_{A^n_IA_O}\geq 0, \; z \ge 0,\\
&\; \frac{1}{d}\tr\Big[\left(A^{\mathbf{u}}_{A^n_I} \otimes A^{\mathbf{v}}_{A_O}\right)J^{\cE}_{A^n_IA_O}\Big]- t_{\mathbf{u},\mathbf{v}} \le 0,
\quad \forall \mathbf{u},\mathbf{v}, \\
&\; -\frac{1}{d}\tr\Big[\left(A^{\mathbf{u}}_{A^n_I} \otimes A^{\mathbf{v}}_{A_O}\right)J^{\cE}_{A^n_IA_O}\Big]- t_{\mathbf{u},\mathbf{v}} \le 0,
\quad \forall \mathbf{u},\mathbf{v}, \\
&\; \sum_{\mathbf{v}} t_{\mathbf{u},\mathbf{v}} - z \le 0,\quad \forall \mathbf{u}.
\end{aligned}   
\end{equation}
Let $\alpha,\beta\in\mathbb{R}$ be the Lagrange multipliers for the two equality constraints, let
$Y_{A^n_I}\geq 0$ be the multiplier for $\tr_{A_O}[J^{\cE}_{A^n_IA_O}] \leq I_{A^n_I}$, let
$\lambda^{\pm}_{\mathbf{u},\mathbf{v}}\geq 0$ be the multipliers for the two inequalities involving
$t_{\mathbf{u},\mathbf{v}}$, and let $\gamma_{\mathbf{u}}\geq 0$ be the multiplier for
$\sum_{\mathbf{v}} t_{\mathbf{u},\mathbf{v}} - z \le 0$.

The Lagrangian is
\begin{equation}
 \begin{aligned}
L
&= z/p
+ \alpha \left(pf-\tr \Big[J^{\cE}_{A^n_IA_O}Q_{A^n_IA_O}^{T_{A^n_I}}\Big]\right)
+ \beta \left(p-\tr \Big[J^{\cE}_{A^n_IA_O}R_{A^n_IA_O}^{T_{A^n_I}}\Big]\right) \\
& + \tr \Big[Y_{A^n_I}\big(\tr_{A_O}[J^{\cE}_{A^n_IA_O}] - I_{A^n_I}\big)\Big] \\
& + \sum_{\mathbf{u},\mathbf{v}} \lambda^+_{\mathbf{u},\mathbf{v}}
\left(
\frac{1}{d}\tr \Big[\left(A^{\mathbf{u}}_{A^n_I}\otimes A^{\mathbf{v}}_{A_O}\right)J^{\cE}_{A^n_IA_O}\Big]
- t_{\mathbf{u},\mathbf{v}}
\right) \\
& + \sum_{\mathbf{u},\mathbf{v}} \lambda^-_{\mathbf{u},\mathbf{v}}
\left(
-\frac{1}{d}\tr \Big[\left(A^{\mathbf{u}}_{A^n_I}\otimes A^{\mathbf{v}}_{A_O}\right)J^{\cE}_{A^n_IA_O}\Big]
- t_{\mathbf{u},\mathbf{v}}
\right) \\
& + \sum_{\mathbf{u}} \gamma_{\mathbf{u}}
\left(\sum_{\mathbf{v}} t_{\mathbf{u},\mathbf{v}} - z\right) \\
&= pf \alpha + p \beta - \tr[Y_{A^n_I}]  + z\left(\frac{1}{p} - \sum_{\mathbf{u}}\gamma_{\mathbf{u}}\right)  + \sum_{\mathbf{u},\mathbf{v}} t_{\mathbf{u},\mathbf{v}}
\left(-\lambda^+_{\mathbf{u},\mathbf{v}}-\lambda^-_{\mathbf{u},\mathbf{v}}+\gamma_{\mathbf{u}}\right) \\
& + \tr\Bigg[
J^{\cE}_{A^n_IA_O}
\Bigg(
-\alpha Q_{A^n_IA_O}^{T_{A^n_I}}
-\beta R_{A^n_IA_O}^{T_{A^n_I}}
+ Y_{A^n_I}\otimes I_{A_O} 
+ \sum_{\mathbf{u},\mathbf{v}}
\frac{\lambda^+_{\mathbf{u},\mathbf{v}}-\lambda^-_{\mathbf{u},\mathbf{v}}}{d}
\left(A^{\mathbf{u}}_{A^n_I}\otimes A^{\mathbf{v}}_{A_O}\right)
\Bigg)
\Bigg].
\end{aligned}   
\end{equation}
For the dual function to be finite, the coefficients of the free variables $z$ and $t_{\mathbf{u},\mathbf{v}}$ must satisfy
\begin{align}
\sum_{\mathbf{u}} \gamma_{\mathbf{u}} \le \frac{1}{p},
\qquad
\lambda^+_{\mathbf{u},\mathbf{v}}+\lambda^-_{\mathbf{u},\mathbf{v}} = \gamma_{\mathbf{u}},
\quad \forall \mathbf{u},\mathbf{v},
\end{align}
where the inequality $\sum_{\mathbf{u}} \gamma_{\mathbf{u}} \le 1/p$ comes from the explicit constraint
$z\ge 0$ in the primal. The coefficient of $J^{\cE}_{A^n_IA_O}$ must obey
\begin{align}
-\alpha Q_{A^n_IA_O}^{T_{A^n_I}}
-\beta R_{A^n_IA_O}^{T_{A^n_I}}
+ Y_{A^n_I}\otimes I_{A_O}
+ \sum_{\mathbf{u},\mathbf{v}}
\frac{\lambda^+_{\mathbf{u},\mathbf{v}}-\lambda^-_{\mathbf{u},\mathbf{v}}}{d}
\left(A^{\mathbf{u}}_{A^n_I}\otimes A^{\mathbf{v}}_{A_O}\right)
\ge 0.
\end{align}

Therefore the dual SDP is
\begin{equation}
 \begin{aligned}
\max_{\alpha,\beta,Y,\lambda^\pm,\gamma}\quad
& pf \alpha + p \beta - \tr[Y_{A^n_I}] \\
{\rm s.t.}\quad
& -\alpha Q_{A^n_IA_O}^{T_{A^n_I}}
-\beta R_{A^n_IA_O}^{T_{A^n_I}}
+ Y_{A^n_I}\otimes I_{A_O} 
+ \sum_{\mathbf{u},\mathbf{v}}
\frac{\lambda^+_{\mathbf{u},\mathbf{v}}-\lambda^-_{\mathbf{u},\mathbf{v}}}{d}
\left(A^{\mathbf{u}}_{A^n_I}\otimes A^{\mathbf{v}}_{A_O}\right)
\ge 0, \\
& \lambda^+_{\mathbf{u},\mathbf{v}}+\lambda^-_{\mathbf{u},\mathbf{v}}=\gamma_{\mathbf{u}},
\quad \forall \mathbf{u},\mathbf{v}, \\
& \sum_{\mathbf{u}} \gamma_{\mathbf{u}} \le \frac{1}{p}, \\
& Y_{A^n_I}\ge 0,\quad
\lambda^+_{\mathbf{u},\mathbf{v}}\ge 0,\quad
\lambda^-_{\mathbf{u},\mathbf{v}}\ge 0,\quad
\gamma_{\mathbf{u}}\ge 0.
\end{aligned}   
\end{equation}

Defining
\begin{align}
S_{\mathbf{u},\mathbf{v}} := \lambda^+_{\mathbf{u},\mathbf{v}}-\lambda^-_{\mathbf{u},\mathbf{v}},
\end{align}
the relation
\(
\lambda^+_{\mathbf{u},\mathbf{v}}+\lambda^-_{\mathbf{u},\mathbf{v}}=\gamma_{\mathbf{u}}
\)
is equivalent to
\begin{align}
|S_{\mathbf{u},\mathbf{v}}| \le \gamma_{\mathbf{u}},
\quad \forall \mathbf{u},\mathbf{v}.
\end{align}
Hence an equivalent compact form of the dual is
\begin{equation}
 \begin{aligned}
\max_{\alpha,\beta,Y_{A^n_I},S_{\mathbf{u},\mathbf{v}},\gamma_{\mathbf{u}}}\quad
& pf \alpha + p \beta - \tr[Y_{A^n_I}] \\
{\rm s.t.}\quad
& 
Y_{A^n_I}\otimes I_{A_O}
+ C_{A^n_IA_O}
\ge \alpha Q_{A^n_IA_O}^{T_{A^n_I}}
+\beta R_{A^n_IA_O}^{T_{A^n_I}}, \\
& C_{A^n_IA_O} = \sum_{\mathbf{u},\mathbf{v}}
\frac{S_{\mathbf{u},\mathbf{v}}}{d}
\left(A^{\mathbf{u}}_{A^n_I}\otimes A^{\mathbf{v}}_{A_O}\right) \\
& |S_{\mathbf{u},\mathbf{v}}| \le \gamma_{\mathbf{u}},
\quad \forall \mathbf{u},\mathbf{v}, \\
& \sum_{\mathbf{u}} \gamma_{\mathbf{u}} \le \frac{1}{p}, \\
& Y_{A^n_I}\ge 0,\quad \gamma_{\mathbf{u}}\ge 0.
\end{aligned}   
\end{equation}
Taking the logarithm of the optimal objective value gives the mana. Strong duality holds by Slater's condition for the dual SDP; for example, take $\alpha = 0$, $\beta = 0$, $S_{\mathbf{u},\mathbf{v}} = 0$, $Y_{A^n_I} = I_{A^n_I}$, and $\gamma_{\mathbf{u}}= \frac{1}{2d^{2n} p}$ for all $\mathbf{u}$. This completes the proof. 
\end{proof}

The robustness formulation follows the same purification constraints. The only change is that the Wigner-function norm is replaced by a stabilizer decomposition of the Choi state. For multi-qubit systems, we use the stabilizer robustness of the Choi state of
the corresponding trace-non-increasing operation
~\cite{howard2017application,seddon2019quantifying}.

\begin{definition}\label{def:purification_robustness}
{\rm (Robustness of purification for given fidelity and probability)}
For target fidelity \(\lambda_0\le f<1\), success probability
\(p\in(0,1]\), and \(n\) noisy copies, the robustness of purification is
\begin{equation}
    \cR_{\cD_\delta}(n,f,p;\Psi)
    \coloneqq
    \min_{\cE\in\mathcal A_\delta(n,f,p;\Psi)}
    \frac{1}{p}
    \cR(\Phi^\cE_{A_I^nA_O}) ,
\end{equation}
\end{definition}
where $\Phi^\cE_{A_I^nA_O}=J^\cE_{A_I^nA_O}/d_{A_I^n}$ is the Choi state of $\cE_{A^n \to A}$.

\begin{proposition}\label{sdp: robustness of purification}
For a finite set of pure states $\Psi = \{\psi_i\}_{i=1}^{|\Psi|}\subset\cD(\cH_A)$ on an \(m\)-qubit system \(A\) under the depolarizing channel $\cD_\delta$, the robustness of purification with feasible target pair \((p,f)\) and \(n\) noisy copies is computed by the following primal and dual SDPs:
\begin{equation}
\begin{aligned}
&\underline{\textbf{Primal Program}}\\
\min_{J^{\cE}_{A^n_IA_O}, \mathbf{x}} &\;\; \|\mathbf{x}\|_1 /p \\
{\rm s.t.}
&\; \tr\Big[J^{\cE}_{A^n_IA_O}Q_{A^n_IA_O}^{T_{A^n_I}}\Big] = pf \\
&\; \tr\Big[J^{\cE}_{A^n_IA_O}R_{A^n_IA_O}^{T_{A^n_I}}\Big]=p, \\
&\; \tr_{A_O}[J^{\cE}_{A^n_IA_O}]\leq I_{A^n_I},\quad J^{\cE}_{A^n_IA_O}\geq 0, \\
&\; \sum_j G_{ij}x_j=b_i,\qquad \forall i,\\
&\; b_i =  \frac{1}{d_{A^n_I}}\tr\big[P_iJ^{\cE}_{A^n_IA_O}\big],\qquad \forall i.
\end{aligned}
\begin{aligned}
&\underline{\textbf{Dual Program}}\\
\max_{\alpha,\beta,Y_{A^n_I},\mathbf{y}}\quad
& pf\alpha + p \beta - \tr[Y_{A^n_I}] \\
{\rm s.t.}\quad
& Y_{A^n_I}\otimes I_{A_O}
+D_{A^n_IA_O}
\ge \alpha Q_{A^n_IA_O}^{T_{A^n_I}}
+\beta R_{A^n_IA_O}^{T_{A^n_I}}, \\
& D_{A^n_IA_O}=\frac{1}{d_{A^n_I}}\sum_i y_iP_i,\\
& \left|\sum_i y_iG_{ij}\right|\le \frac{1}{p},\qquad \forall j, \\
& Y_{A^n_I}\ge 0.
\end{aligned}
\end{equation}
Here $J^{\cE}_{A^n_IA_O}$ denotes the Choi operator of $\cE_{A^n\to A}$, $T_{A^n_I}$ denotes the partial transpose on $A^n_I$, and $Q_{A^n_IA_O},R_{A^n_IA_O}$ are given in Eq.~\eqref{eq:discrete Q and R}. Moreover, $G_{ij}=\tr[P_i\ketbra{\phi_j}{\phi_j}]$, $\ketbra{\phi_j}{\phi_j}\in\mathrm{STAB}_{(n+1)m}$, and $\{P_i\}$ is a Pauli basis on $A_I^nA_O$.
\end{proposition}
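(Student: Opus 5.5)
The plan is to follow the same two-stage template as the proof of Proposition~\ref{sdp: mana of purification}: first show that Definition~\ref{def:purification_robustness} is exactly the stated primal program, then obtain the dual by Lagrangian duality and argue that there is no duality gap.

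\emph{Primal program.} The two equality constraints are copied verbatim from the computation in Eq.~\eqref{eq:objective discrete sdp}. That computation only used the Choi representation $\tr[\cE(\rho)\psi]=\tr[(J^\cE)^{T_{A_I^n}}(\rho\ox\psi)]$ together with the definitions of $Q$ and $R$ in Eq.~\eqref{eq:discrete Q and R}. It is therefore independent of the dimension and applies to $m$-qubit systems unchanged. The constraints $J^\cE\ge 0$ and $\tr_{A_O}J^\cE\le I$ encode the CPTN condition.

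For the objective, I will invoke the linear-program characterization of robustness in Eq.~\eqref{eq: LP_robustness of magic}, applied to the Choi state $\Phi^\cE=J^\cE/d_{A_I^n}$. Since the Pauli operators form an operator basis, $\sum_j x_j\ketbra{\phi_j}{\phi_j}=\Phi^\cE$ holds if and only if $\sum_j G_{ij}x_j=\tr[P_i\Phi^\cE]=b_i$ for all $i$. Minimizing $\|\mathbf x\|_1/p$ jointly over $(J^\cE,\mathbf x)$ then reproduces $\min_\cE \cR(\Phi^\cE)/p$.

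One point needs care. Here $\Phi^\cE$ is unnormalized, with trace $p$ times a fixed factor, whereas Eq.~\eqref{eq: LP_robustness of magic} is stated for normalized states. The LP itself is homogeneous, so it still defines $\cR$ for positive operators, consistent with the remark after Eq.~\eqref{eq: LP_robustness of magic}. I will state this explicitly.

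\emph{Dual program.} To handle the $\ell_1$ norm, I split $\mathbf x=\mathbf x^+-\mathbf x^-$ with $\mathbf x^\pm\ge0$, or equivalently introduce $t_j\ge|x_j|$. The multipliers are:
\begin{itemize}
\item $\alpha,\beta\in\mathbb R$ for the fidelity and probability equalities;
\item $Y\ge0$ for $\tr_{A_O}J\le I$;
\item $y_i\in\mathbb R$ for the equalities $b_i-\sum_jG_{ij}x_j=0$, after substituting $b_i=\tr[P_iJ]/d_{A_I^n}$.
\end{itemize}

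Collecting terms in the Lagrangian gives the following.
\begin{itemize}
\item The constant part is $pf\alpha+p\beta-\tr Y$.
\item The coefficient of $x_j$ is $\pm 1/p-\sum_i y_iG_{ij}$. Boundedness below for free-sign $x_j$ with an $\ell_1$ penalty forces $\left|\sum_i y_iG_{ij}\right|\le 1/p$.
\item The coefficient of $J$ is $-\alpha Q^{T_{A_I^n}}-\beta R^{T_{A_I^n}}+Y\ox I_{A_O}+D$, with $D=\frac{1}{d_{A_I^n}}\sum_i y_iP_i$. Minimizing over $J\ge0$ forces this operator to be positive semidefinite.
\end{itemize}
This yields exactly the stated dual, and weak duality follows immediately.

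\emph{Strong duality.} I will check Slater's condition on the dual, as in the mana case. Take $\alpha=\beta=0$, $\mathbf y=0$ and $Y=I_{A_I^n}$. Then the operator constraint holds strictly ($I>0$), the constraints $|\sum_iy_iG_{ij}|=0<1/p$ hold strictly, and $Y>0$. Combined with the assumption that $(p,f)$ is feasible, so that the primal value is finite, this gives equal optimal values.

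The main obstacle I expect is bookkeeping rather than conceptual. The normalization factors $d_{A_I^n}$ must be tracked consistently between $b_i$ and $D$. One must also make sure the Pauli basis, and hence $G$, is taken on all $(n+1)m$ qubits of $A_I^nA_O$, so that the equality $\sum_j G_{ij}x_j=b_i$ genuinely identifies the operator $\Phi^\cE$. Finally, the definition of $\cR$ must be justified for unnormalized Choi operators.
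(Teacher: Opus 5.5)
Your proposal is correct and follows essentially the same route as the paper. The primal comes from the fidelity/success computation plus the LP characterization of robustness applied to $\Phi^\cE$. The dual comes from a Lagrangian in which the $\ell_1$ penalty yields $\bigl|\sum_i y_iG_{ij}\bigr|\le 1/p$ and minimization over $J\ge 0$ yields the operator inequality. Strong duality follows from Slater's condition at $\alpha=\beta=0$, $\mathbf y=0$, $Y=I_{A_I^n}$.

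One small correction that does not affect the argument: $\tr\Phi^\cE=\tr[\cE(I_{A_I^n}/d_{A_I^n})]$ is the success probability on the maximally mixed input, which is generally not a fixed multiple of $p$. As you note, the LP is homogeneous and defines the stabilizer $\ell_1$ cost for any Hermitian operator, so nothing depends on this.
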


\begin{proof}
The fidelity and success probability constraints are identical to those in Proposition~\ref{sdp: mana of purification}. The constraints $\tr_{A_O}[J^{\cE}_{A^n_IA_O}]\leq I_{A^n_I}$ and $J^{\cE}_{A^n_IA_O}\geq 0$ impose that $J^{\cE}_{A^n_IA_O}$ is the Choi operator of a CPTN map. The remaining linear system is Eq.~\eqref{eq: LP_robustness of magic} applied to the Choi state $\Phi^\cE_{A_I^nA_O}=J^\cE_{A_I^nA_O}/d_{A_I^n}$, so that $b_i=\tr[P_i\Phi^\cE_{A_I^nA_O}]$. This gives the primal program.

To derive the dual, linearize the $\ell_1$-norm by introducing variables $t_j$ satisfying
\begin{equation}
    x_j-t_j\le 0,\qquad -x_j-t_j\le 0,
    \qquad \forall j.
\end{equation}
The primal is equivalently
\begin{equation}
\begin{aligned}
\min_{J^{\cE}_{A^n_IA_O}, \mathbf{x}, \mathbf{t}} &\;\; \frac{1}{p}\sum_j t_j \\
{\rm s.t.}
&\; \tr\Big[J^{\cE}_{A^n_IA_O}Q_{A^n_IA_O}^{T_{A^n_I}}\Big] = pf,
\quad \tr\Big[J^{\cE}_{A^n_IA_O}R_{A^n_IA_O}^{T_{A^n_I}}\Big]=p, \\
&\; \tr_{A_O}[J^{\cE}_{A^n_IA_O}]\leq I_{A^n_I},\quad J^{\cE}_{A^n_IA_O}\geq 0, \\
&\; \frac{1}{d_{A^n_I}}\tr\big[P_iJ^{\cE}_{A^n_IA_O}\big]-\sum_jG_{ij}x_j=0,
\qquad \forall i,\\
&\; x_j-t_j\le 0,\quad -x_j-t_j\le 0,
\qquad \forall j.
\end{aligned}
\end{equation}
Let $\alpha,\beta\in\mathbb{R}$ be the multipliers for the two equality constraints, $Y_{A^n_I}\ge0$ the multiplier for the trace-non-increasing constraint, $y_i\in\mathbb{R}$ the multipliers for the robustness decomposition constraints, and $\lambda_j^\pm\ge0$ the multipliers for the two inequalities involving $x_j$ and $t_j$. The Lagrangian is
\begin{equation}
    \begin{aligned}
L
=&\; \frac{1}{p}\sum_j t_j
+\alpha\left(pf-\tr\Big[J^{\cE}_{A^n_IA_O}Q_{A^n_IA_O}^{T_{A^n_I}}\Big]\right)
+\beta\left(p-\tr\Big[J^{\cE}_{A^n_IA_O}R_{A^n_IA_O}^{T_{A^n_I}}\Big]\right) \\
&\; +\tr\Big[Y_{A^n_I}\big(\tr_{A_O}[J^{\cE}_{A^n_IA_O}]-I_{A^n_I}\big)\Big] \\
&\; +\sum_i y_i\left(\frac{1}{d_{A^n_I}}\tr\big[P_iJ^{\cE}_{A^n_IA_O}\big]-\sum_jG_{ij}x_j\right) \\
&\; +\sum_j\lambda_j^+(x_j-t_j)+\sum_j\lambda_j^-(-x_j-t_j).
\end{aligned}
\end{equation}
Rearranging gives
\begin{equation}
    \begin{aligned}
L
=&\; pf\alpha+p\beta-\tr[Y_{A^n_I}]
+\sum_jt_j\left(\frac{1}{p}-\lambda_j^+-\lambda_j^-\right) \\
&\; +\sum_jx_j\left(\lambda_j^+-\lambda_j^- -\sum_i y_iG_{ij}\right) \\
&\; +\tr\Bigg[J^{\cE}_{A^n_IA_O}\Bigg(
-\alpha Q_{A^n_IA_O}^{T_{A^n_I}}
-\beta R_{A^n_IA_O}^{T_{A^n_I}}
+Y_{A^n_I}\otimes I_{A_O}
+\frac{1}{d_{A^n_I}}\sum_i y_iP_i
\Bigg)\Bigg].
\end{aligned}
\end{equation}
For the dual function to be finite, the coefficients of the free variables $t_j$ and $x_j$ must satisfy
\begin{equation}
\lambda_j^+ + \lambda_j^- = \frac{1}{p},
\qquad
\lambda_j^+ - \lambda_j^- = \sum_i y_iG_{ij},
\qquad \forall j,
\end{equation}
which is equivalent to $|\sum_i y_iG_{ij}|\le1/p$ for every $j$. Minimization over $J^{\cE}_{A^n_IA_O}\ge0$ gives the semidefinite constraint
\begin{equation}
Y_{A^n_I}\otimes I_{A_O}
+\frac{1}{d_{A^n_I}}\sum_i y_iP_i
\ge
\alpha Q_{A^n_IA_O}^{T_{A^n_I}}
+\beta R_{A^n_IA_O}^{T_{A^n_I}}.
\end{equation}
Hence the dual SDP is given by
\begin{equation}
    \begin{aligned}
\max_{\alpha,\beta,Y_{A^n_I},\mathbf{y}}\quad
& pf\alpha + p \beta - \tr[Y_{A^n_I}] \\
{\rm s.t.}\quad
& Y_{A^n_I}\otimes I_{A_O}
+D_{A^n_IA_O}
\ge \alpha Q_{A^n_IA_O}^{T_{A^n_I}}
+\beta R_{A^n_IA_O}^{T_{A^n_I}}, \\
& D_{A^n_IA_O}=\frac{1}{d_{A^n_I}}\sum_i y_iP_i,\\
& \left|\sum_i y_iG_{ij}\right|\le \frac{1}{p},\qquad \forall j, \\
& Y_{A^n_I}\ge 0.
\end{aligned}
\end{equation}
Strong duality holds by Slater's condition for the dual SDP; for example, take
\(\alpha=\beta=0\), \(\mathbf y=0\), and \(Y=I_{A_I^n}\).
Hence strong duality follows from Slater's condition. This completes the proof.
\end{proof}

The universal case is obtained by replacing the finite averages in
\(Q_{A_I^nA_O}\) and \(R_{A_I^nA_O}\) with the corresponding Haar integrals over pure states.


\section{Proof of Theorem~\ref{thm:mana-universal}}
\label{app:mana-proof}

\noindent\textbf{Theorem 1}{\rm (Mana law for universal purification)}
\textit{For two-to-one universal probabilistic purification of odd-dimensional
qudit states under depolarizing noise, the exponentiated mana of universal
purification is exactly linear in the fidelity gain \(f-\lambda_0\).}

\bigskip
\begin{proof}
More precisely, we prove that
\begin{equation}
    \cM_{\cD_\delta}(2,f,p,d)
    =
    \log\left(1+\cK_\cM(f-\lambda_0)\right),
\end{equation}
where
\begin{equation}
\cK_\cM
=
\frac{d+\delta(2-\delta)}
{\lambda_0(1-\delta)\delta},
\end{equation}
and \(\lambda_0=1-\frac{d-1}{d}\delta\) is the fidelity without purification.

We prove the theorem using the SDP in
Proposition~\ref{sdp: mana of purification}. We first give an explicit
feasible construction for the primal SDP, showing that
\begin{equation}
    \cM_{\cD_\delta}(2,f,p,d)
    \le
    \log\left(1+\cK_\cM(f-\lambda_0)\right).
\end{equation}
We then construct a feasible dual solution attaining the same value, which
gives the matching lower bound.

For two-copy universal purification, the operators \(Q_{A^2_IA_O}\) and \(R_{A^2_IA_O}\) are
\begin{equation}\label{eq: Q and R decomposition}
\begin{aligned}
   Q_{A^2_IA_O} & = \int \cD_{\delta}(\psi)^{\ox 2} \ox \psi\, d\psi = \int\left((1-\delta)\psi + \frac{\delta}{d}I \right)^{\ox 2}\ox\psi\,d\psi \\
   &= \left((1-\delta)^2\frac{\Pi_{123}}{D(3,d)} + \frac{(1-\delta)\delta}{d }\frac{(\Pi_{13} \ox I_2 + \Pi_{23} \ox I_1)}{D(2,d)}  + \frac{\delta^2}{d^3}I_{123} \right),\\
   &= \left( \frac{(1-\delta)^2}{(2+d)(1+d)d} + \frac{2(1-\delta)\delta}{d^2(1+d)} + \frac{\delta^2}{d^3} \right)I_{123} \\
   &+ \left(\frac{(1-\delta)^2}{(2+d)(1+d)d} + \frac{(1-\delta)\delta}{d^2(1+d)} \right) \left(\mathbf{P}_3((13)) + \mathbf{P}_3((23))\right) \\
   & + \frac{(1-\delta)^2}{(2+d)(1+d)d} \left( \mathbf{P}_3((12)) + \mathbf{P}_3((123)) + \mathbf{P}_3((132)) \right) \\
   &= \frac{d^2+4d\delta +(2-d)\delta^2}{(d+2)(d+1)d^3}I_{123} + \frac{(1-\delta)(d+2\delta)}{(d+2)(d+1)d^2}  \left(\mathbf{P}_3((13)) + \mathbf{P}_3((23))\right) \\
   & + \frac{(1-\delta)^2}{(d+2)(d+1)d} \left( \mathbf{P}_3((12)) + \mathbf{P}_3((123)) + \mathbf{P}_3((132)) \right) \\
   R_{A^2_IA_O} & = \int\cD_\delta(\psi)^{\ox 2}\ox I_3\,d\psi = \int\left((1-\delta)\psi + \frac{\delta}{d}I \right)^{\ox 2}\ox I_3\,d\psi \\
   &= \left((1-\delta)^2\frac{\Pi_{12}}{D(2,d)} + \frac{2\delta-\delta^2}{d^2} I_{12} \right)\ox I_3\\
   &= \frac{d + \delta(2-\delta)}{(1+d)d^2} I_{123} + \frac{(1-\delta)^2}{(1+d)d}\mathbf{P}_3((12)). 
\end{aligned}
\end{equation}
where $\Pi_{n} = D(n,d) \int \psi^{\ox n} d\psi$ is the projector onto the symmetric subspace of $\cH^{\ox n}_A$ by Schur's lemma~\cite{harrow2013church, khatri2020principles, mele2024introduction}, $D(n,d)=\tbinom{n+d-1}{n}$ is the dimension of the symmetric subspace, and $\mathbf{P}_3((c))$ are permutation operators in the symmetric group $\cS_3$.

The following covariant map gives the primal upper bound:
\begin{equation}\label{eq: Choi matrix parameterized}
    J^{\cE}_{A^2_IA_O} = \mu_1\left(\mathbf{P}_3((13))^{T_3} + \mathbf{P}_3((23))^{T_3} \right) + \mu_2\left(\mathbf{P}_3((123))^{T_3} + \mathbf{P}_3((132))^{T_3} \right),
\end{equation}
where 
\begin{equation}\label{eq: mu1 mu2}
 \begin{aligned}
    \mu_1 &= \frac{((d + \delta - d \delta)^2 - d^2 (\delta - 1)^2 f + 
   d (\delta - 2) \delta f) p}{2 (d-1) (d (\delta - 1) - 
   \delta) (\delta - 1) \delta}, \\
   \mu_2 &= \frac{d (d (\delta + f - 1) - \delta)  p}{2 (d - 1) (d (\delta - 1) - 
   \delta) (\delta - 1) \delta},
\end{aligned}   
\end{equation}
This Choi matrix corresponds to the map given by a linear combination of projections onto the symmetric and antisymmetric subspaces followed by tracing out one output system,
\begin{equation}
    \cE_{A_I^2\to A_O}(\cdot) = \tr_{A_{O, 2}}\left[2(\mu_1 + \mu_2) \Pi_{\mathrm{sym}}(\cdot)\Pi_{\mathrm{sym}}^\dagger + 2(\mu_1 - \mu_2)\Pi_{\mathrm{asym}}(\cdot)\Pi_{\mathrm{asym}}^\dagger \right], 
\end{equation}
where $\Pi_{\mathrm{sym}} = \left(I_2 + \mathbf{P}_2((12))\right)/2$ and $\Pi_{\mathrm{asym}} = \left(I_2 - \mathbf{P}_2((12))\right)/2$.
We will show that this map with $\mu_1, \mu_2$ in Eq.~\eqref{eq: mu1 mu2} satisfies all the constraints. 
The expressions for $\mu_1$ and $\mu_2$ come from solving the following equations:
\begin{equation}\label{eq: Q and R eq}
\tr\Big[J^{\cE}_{A^2_IA_O}Q_{A^2_IA_O}^{T_{A^2_I}}\Big] = pf, \quad \tr\Big[J^{\cE}_{A^2_IA_O}R_{A^2_IA_O}^{T_{A^2_I}}\Big] = p.
\end{equation}
Using Eq.~\eqref{eq: Q and R decomposition}, we have
\begin{equation}
 \begin{aligned}
    \tr\Big[J^{\cE}_{A^2_IA_O}Q_{A^2_IA_O}^{T_{A^2_I}}\Big] &= \tr\Big[\left(J^{\cE}_{A^2_IA_O}\right)^{T_{A^2_I}} Q_{A^2_IA_O}\Big] \\
    &= \tr\Big[ \left[ \mu_1\left(\mathbf{P}_3((13))^{T} + \mathbf{P}_3((23))^{T} \right) + \mu_2\left(\mathbf{P}_3((123))^{T} + \mathbf{P}_3((132))^{T} \right) \right]  Q_{A^2_IA_O} \Big] \\ 
    &= \tr\Big[ \left[ \mu_1\left(\mathbf{P}_3((13)) + \mathbf{P}_3((23)) \right) + \mu_2\left(\mathbf{P}_3((123)) + \mathbf{P}_3((132)) \right) \right]  Q_{A^2_IA_O} \Big] \\ 
    &= \left( \frac{(1-\delta)^2}{(2+d)(1+d)d} + \frac{2(1-\delta)\delta}{d^2(1+d)} + \frac{\delta^2}{d^3} \right)(2d^2 \mu_1 + 2d\mu_2) \\
    &+  \left(\frac{(1-\delta)^2}{(2+d)(1+d)d} + \frac{(1-\delta)\delta}{d^2(1+d)} \right)(2\mu_1(d^3+d) + 4\mu_2 d^2) \\
    &+ \frac{(1-\delta)^2}{(2+d)(1+d)d}(\mu_1(4d^2 + 2d) + 2\mu_2(d^3+d^2+d)) \\
    &= \frac{2(\delta - d (\delta-1)) (\delta \mu_2 + 
   d (\mu_1 + \mu_2 - \delta \mu_2)))}{d^2}\\
    \tr\Big[J^{\cE}_{A^2_IA_O}R_{A^2_IA_O}^{T_{A^2_I}}\Big] &= \tr\Big[ \left[ \mu_1\left(\mathbf{P}_3((13)) + \mathbf{P}_3((23)) \right) + \mu_2\left(\mathbf{P}_3((123)) + \mathbf{P}_3((132)) \right) \right]  R_{A^2_IA_O} \Big] \\
    &= \left(\frac{(1-\delta)^2}{(1+d)d} + \frac{2\delta - \delta^2}{d^2} \right)(2d^2 \mu_1 + 2d\mu_2) + \frac{(1-\delta)^2}{(1+d)d}(2\mu_1d + 2\mu_2d^2)\\
    &= 2 (\mu_1 + \mu_2 + \frac{(d-1) (\delta - 2) \delta \mu_2}{d}), 
\end{aligned}   
\end{equation}
which we rewrite the equations for clarity, 
\begin{equation}
\begin{aligned}
    p &= 2 (\mu_1 + \mu_2 + \frac{((d - 1) (\delta - 2) \delta \mu_2)}{d}) = 2\mu_1 + 2\left(1 + \frac{(d-1)(\delta-2)\delta}{d}\right)\mu_2 \\
    f&= \frac{(\delta - d (\delta - 1)) (\delta \mu_2 + d (\mu_1 + \mu_2 - \delta \mu_2))}{
 d ((2 - \delta) \delta \mu_2 + d (\mu_1 + (\delta - 1)^2 \mu_2))} = \frac{\lambda_0(d\mu_1 + d\lambda_0\mu_2)}{d\mu_1 + m\mu_2},
\end{aligned}    
\end{equation}
where we denote $m = (2-\delta)\delta + d(\delta-1)^2$. 
Solving these equations gives Eq.~\eqref{eq: mu1 mu2}. 

To make $\cE_{A_I^2\to A_O}$ a completely positive map corresponding $J^{\cE}_{A^2_IA_O}\geq 0$, each coefficient of projection should be positive, i.e.
\[    \mu_1 + \mu_2\geq 0, \quad \mu_1 - \mu_2\geq 0,  
\]
which also means $\mu_1\geq |\mu_2| \geq 0$. The trace non-increasing condition gives
\begin{align}
    \tr_{A_O}[J^{\cE}_{A^2_IA_O}] = 2\mu_1 I_{A_I^2} + 2\mu_2 \mathbf{P}((12))_{A_I^2} = 2(\mu_1 + \mu_2)\Pi_{\mathrm{sym}} + 2(\mu_1 - \mu_2)\Pi_{\mathrm{asym}} \leq I_{A_I^2}, 
\end{align}
which indicates that the eigenvalues on $\Pi_{\mathrm{sym}}$ and $\Pi_{\mathrm{asym}}$ should be at most 1, i.e.,
\[
    2(\mu_1 + \mu_2) \le 1, \quad 2(\mu_1 - \mu_2)\le 1.
\]
In the purification regime \(f\ge \lambda_0\), Eq.~\eqref{eq: mu1 mu2} further gives \(\mu_2\ge0\). Hence, for the feasible purification branch considered here,
\begin{equation}\label{eq: mu1 mu2 range}
    0\le \mu_1 + \mu_2 \le \frac{1}{2}, \quad 0\le \mu_1 - \mu_2\le \frac{1}{2}, \quad \mu_1\geq0, \quad \mu_2\ge 0.
\end{equation}
These inequalities describe the golden point behavior of purification proposed in~\cite{yao2024protocols}. Substituting the range of $\mu_1, \mu_2$ in Eq.~\eqref{eq: mu1 mu2 range}, we can observe the behavior of $p$ and $f$. To make this behavior explicit, let
\begin{equation}
    t = \frac{\mu_2}{\mu_1}, \qquad s = 2(\mu_1+\mu_2),
\end{equation}
By Eq.~\eqref{eq: mu1 mu2 range}, we have
\[
    0\leq t\leq 1, \qquad 0\leq s\leq 1.
\]
Conversely,
\begin{equation}
    \mu_1 = \frac{s}{2(1+t)}, \qquad \mu_2 = \frac{st}{2(1+t)}.
\end{equation}
Substituting these expressions into $p$ and $f$, we obtain
\begin{equation}\label{eq: p f depending on s t}
    p = s\,\frac{1+\frac{m}{d}t}{1+t},
    \qquad
    f =\frac{\lambda_0(d+d\lambda_0t)}{d+mt}.
\end{equation}
Therefore, the fidelity depends only on the ratio $t=\mu_2/\mu_1$, while the success probability is linear in the scale parameter $s=2(\mu_1+\mu_2)$. Taking the derivative $\frac{\partial f}{\partial t}$ gives the maximal fidelity
\begin{equation}
    f^\star = \frac{\lambda_0(1+\lambda_0)}{1+\frac{m}{d}}
\end{equation} 
at $\mu_1=\mu_2=\frac{1}{4}$, which gives $t = 1, s = 1$. The corresponding success probability is given by
\begin{equation}
    p^\star = \frac{1 + \frac{m}{d}}{2}.
\end{equation}
For any smaller success probability $p \leq p^\star$, one can move along the ray $\mu_1=\mu_2, t=1$ by rescaling $(\mu_1,
\mu_2)$, so the fidelity remains equal to $f^\star$. When $p > p^\star$, the ratio $t=1$ is no longer feasible, and one must decrease $t$ along the ray $s = 1$, which strictly decreases the fidelity. This behavior agrees with the result in~\cite{yao2024protocols}, and the point $(p^\star, f^\star)$ is called the golden point for the two-copy input purification task.

The above calculation shows that our map in Eq.~\eqref{eq: Choi matrix parameterized} can implement the desired universal purification task with fixed success probability $p$ and fidelity $f$, and our next step is to calculate its mana. To calculate $z$, 
we notice that phase space point operators of a composite system are tensor products of the phase space point operators of their subsystems, $A^{\mathbf{u}}_{A_I^2} = A^{\mathbf{u}_1}_{A_I} \otimes A^{\mathbf{u}_2}_{A_I}$. Ignoring the
system subscripts, we calculate the discrete Wigner function of $J^{\cE}_{A_I^2A_O}$:
\begin{equation}
 \begin{aligned}
    W_\cE(\mathbf{v}|\mathbf{u}) &=  \frac{1}{d}  \tr\left[ (A^{\mathbf{u}_1} \otimes A^{\mathbf{u}_2} \otimes A^\mathbf{v}) J^{\cE}_{A_I^2A_O} \right]  \\
    &= \frac{1}{d} \tr\left[ (A^{\mathbf{u}_1} \otimes A^{\mathbf{u}_2} \otimes A^\mathbf{v}) \left(\mu_1  \left(\mathbf{P}_3((13))^{T_3} + \mathbf{P}_3((23))^{T_3} \right) + \mu_2\left(\mathbf{P}_3((123))^{T_3} + \mathbf{P}_3((132))^{T_3} \right)\right) \right]  \\
    &= \frac{1}{d} \tr\left[ (A^{\mathbf{u}_1} \otimes A^{\mathbf{u}_2} \otimes A^\mathbf{v}) \left(\mu_1  \left(\mathbf{P}_3((13)) + \mathbf{P}_3((23)) \right) + \mu_2\left(\mathbf{P}_3((123)) + \mathbf{P}_3((132)) \right)\right)  \right] ,
\end{aligned}   
\end{equation}
where we use $\{A^\mathbf{u}\}_\mathbf{u} = \{(A^\mathbf{u})^T\}_\mathbf{u}$, thus the transpose can be ignored. Using trace identities for permutation operators and the properties of phase space point operators, we have
\begin{equation}
 \begin{aligned}
    \tr \left[ (A^{\mathbf{u}_1} \otimes A^{\mathbf{u}_2} \otimes A^\mathbf{v})\mathbf{P}_3((13)) \right] &= d\delta(\mathbf{u}_1, \mathbf{v})\\
    \tr \left[ (A^{\mathbf{u}_1} \otimes A^{\mathbf{u}_2} \otimes A^\mathbf{v})\mathbf{P}_3((23)) \right] &= d\delta(\mathbf{u}_2, \mathbf{v}),\\
    \tr \left[ (A^{\mathbf{u}_1} \otimes A^{\mathbf{u}_2} \otimes A^\mathbf{v})\mathbf{P}_3((123)) \right] &= \tr\left[ A^{\mathbf{u}_1} A^{\mathbf{u}_2} A^\mathbf{v} \right] \\
    \tr \left[(A^{\mathbf{u}_1} \otimes A^{\mathbf{u}_2} \otimes A^\mathbf{v})\mathbf{P}_3((132))\right] &= \tr\left[ A^{\mathbf{u}_2} A^{\mathbf{u}_1} A^\mathbf{v} \right] = (\tr\left[ A^{\mathbf{u}_1} A^{\mathbf{u}_2} A^\mathbf{v} \right])^*. \\
\end{aligned}   
\end{equation}

We also use the following useful property: for any phase space point operators $A^{\mathbf{j}} = A^{(a_1^\mathbf{j}, a_2^\mathbf{j})}$, $A^{\mathbf{k}} = A^{(a_1^\mathbf{k}, a_2^\mathbf{k})}$, and $A^\mathbf{l} = A^{(a_1^\mathbf{l}, a_2^\mathbf{l})}$,
\begin{equation}
 \begin{aligned}
        \tr\left[A^{\mathbf{j}} A^{\mathbf{k}}A^\mathbf{l}\right] &= e^{\frac{4\pi i}{d}\left(a_1^\mathbf{j}(a_2^\mathbf{l} - a_2^\mathbf{k}) + a_1^\mathbf{k}(a_2^\mathbf{j} - a_2^\mathbf{l}) + a_1^\mathbf{l}(a_2^\mathbf{k} - a_2^\mathbf{j})\right)} \\
        &= e^{\frac{4\pi i}{d} \cC(\mathbf{j}, \mathbf{k}, \mathbf{l})},
\end{aligned}   
\end{equation}
where we denote $\cC(\mathbf{j}, \mathbf{k}, \mathbf{l}) = \left(a_1^\mathbf{j}(a_2^\mathbf{l} - a_2^\mathbf{k}) + a_1^\mathbf{k}(a_2^\mathbf{j} - a_2^\mathbf{l}) + a_1^\mathbf{l}(a_2^\mathbf{k} - a_2^\mathbf{j})\right)$. 
Thus 
\begin{equation}
 \begin{aligned}
    W_\cE(\mathbf{v}|\mathbf{u}) &=\mu_1(\delta(\mathbf{u}_1, \mathbf{v}) + \delta(\mathbf{u}_2, \mathbf{v})) + \frac{\mu_2}{d}\left(  \tr\left[ A^{\mathbf{u}_1} A^{\mathbf{u}_2} A^\mathbf{v} \right] + \tr\left[ A^{\mathbf{u}_2} A^{\mathbf{u}_1} A^\mathbf{v} \right]\right)\\
    &= \mu_1(\delta(\mathbf{u}_1, \mathbf{v}) + \delta(\mathbf{u}_2, \mathbf{v})) + 2\frac{\mu_2}{d} \mathrm{Re}(\tr[A^{\mathbf{u}_1} A^{\mathbf{u}_2} A^\mathbf{v}])\\
    &= \mu_1(\delta(\mathbf{u}_1, \mathbf{v}) + \delta(\mathbf{u}_2, \mathbf{v})) + 2\frac{\mu_2}{d} \cos\left( \frac{4\pi}{d}\cC(\mathbf{u}_1, \mathbf{u}_2, \mathbf{v}) \right).
\end{aligned}   
\end{equation}
$z$ is calculated as
\begin{equation}
\begin{aligned}
    z &=  \max_{\mathbf{u}} \sum_{\mathbf{v}} |W_\cE(\mathbf{v}|\mathbf{u})|\\
    &=  \max_{\mathbf{u}_1, \mathbf{u}_2} \sum_{\mathbf{v}} \left|\mu_1(\delta(\mathbf{u}_1, \mathbf{v}) + \delta(\mathbf{u}_2, \mathbf{v})) + 2\frac{\mu_2}{d} \cos\left( \frac{4\pi}{d}\cC(\mathbf{u}_1, \mathbf{u}_2, \mathbf{v}) \right) \right|.  
\end{aligned}    
\end{equation}
Since \(\mu_1,\mu_2\ge0\), the choice \(\mathbf u_1=\mathbf u_2=\mathbf u\) makes all terms in the sum nonnegative and gives \(\cC(\mathbf u,\mathbf u,\mathbf v)=0\) for all \(\mathbf v\). This choice attains the maximum, and hence 
\begin{equation}
 \begin{aligned}
     z &=  \max_{\mathbf{u}} \sum_{\mathbf{v}} \left( 2\mu_1\delta(\mathbf{u}, \mathbf{v}) + 2\frac{\mu_2}{d} \right)  =   2\mu_1 + 2d\mu_2.
\end{aligned}   
\end{equation}
Substitute Eq.~\eqref{eq: mu1 mu2} to give $z$ in terms of $f, p$, 
\begin{equation}
\begin{aligned}
    z &=  p\cdot \frac{-\delta^2 - d ( \delta - 2) \delta (f - 1) + d^2 (\delta + f - 1) }{(d (\delta - 1) - \delta) (\delta - 1) \delta} \\
 &=  p \cdot \left(  \frac{d +\delta(2-\delta)}{\lambda_0(1-\delta)\delta} \left(f - \left(1 - \frac{d-1}{d}\delta \right)\right) + 1\right) \\
 &=  p\left( \cK_\cM(f - \lambda_0) + 1 \right),
\end{aligned}    
\end{equation}
where $\cK_\cM = \frac{d +\delta(2-\delta)}{\lambda_0(1-\delta)\delta}$ and $\lambda_0 = 1 - \frac{d-1}{d}\delta$ is the fidelity without purification. Thus the primal construction gives $\cM_{\cD_\delta}(2,f,p,d)\leq \log (z/p) =\log \left( \cK_\cM(f - \lambda_0) + 1 \right)$. 

To prove $\cM_{\cD_\delta}(2,f,p,d)\geq \log \left( \cK_\cM(f - \lambda_0) + 1 \right)$ using the dual SDP, a feasible solution for the dual SDP is 
\begin{equation}
\begin{aligned}
        &\alpha = \frac{\cK_\cM}{p},\; \beta = \frac{-\cK_\cM\lambda_0 + 1}{p}, \; Y_{A^2_I} = 0, \\
        & \gamma_{\mathbf{u}} = \gamma_{\mathbf{u}_1, \mathbf{u}_2} = \frac{1}{d^2 p}\delta(\mathbf{u}_1, \mathbf{u}_2), \; S_{\mathbf{u},\mathbf{v}} = S_{\mathbf{u}_1, \mathbf{u}_2,\mathbf{v}} = \frac{1}{d^2 p}\delta(\mathbf{u}_1, \mathbf{u}_2), \; \forall \mathbf{v}.
\end{aligned}
\end{equation}
This choice gives the dual objective value \(\cK_\cM(f-\lambda_0)+1\). The normalization and absolute value constraints are immediate from the definitions of \(S_{\mathbf u,\mathbf v}\) and \(\gamma_{\mathbf u}\). It remains to verify the semidefinite constraint
\begin{equation}
    -\alpha Q_{A^2_IA_O}^{T_{12}}
-\beta R_{A^2_IA_O}^{T_{12}}
+ C_{A^2_IA_O}
\ge 0,
\end{equation}
with $Y_{A^2_I} = 0$, where $C_{A^2_IA_O}  = \sum_{\mathbf{u},\mathbf{v}}
\frac{S_{\mathbf{u},\mathbf{v}}}{d}
\left(A^{\mathbf{u}}_{A^2_I}\otimes A^{\mathbf{v}}_{A_O}\right)$ and $C^T = C$, and $T_{12}$ denotes the partial transpose on the first two input systems $A_I^2$. Transposing every operator on the third system turns the partial transpose on the first two systems into a full transpose, which preserves positive semidefiniteness. Therefore, the formula is equivalent to showing
\begin{equation}\label{eq:dual inequality}
    -\alpha Q_{A^2_IA_O}^{T_3}
-\beta R_{A^2_IA_O}^{T_3}
+ C_{A^2_IA_O}
\ge 0, 
\end{equation}
where $T_3$ denotes the partial transpose on the third system $A_O$. 

To verify Eq.~\eqref{eq:dual inequality}, we show the following lemma:
\begin{lemma}
    For phase space point operators $A^{\mathbf{j}}$, $\sum_{\mathbf{j}}
\left(A^{\mathbf{j}}\otimes A^{\mathbf{j}}\right) = d\mathbf{P}_2((12))$.
\end{lemma}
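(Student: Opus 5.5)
The plan is to expand both sides in the Heisenberg--Weyl basis and compare coefficients, using the defining formula $A^{\mathbf 0}=d^{-1}\sum_{\mathbf u}T_{\mathbf u}$ and $A^{\mathbf j}=T_{\mathbf j}A^{\mathbf 0}T_{\mathbf j}^\dagger$. An alternative route is through the operator Schmidt decomposition. Properties 3 and 5 from the preliminaries say that $\{A^{\mathbf j}/\sqrt d\}_{\mathbf j}$ is an orthonormal basis of Hermitian operators in $\cL(\cH)$ for the Hilbert--Schmidt inner product: there are $d^2$ of them, they are orthonormal, and they span. For any orthonormal operator basis $\{E_k\}$ one has the standard identity
\begin{equation*}
\sum_k E_k\otimes E_k^\dagger=\mathbf P_2((12)).
\end{equation*}
This follows because both sides act identically on product operators. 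Indeed, $\tr_1[(M\otimes I)\sum_k E_k\otimes E_k^\dagger]=\sum_k\tr[ME_k]E_k^\dagger$, and this equals $M$ by completeness, since the expansion coefficients of $M$ are $\tr[E_k^\dagger M]$. Matching the Hermitian-conjugate convention needs some care here. The swap satisfies $\tr_1[(M\otimes I)\mathbf P_2((12))]=M$ as well. These partial-trace maps characterize an operator on $\cH\otimes\cH$ uniquely, so the two sides are equal.

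Concretely, I would proceed as follows.
\begin{enumerate}
\item Note that $A^{\mathbf j}$ is Hermitian, so $E_{\mathbf j}=A^{\mathbf j}/\sqrt d$ satisfies $E_{\mathbf j}^\dagger=E_{\mathbf j}$.
\item Verify completeness: for any $M$, $M=\sum_{\mathbf j}\frac{1}{d}\tr[A^{\mathbf j}M]A^{\mathbf j}$, which is Property 5 with $W_M(\mathbf j)=d^{-1}\tr[A^{\mathbf j}M]$. Property 5 extends from Hermitian operators to arbitrary $M$ by linearity.
\item Apply the partial-trace characterization to obtain $\sum_{\mathbf j}A^{\mathbf j}\otimes A^{\mathbf j}=d\,\mathbf P_2((12))$.
\end{enumerate}

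As a sanity check, take the trace of both sides. The left side gives $\sum_{\mathbf j}(\tr A^{\mathbf j})^2=d^2$, and the right side gives $d\cdot\tr\mathbf P_2((12))=d\cdot d=d^2$, so the two agree.

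The only delicate step is the uniqueness argument. To make it rigorous, I would show that $X\mapsto\big(M\mapsto\tr_1[(M\otimes I)X]\big)$ is injective. This holds because it is the linear isomorphism $\cL(\cH\otimes\cH)\cong\cL(\cL(\cH))$. Alternatively, one can check matrix elements directly: $\langle ab|\cdot|cd\rangle$ of the left side equals $\sum_{\mathbf j}A^{\mathbf j}_{ac}A^{\mathbf j}_{bd}$. Using Hermiticity, $A^{\mathbf j}_{bd}=\overline{A^{\mathbf j}_{db}}$, and completeness applied to $M=\ketbra{c}{a}$ and $\ketbra{b}{d}$ gives $d\,\delta_{ad}\delta_{bc}$. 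This is exactly $d\,\langle ab|\mathbf P_2((12))|cd\rangle$.
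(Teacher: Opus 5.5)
Your proof is correct, but it takes a different route from the paper's. The paper argues by symmetry. Clifford conjugation permutes the phase-point operators, so $\sum_{\mathbf j}A^{\mathbf j}\otimes A^{\mathbf j}$ commutes with every $C\otimes C$. Because the Clifford group in odd prime dimension is a unitary 2-design, Schur's lemma places this operator in $\operatorname{span}\{I,\mathbf P_2((12))\}$. The two coefficients are then fixed by $\tr[\sum_{\mathbf j}A^{\mathbf j}\otimes A^{\mathbf j}]=d^2$ and $\tr[\mathbf P_2((12))\sum_{\mathbf j}A^{\mathbf j}\otimes A^{\mathbf j}]=d^3$.

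You instead use only that $\{A^{\mathbf j}/\sqrt d\}$ is an orthonormal operator basis, together with the basis-independent identity $\sum_k E_k\otimes E_k^\dagger=\mathbf P_2((12))$. Your matrix-element version is the cleanest form of this. Completeness applied to $\ketbra{c}{a}$ alone, $\ketbra{c}{a}=\frac1d\sum_{\mathbf j}A^{\mathbf j}_{ac}A^{\mathbf j}$, already gives $\sum_{\mathbf j}A^{\mathbf j}_{ac}A^{\mathbf j}_{bd}=d\,\delta_{ad}\delta_{bc}$. The conjugation-convention worry also disappears because the $A^{\mathbf j}$ are Hermitian.

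Your route buys economy and generality. It needs neither Clifford covariance of the $A^{\mathbf j}$ nor the 2-design property, which the paper invokes only for odd prime $d$. It therefore applies verbatim to the tensor-product phase-point operators in any odd dimension. Even completeness follows from Property 3 alone, since $d^2$ mutually orthogonal nonzero operators must span $\cL(\cH)$.

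The paper's route matches the commutant and twirling style used elsewhere in its proofs. However, its coefficient-fixing step already relies on the same trace and orthogonality facts you use, so the representation-theoretic input is not really needed for this lemma.
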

\begin{proof}
    For any Clifford operator $C\in \mathrm{CL}(d)$, one has the commutant relation $\left[C \otimes C, \sum_{\mathbf{j}}
\left(A^{\mathbf{j}}\otimes A^{\mathbf{j}}\right)\right] = 0$. Since the Clifford group is a unitary 2-design for any odd prime dimension~\cite{zhu2017multiqubit}, Schur's lemma gives $\sum_{\mathbf{j}}
\left(A^{\mathbf{j}}\otimes A^{\mathbf{j}}\right) = c_1I + c_2\mathbf{P}_2((12))$. Taking traces gives
\begin{equation}
        \tr\left[  \sum_{\mathbf{j}}
\left(A^{\mathbf{j}}\otimes A^{\mathbf{j}}\right)\right] = d^2, \quad \tr\left[ \mathbf{P}_2((12)) \sum_{\mathbf{j}}
\left(A^{\mathbf{j}}\otimes A^{\mathbf{j}}\right)\right] = d^3.
\end{equation}
Following~\cite[Corollary 13]{mele2024introduction}, we have $c_1 = 0$ and $c_2 = d$. This completes the proof. 
\end{proof}

Using the lemma, we obtain
\begin{equation}\label{eq:C decomposition}
\begin{aligned}
    C_{A^2_IA_O} = \sum_{\mathbf{u},\mathbf{v}}
\frac{S_{\mathbf{u},\mathbf{v}}}{d}
\left(A^{\mathbf{u}}_{A^2_I}\otimes A^{\mathbf{v}}_{A_O}\right) &= \sum_{\mathbf{u}_1,\mathbf{u}_2, \mathbf{v}} \frac{\delta(\mathbf{u}_1, \mathbf{u}_2)}{d^3p} \left(A^{\mathbf{u}_1}_{A_I}\otimes A^{\mathbf{u}_2}_{A_I}\otimes A^{\mathbf{v}}_{A_O}\right) \\
&= \frac{1}{d^3 p} \sum_{\mathbf{u}_1} A^{\mathbf{u}_1}_{A_I}\otimes A^{\mathbf{u}_1}_{A_I} \otimes \sum_{\mathbf{v}} A^{\mathbf{v}}_{A_O} \\
&= \frac{1}{dp} \mathbf{P}_3((12)).
\end{aligned}
\end{equation}
To prove $-\alpha Q_{A^2_IA_O}^{T_3}
-\beta R_{A^2_IA_O}^{T_3}
+ C_{A^2_IA_O}
\ge 0$, we apply the lemma~\ref{lem:EW-positivity}. Let $\omega = \frac{1}{t}\big(-\alpha Q_{A^2_IA_O}^{T_{3}} - \beta R_{A^2_IA_O}^{T_{3}} + C_{A^2_IA_O}\big)$ where $t = \tr \Big[-\alpha Q_{A^2_IA_O}^{T_{3}} - \beta R_{A^2_IA_O}^{T_{3}} + C_{A^2_IA_O}\Big]$. Notice $\omega$ is a Hermitian operator, and calculate
\begin{equation}
\begin{aligned}
    t &= -\alpha  - \beta d + \frac{d}{p} = \frac{(d - 1) d (d - (\delta - 2) \delta)}{\delta (d + \delta - d \delta) p} \geq 0 \\
\end{aligned}
\end{equation}
Combining Eq.~\eqref{eq: Q and R decomposition} with Eq.~\eqref{eq:C decomposition}, we calculate
\begin{equation}
    \begin{aligned}s_+ = &\tr[\omega S_+] = 1 -\frac{(d - 2) \delta}{2 d}, \quad 
   s_- = \tr[\omega S_-] = \frac{(d - 2) \delta}{2 d} \\
   s_0 = & 0,\qquad s_1=s_2=s_3=0.
    \end{aligned}
\end{equation}
We can check $s_+,s_-,s_0\geq 0$, $s_+ + s_- + s_0=1$ and $s_1^2+s_2^2+s_3^2\leq s_0^2$. Thus we show that $-\alpha Q_{A^2_IA_O}^{T_3}
-\beta R_{A^2_IA_O}^{T_3}
+ C_{A^2_IA_O}
\ge 0$. This completes the verification of dual feasibility. 

Combining the primal upper bound and the dual lower bound, we conclude that $\cM_{\cD_\delta}(2,f,p,d)  =\log \left( \cK_\cM(f - \lambda_0) + 1 \right)$, and the proof is complete.

\end{proof}


\section{Proof of Theorem~\ref{thm:robustness-universal}}
\label{app:robustness-proof}

\noindent\textbf{Theorem 2}{\rm (Robustness law for universal purification)}
\textit{For two-to-one universal probabilistic purification of multi-qubit
states under depolarizing noise, the robustness of universal purification is
bounded above and below by linear functions of the fidelity gain
\(f-\lambda_0\). For a single-qubit system, the two bounds coincide, giving an
exact linear law.}

\bigskip

\begin{proof}
More precisely, we prove that
\begin{equation}
1+
\cK_\cR^l(f-\lambda_0)
\le
\cR_{\cD_\delta}(2,f,p,d)
\le
1+
\cK_\cR^u(f-\lambda_0),
\end{equation}
where \(d=2^m\) is the dimension of an \(m\)-qubit system,
\begin{equation}
\cK_\cR^l
=
\frac{
\frac{d-2}{2}
+
\frac{1}{d-1}
+
2\delta-\delta^2
}
{\lambda_0\delta(1-\delta)},
\qquad
\cK_\cR^u
=
\frac{
2d-3+2\delta-\delta^2
}
{\lambda_0\delta(1-\delta)},
\end{equation}
and \(\lambda_0=1-\frac{d-1}{d}\delta\) is the fidelity without purification.
In particular, for a single-qubit system \(m=1\), the two bounds coincide:
\begin{equation}
\cR_{\cD_\delta}(2,f,p,2)
=
1+
\frac{1+2\delta-\delta^2}
{\lambda_0\delta(1-\delta)}
(f-\lambda_0).
\end{equation}

We first prove the upper bound
\begin{equation}
    \cR_{\cD_\delta}(2,f,p,d)
    \le
    1+\cK_\cR^u(f-\lambda_0),
\end{equation}
using the explicit two-copy purification branch and a stabilizer-state
decomposition of its Choi state.

The Choi matrix used in the proof of the mana bound, Eq.~\eqref{eq: Choi matrix parameterized}, also gives a sufficient construction:
\begin{equation}\label{eq:multi-choi-ansatz}
    J^{\cE}_{A_I^2A_O}
    =
    \mu_1\Big(\mathbf P_3((13))^{T_3}+\mathbf P_3((23))^{T_3}\Big)
    +
    \mu_2\Big(\mathbf P_3((123))^{T_3}+\mathbf P_3((132))^{T_3}\Big),
\end{equation}
where 
\begin{equation}\label{eq:multi-mu1-mu2}
    \begin{aligned}
    \mu_1 &=
    \frac{\big((d+\delta-d\delta)^2-d^2(\delta-1)^2f+d(\delta-2)\delta f\big)p}
    {2(d-1)\big(d(\delta-1)-\delta\big)(\delta-1)\delta},\\
    \mu_2 &=
    \frac{d\big(d(\delta+f-1)-\delta\big)p}
    {2(d-1)\big(d(\delta-1)-\delta\big)(\delta-1)\delta},
\end{aligned}
\end{equation}
as in Eq.~\eqref{eq: mu1 mu2}, with $d=2^m$.
As shown in the proof of Theorem~\ref{thm:mana-universal} in Appendix~\ref{app:mana-proof}, these coefficients realize the purification of prescribed success probability \(p\) and output fidelity \(f\). We now calculate the robustness of this construction.

Let
\begin{equation}
\Psi^{\cE}_{A_I^2A_O}
=
\frac{1}{d^2}J^{\cE}_{A_I^2A_O}
\end{equation}
be the corresponding Choi state. For the upper bound, we estimate \(\cR(\Psi^{\cE}_{A_I^2A_O})\). Note that
\begin{equation}
\frac{1}{d^2}\mathbf P_3((23))^{T_3}
=
\frac{I_{A_{I,1}}}{d}\otimes\Phi^+_{A_{I,2}A_O},
\qquad
\frac{1}{d^2}\mathbf P_3((13))^{T_3}
=
\Phi^+_{A_{I,1}A_O}\otimes\frac{I_{A_{I,2}}}{d},
\end{equation}
where $\Phi^+\coloneqq\ketbra{\Phi_d^+}{\Phi_d^+}$ and $\ket{\Phi_d^+}=d^{-1/2}\sum_{j=0}^{d-1}\ket{j,j}$ is the maximally entangled state, and denote
\begin{equation}
    \Omega_-
:=
\frac{
\mathbf P_3((13))^{T_3}
+
\mathbf P_3((23))^{T_3}
-
\mathbf P_3((123))^{T_3}
-
\mathbf P_3((132))^{T_3}
}{2d(d-1)}.
\end{equation}
We rewrite the Choi state as
\begin{equation}\label{eq:multi-psi-omega-decomp}
\Psi^{\cE}_{A_I^2A_O}
=
(\mu_1+\mu_2)\left(\frac{I_{A_{I,1}}}{d}\otimes\Phi^+_{A_{I,2}A_O} + \Phi^+_{A_{I,1}A_O}\otimes\frac{I_{A_{I,2}}}{d} \right)
-
\frac{2(d-1)}{d}\mu_2\,\Omega_-.
\end{equation}

Notice the first two terms $\frac{I_{A_{I,1}}}{d}\otimes\Phi^+_{A_{I,2}A_O} , \Phi^+_{A_{I,1}A_O}\otimes\frac{I_{A_{I,2}}}{d}$ are convex combinations of stabilizer states. Thus it remains to give a stabilizer decomposition of \(\Omega_-\). We will decompose \(\Omega_-\) using the stabilizer states in Eq.~\eqref{eq: chi state} and count its stabilizer $l_1$ norm.

For each \(i\in\{0,\ldots,d-1\}\), define a state
\begin{equation}
    \ket{\eta_i}
:=
\frac{1}{\sqrt{2(d-1)}}
\sum_{k\ne i}
\left(
\ket{k,i,k}_{A_I^2 A_O}-\ket{i,k,k}_{A_I^2 A_O}
\right),
\end{equation}
The \(2(d-1)\) computational basis states appearing in the above sum are
mutually orthogonal, so \(\ket{\eta_i}\) is normalized. Moreover, for
\(i\ne j\), the supports of \(\ket{\eta_i}\) and \(\ket{\eta_j}\) are
orthogonal, i.e. $\langle \eta_i|\eta_j\rangle=\delta_{ij}$. A direct expansion gives
\[
\begin{aligned}
&\mathbf P_3((13))^{T_3}
+
\mathbf P_3((23))^{T_3}
-
\mathbf P_3((123))^{T_3}
-
\mathbf P_3((132))^{T_3} \\
&\qquad =
\sum_{i=0}^{d-1}
\sum_{\substack{k\ne i\\ \ell\ne i}}
\left(
\ket{k,i,k}-\ket{i,k,k}
\right)
\left(
\bra{\ell,i,\ell}-\bra{i,\ell,\ell}
\right).
\end{aligned}
\]
Therefore, we obtain
\begin{equation}
\Omega_-
=
\frac1d
\sum_{i=0}^{d-1}
\ketbra{\eta_i}{\eta_i}.
\end{equation}

We then give an explicit stabilizer decomposition of each
\(\ketbra{\eta_i}{\eta_i}\). Define the set of indices
\begin{equation}
\mathcal S_i^{+}:=\{(k,i,k):k\ne i\},
\quad
\mathcal S_i^{-}:=\{(i,k,k):k\ne i\},
\quad
\mathcal S_i:=\mathcal S_i^{+}\cup \mathcal S_i^{-}.
\end{equation}
Define the set of unordered pairs in $\mathcal S_i$ as $\mathcal P_i$, 
\begin{equation}
\mathcal P_i
:=
\bigl\{
\{x,y\}: x,y\in\mathcal S_i,\ x\ne y
\bigr\}.
\end{equation}
where each distinct pair of elements in \(\mathcal S_i\) is counted exactly once.

Define a signed function
\begin{equation}
\epsilon_i(x)=
\begin{cases}
+1, & x\in\mathcal S_i^{+},\\
-1, & x\in\mathcal S_i^{-}.
\end{cases}
\end{equation}
With the new notation, we can rewrite $\ket{\eta_i}$ as
\begin{equation}
\ket{\eta_i}
=
\frac1{\sqrt{|\mathcal S_i|}}
\sum_{x\in\mathcal S_i}\epsilon_i(x)\ket{x},
\quad
|\mathcal S_i|=2(d-1),
\end{equation}
and
\begin{equation}\label{eq:eta-expand-full}
\ketbra{\eta_i}{\eta_i}
=
\frac1{|\mathcal S_i|}
\sum_{x\in\mathcal S_i}\ketbra{x}{x}
+
\frac1{|\mathcal S_i|}
\sum_{\{x,y\}\in\mathcal P_i}
\epsilon_i(x)\epsilon_i(y)
\left(
\ketbra{x}{y}+\ketbra{y}{x}
\right).
\end{equation}

For each unordered pair \(\{x,y\}\in\mathcal P_i\), define superposition states
\begin{equation}\label{eq: chi state}
    \ket{\chi_{x,y}^{+}}
=
\frac{\ket{x}+\epsilon_i(x)\epsilon_i(y)\ket{y}}{\sqrt2},
\qquad
\ket{\chi_{x,y}^{-}}
=
\frac{\ket{x}-\epsilon_i(x)\epsilon_i(y)\ket{y}}{\sqrt2}.
\end{equation}

Their projectors are
\begin{equation}
\ketbra{\chi_{x,y}^{+}}{\chi_{x,y}^{+}} = \frac12 \left( \ketbra{x}{x} + \ketbra{y}{y} + \epsilon_i(x)\epsilon_i(y)\ketbra{x}{y} + \epsilon_i(x)\epsilon_i(y)\ketbra{y}{x} \right),
\end{equation}
and
\begin{equation}
\ketbra{\chi_{x,y}^{-}}{\chi_{x,y}^{-}} = \frac12 \left( \ketbra{x}{x} + \ketbra{y}{y} - \epsilon_i(x)\epsilon_i(y)\ketbra{x}{y} - \epsilon_i(x)\epsilon_i(y)\ketbra{y}{x} \right).
\end{equation}
Therefore,
\begin{equation}
\begin{aligned}
&\frac1{|\mathcal S_i| - 1}\ketbra{\chi_{x,y}^{+}}{\chi_{x,y}^{+}} -\frac{|\mathcal S_i| - 2}{|\mathcal S_i| (|\mathcal S_i| - 1)}\ketbra{\chi_{x,y}^{-}}{\chi_{x,y}^{-}}\\
={}& \frac1{|\mathcal S_i| (|\mathcal S_i| - 1)} \left( \ketbra{x}{x}+\ketbra{y}{y} \right) + \frac1{|\mathcal S_i|}\epsilon_i(x)\epsilon_i(y) \left( \ketbra{x}{y}+\ketbra{y}{x} \right).
\end{aligned}
\end{equation}
For a fixed \(x\in\mathcal S_i\), the diagonal term \(\ketbra{x}{x}\) appears
in exactly \(|\mathcal S_i| - 1\) unordered pairs. Hence its total coefficient is $(|\mathcal S_i| - 1)\cdot \frac1{|\mathcal S_i| (|\mathcal S_i| - 1)}=\frac1{|\mathcal S_i|}$.
Thus
\begin{equation}\label{eq:Omega-minus-eta-mixture}
  \Omega_-
=
\frac1d
\sum_{i=0}^{d-1}
\ketbra{\eta_i}{\eta_i} =
\frac1d
\sum_{i=0}^{d-1}\sum_{\{x,y\}\in\mathcal P_i}
\left(
\frac1{|\mathcal S_i| - 1}\ketbra{\chi_{x,y}^{+}}{\chi_{x,y}^{+}}
-
\frac{|\mathcal S_i| - 2}{|\mathcal S_i| (|\mathcal S_i| - 1)}\ketbra{\chi_{x,y}^{-}}{\chi_{x,y}^{-}}
\right).
\end{equation}

We now explain why these are stabilizer states. Every unordered pair \(\{x,y\}\in\mathcal P_i\) is of one of the following three types.
\begin{enumerate}
    \item If $x=(k,i,k)\in\mathcal S_i^+,
\;
y=(\ell,i,\ell)\in\mathcal S_i^+,
\;
k,\ell\ne i,\; k\ne \ell,$
then \(\epsilon_i(x)\epsilon_i(y)=+1\), and the two states are
\begin{equation}
\ket{\chi_{x,y}^{\pm}} = \frac{\ket{k,i,k}\pm\ket{\ell,i,\ell}}{\sqrt2} = \frac{\ket{k,k}_{A_{I,1} A_O}\pm\ket{\ell,\ell}_{A_{I,1} A_O}}{\sqrt2} \otimes \ket{i}_{A_{I,2}}
\end{equation}
The second register \(A_{I,2}\) is fixed to \(\ket{i}\), while the first and
third registers form a two-point Bell-type stabilizer state. Hence these are stabilizer states, Clifford equivalent to a one-qubit Bell state tensored with computational basis states.

\item If $x=(i,k,k)\in\mathcal S_i^-,
\;
y=(i,\ell,\ell)\in\mathcal S_i^-,
\;
k,\ell\ne i,\; k\ne \ell,$
then again \(\epsilon_i(x)\epsilon_i(y)=+1\). The same argument as in the first case shows that these are stabilizer states.

\item If $x=(k,i,k)\in\mathcal S_i^+,
\;
y=(i,\ell,\ell)\in\mathcal S_i^-,$
then \(\epsilon_i(x)\epsilon_i(y)=-1\). Hence
\begin{equation}
\ket{\chi_{x,y}^{\pm}}
=
\frac{\ket{x}\mp\ket{y}}{\sqrt2}.
\end{equation}
It is less clear to observe that they are stabilizer states. 
Regard \(x\) and \(y\) as binary strings in \(\mathbb F_2^{3m}\). Apply the
Pauli \(X\)-operator corresponding to the bit string \(x\), 
$X^x:=\bigotimes_{r=1}^{3m}X_r^{x_r}$. This maps computational basis states as $X^x\ket{z}=\ket{z\oplus x}$.
Therefore
\begin{equation}
    X^x\ket{x}=\ket{0}^{\otimes 3m},
\qquad
X^x\ket{y}=\ket{x\oplus y},
\end{equation}
where $x\oplus y$ denotes addition modulo 2. 
Since \(x\ne y\), we have \(x\oplus y\ne0\). Now choose an invertible linear map $L:\mathbb F_2^{3m}\to\mathbb F_2^{3m}$
such that
\begin{equation}
    L(x\oplus y)=(1,0,\ldots,0).
\end{equation}
Such a map exists because any nonzero vector can be extended to a basis of
\(\mathbb F_2^{3m}\). Moreover, any invertible linear map over \(\mathbb F_2\)
can be implemented by CNOT gates and qubit permutations~\cite{markov2008optimal}, hence by a Clifford
unitary. Let \(U_L\) denote the corresponding Clifford unitary, so that $U_L\ket{z}=\ket{Lz}.$ Since \(L\) is linear, \(L(0)=0\). Hence
\begin{equation}
U_LX^x\ket{x}=\ket{0}^{\otimes 3m},
\qquad
U_LX^x\ket{y}=\ket{1} \otimes \ket{0}^{\otimes (3m-1)}.
\end{equation}
Thus,
\begin{equation}
U_LX^x
\frac{\ket{x}\mp\ket{y}}{\sqrt2}
=
\frac{\ket{0}^{\otimes 3m}\mp\ket{1} \otimes \ket{0}^{\otimes (3m-1)}}{\sqrt2}
=
\ket{\mp}\otimes\ket{0}^{\otimes(3m-1)},
\end{equation}
Since Clifford unitaries preserve stabilizer states, we conclude that both
\(\ket{\chi_{x,y}^{+}}\) and \(\ket{\chi_{x,y}^{-}}\) are stabilizer states.
\end{enumerate}

The discussion above shows that all \(\ket{\chi_{x,y}^{+}}\) and \(\ket{\chi_{x,y}^{-}}\) are stabilizer states. Thus Eq.~\eqref{eq:Omega-minus-eta-mixture} is a stabilizer decomposition. We next count the corresponding \(l_1\)-norm and then bound the robustness of the Choi state. For each fixed \(\ketbra{\eta_i}{\eta_i}\) in Eq.~\eqref{eq:eta-expand-full}, there are $\binom{|\mathcal S_i|} {2}=\frac{|\mathcal S_i| (|\mathcal S_i| - 1)}2$ unordered pairs, whose stabilizer decomposition has total \(l_1\)-norm
\[
\binom{|\mathcal S_i|} {2} \cdot
\left(
\left|\frac1{|\mathcal S_i| - 1}\right|
+
\left|-\frac{|\mathcal S_i| - 2}{|\mathcal S_i| (|\mathcal S_i| - 1)}\right|
\right)
=
\frac{|\mathcal S_i| (|\mathcal S_i| - 1)}2\cdot\frac{2}{|\mathcal S_i|}
=
|\mathcal S_i| - 1=2d-3.
\]

Overall, we write $\Psi^{\cE}_{A_I^2A_O}$ in terms of stabilizer decomposition,
\begin{equation}
\begin{aligned}
        \Psi^{\cE}_{A_I^2A_O} &= (\mu_1+\mu_2)\left(\frac{I_{A_{I,1}}}{d}\otimes\Phi^+_{A_{I,2}A_O} + \Phi^+_{A_{I,1}A_O}\otimes\frac{I_{A_{I,2}}}{d} \right) \\
&-
\frac{2(d-1)}{d}\mu_2 \cdot \frac1d
\sum_{i=0}^{d-1}\sum_{\{x,y\}\in\mathcal P_i}
\left(
\frac1{|\mathcal S_i| - 1}\ketbra{\chi_{x,y}^{+}}{\chi_{x,y}^{+}}
-
\frac{|\mathcal S_i| - 2}{|\mathcal S_i| (|\mathcal S_i| - 1)}\ketbra{\chi_{x,y}^{-}}{\chi_{x,y}^{-}}
\right).
\end{aligned}
\end{equation}

Since $\mu_1 \geq 0$ and $\mu_2 \geq 0$, this decomposition gives
\begin{equation}
\begin{aligned}
\cR(\Psi^{\cE}_{A_I^2A_O})
&\le
2(\mu_1+\mu_2)
+
\frac{2(d-1)}{d}\mu_2\cdot \frac1d \cdot d \cdot (2d-3) \\
& =
2(\mu_1+\mu_2)
+
\frac{2(d-1)(2d-3)}{d}\mu_2  \\
&\le p\left[ 1+
\frac{2d-3+2\delta-\delta^2}
{\lambda_0\delta(1-\delta)}
(f-\lambda_0) \right] \\
&= p\left[ 1+
\cK_\cR^u
(f-\lambda_0) \right].
\end{aligned}
\end{equation}
Here $\cK_\cR^u = \frac{2d-3+2\delta-\delta^2}
{\lambda_0\delta(1-\delta)}$.
Thus
\begin{equation}
    \cR_{\cD_\delta}(2,f,p,d)
    \le \frac{1}{p}\cR(\Psi^{\cE}_{A_I^2A_O})
    \le 1+
\cK_\cR^u
(f-\lambda_0).
\end{equation}
This completes the upper-bound calculation for the robustness of purification.

Next, we prove the lower bound
\begin{equation}
\cR_{\cD_\delta}(2,f,p,d)
\ge
1+
\cK_\cR^l
(f-\lambda_0),
\end{equation}
where $\cK_\cR^l = \frac{
\frac{d-2}{2}
+
\frac{1}{d-1} + 2\delta-\delta^2
} {\lambda_0\delta(1-\delta)}$. 

The lower bound is proved by reducing an arbitrary feasible protocol to a symmetric Choi state and then evaluating a robustness witness. The first step is justified by Clifford twirling.
\begin{lemma}
For any CPTN map \(\mathcal{E}_{A_I^n\to A_O}\) satisfying
\begin{equation}
\tr\left[
J^{\mathcal{E}}_{A_I^nA_O}
Q^{T_{A_I^n}}_{A_I^nA_O}
\right]
= pf,
\qquad
\tr\left[
J^{\mathcal{E}}_{A_I^nA_O}
R^{T_{A_I^n}}_{A_I^nA_O}
\right]
= p,
\end{equation}
let \(\Psi^\mathcal{E}_{A_I^nA_O}=J^\mathcal{E}_{A_I^nA_O}/d_{A_I^n}\) denote its Choi state. Then the stabilizer robustness of the Choi state of its Clifford twirl is no larger than that of the original map:
\begin{equation}
        \cR(\Psi_{A^n_IA_O}^{\mathcal{T}_{\mathrm{Cl}}(\mathcal{E})})
\le
\cR(\Psi^\mathcal{E}_{A_I^nA_O}),
\end{equation}  
where
\begin{equation}
\mathcal{T}_{\mathrm{Cl}}(\mathcal{E})
=
\frac{1}{|\mathrm{Cl}(d)|}
\sum_{C\in \mathrm{Cl}(d)}
\cC(\cE),
\qquad
\cC(\cE)
=
\operatorname{Ad}_{C}\circ \mathcal{E}\circ \operatorname{Ad}_{C^\dagger}^{\otimes n}.
\end{equation}
\end{lemma}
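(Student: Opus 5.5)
The plan is to exploit two facts: robustness of magic is convex, and it is invariant under Clifford unitaries. The twirled Choi state is then a convex mixture of Clifford-conjugated copies of the original Choi state.

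\emph{Step 1: identify the Choi operator of each term.} For a single term $\cC(\cE)=\operatorname{Ad}_{C}\circ\cE\circ\operatorname{Ad}_{C^\dagger}^{\otimes n}$, we compute $J^{\cC(\cE)}_{A_I^nA_O}$. Using the standard identity $(X\otimes I)\ket{\Gamma}=(I\otimes X^{T})\ket{\Gamma}$ for the unnormalized maximally entangled vector $\ket{\Gamma}=\sum_j\ket{j,j}$, we obtain
\begin{equation*}
J^{\cC(\cE)}_{A_I^nA_O}=\big((\overline{C})^{\otimes n}\otimes C\big)\,J^{\cE}_{A_I^nA_O}\,\big((\overline{C})^{\otimes n}\otimes C\big)^{\dagger}.
\end{equation*}
The complex conjugate $\overline{C}$ of a multi-qubit Clifford unitary is again Clifford, since conjugation maps the Pauli group to itself up to signs. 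Hence $W_C\coloneqq(\overline{C})^{\otimes n}\otimes C$ is a Clifford unitary on $A_I^nA_O$. Dividing by $d_{A_I^n}$ gives $\Psi^{\cC(\cE)}=W_C\Psi^{\cE}W_C^\dagger$. Since the Choi map is linear,
\begin{equation*}
\Psi^{\mathcal T_{\mathrm{Cl}}(\cE)}=\frac{1}{|\mathrm{Cl}(d)|}\sum_{C}W_C\Psi^{\cE}W_C^\dagger.
\end{equation*}

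\emph{Step 2: Clifford invariance and convexity.} Take an optimal decomposition $\Psi^\cE=\sum_j x_j\ketbra{\phi_j}{\phi_j}$ with $\|\mathbf x\|_1=\cR(\Psi^\cE)$. Conjugating by $W_C$ maps each stabilizer projector to another stabilizer projector, so $\cR(W_C\Psi^\cE W_C^\dagger)\le\cR(\Psi^\cE)$. Averaging these decompositions over $C$ gives a stabilizer decomposition of the twirled Choi state. By the triangle inequality, its $\ell_1$ norm is at most $\cR(\Psi^\cE)$, which proves the inequality.

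This argument also applies to the unnormalized Choi state $J/d_{A_I^n}$ of a CPTN map, because the linear program in Eq.~\eqref{eq: LP_robustness of magic} is homogeneous.

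\emph{Step 3: feasibility of the twirl.} We also check that $\mathcal T_{\mathrm{Cl}}(\cE)$ is still feasible, which is what makes the lemma usable in the lower bound. Each term is CP, and trace non-increasing because $C$ is unitary. The constraint operators are invariant under Clifford conjugation: $Q$ and $R$ are Haar averages, so they commute with $U^{\otimes n}\otimes U$ for every unitary $U$, in particular every Clifford. Since the constraints are
\begin{equation*}
\tr[J^{T_{A_I^n}}Q]=pf,\qquad \tr[J^{T_{A_I^n}}R]=p,
\end{equation*}
and $(J^{\cC(\cE)})^{T_{A_I^n}}=(C^{\otimes n}\otimes C)(J^\cE)^{T_{A_I^n}}(C^{\otimes n}\otimes C)^\dagger$, each term, and hence their average, satisfies both constraints.

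\emph{Main obstacle.} The only delicate point is the bookkeeping of transposes and conjugates in Step 1. One must confirm that the input side picks up $\overline{C}$ rather than $C$, and that $\overline{C}$ is still Clifford. The latter holds because $\overline{C}$ normalizes the Pauli group: $\overline{X}=X$, $\overline{Z}=Z$, and $\overline{Y}=-Y$.
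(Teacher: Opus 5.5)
Your proposal is correct and follows essentially the same route as the paper. The Choi state of each twirled term is $((C^*)^{\otimes n}\otimes C)\Psi^\cE((C^*)^{\otimes n}\otimes C)^\dagger$; robustness does not increase under this Clifford conjugation; convexity then gives the inequality, and the constraints $p$ and $pf$ are checked to be preserved. Two minor differences add rigor: you explicitly justify that $\overline{C}$ is Clifford, which the paper uses tacitly, and you verify the constraints via the $U^{\otimes n}\otimes U$-invariance of $Q$ and $R$ rather than the paper's Haar change of variables.
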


\begin{proof}
Since each \(\cC(\cE)\) is CPTN whenever \(\mathcal{E}\) is CPTN, and
\(\mathcal{T}_{\mathrm{Cl}}(\mathcal{E})\) is a convex combination of such maps,
\(\mathcal{T}_{\mathrm{Cl}}(\mathcal{E})\) is also CPTN. We first show that the Clifford twirl preserves the success probability \(p\) and the
quantity \(pf\). For the success probability,
\begin{equation}
\begin{aligned}
\int d\psi 
\tr\left[
\mathcal{T}_{\mathrm{Cl}}(\mathcal{E})(\cD_\delta(\psi)^{\otimes n})
\right]
&=
\int d\psi 
\frac{1}{|\mathrm{Cl}(d)|}
\sum_{C\in \mathrm{Cl}(d)}
\tr\left[
\cC(\cE)(\cD_\delta(\psi)^{\otimes n})
\right] \\
&=
\frac{1}{|\mathrm{Cl}(d)|}
\sum_{C\in \mathrm{Cl}(d)}
\int d\psi 
\tr\left[
C \mathcal{E} \left((C^\dagger \cD_\delta(\psi) C)^{\otimes n}\right) C^\dagger
\right] \\
&=
\frac{1}{|\mathrm{Cl}(d)|}
\sum_{C\in \mathrm{Cl}(d)}
\int d\psi 
\tr\left[
\mathcal{E} \left(\cD_\delta(C^\dagger \psi C)^{\otimes n}\right)
\right] \\
&=
\frac{1}{|\mathrm{Cl}(d)|}
\sum_{C\in \mathrm{Cl}(d)}
\int d\phi 
\tr\left[
\mathcal{E}(\cD_\delta(\phi)^{\otimes n})
\right] \\
&=
\int d\phi 
\tr\left[
\mathcal{E}(\cD_\delta(\phi)^{\otimes n})
\right] \\
&=
p.
\end{aligned}
\end{equation}
Here we use the unitary invariance of the depolarizing channel \(\cD_\delta\) and the invariance of the
Haar measure under the change of variables \(\phi = C^\dagger \psi C\). Similarly, for the quantity \(pf\),
\begin{equation}
\begin{aligned}
\int d\psi 
\tr\left[
\psi \mathcal{T}_{\mathrm{Cl}}(\mathcal{E})(\cD_\delta(\psi)^{\otimes n})
\right]
&=
\frac{1}{|\mathrm{Cl}(d)|}
\sum_{C\in \mathrm{Cl}(d)}
\int d\psi 
\tr\left[
\psi 
C \mathcal{E} \left(\cD_\delta(C^\dagger \psi C)^{\otimes n}\right)C^\dagger
\right] \\
&=
\frac{1}{|\mathrm{Cl}(d)|}
\sum_{C\in \mathrm{Cl}(d)}
\int d\psi 
\tr\left[
C^\dagger \psi C 
\mathcal{E} \left(\cD_\delta(C^\dagger \psi C)^{\otimes n}\right)
\right] \\
&=
\frac{1}{|\mathrm{Cl}(d)|}
\sum_{C\in \mathrm{Cl}(d)}
\int d\phi 
\tr\left[
\phi \mathcal{E}(\cD_\delta(\phi)^{\otimes n})
\right] \\
&=
\int d\phi 
\tr\left[
\phi \mathcal{E}(\cD_\delta(\phi)^{\otimes n})
\right] \\
&=
pf.
\end{aligned}
\end{equation}
Therefore, \(\mathcal{T}_{\mathrm{Cl}}(\mathcal{E})\) preserves both constraints.

It remains to show that Clifford twirling does not increase the robustness of
the Choi state. For each Clifford operator \(C\), the Choi state of
the corresponding twirled component \(\cC(\cE)\) is
\begin{equation}
\Psi_{A^n_IA_O}^{\cC(\cE)}
=
\bigl((C^*)^{\otimes n}\otimes C\bigr)
\Psi_{A^n_IA_O}^\mathcal{E}
\bigl((C^*)^{\otimes n}\otimes C\bigr)^\dagger.
\end{equation}
Since Clifford unitaries map stabilizer states to stabilizer states, the
robustness is invariant under this conjugation:
\begin{equation}
\cR(\Psi_{A^n_IA_O}^{\cC(\cE)})
=
\cR(\Psi_{A^n_IA_O}^\mathcal{E}).
\end{equation}
Therefore, by the convexity of the robustness of magic,
\begin{equation}
\begin{aligned}
\cR(\Psi_{A^n_IA_O}^{\mathcal{T}_{\mathrm{Cl}}(\mathcal{E})})
=
\cR\!\left(
\frac1{|\mathrm{Cl}(d)|}
\sum_{C\in\mathrm{Cl}(d)}
\Psi_{A^n_IA_O}^{\cC(\cE)}
\right)
\le
\frac1{|\mathrm{Cl}(d)|}
\sum_{C\in\mathrm{Cl}(d)}
\cR(\Psi_{A^n_IA_O}^{\cC(\cE)})
=
\cR(\Psi_{A^n_IA_O}^\mathcal{E}).
\end{aligned}
\end{equation}
This proves the lemma.
\end{proof} 

We now set \(n=2\) and write \(\Psi^\cE_{A_I^2A_O}=J^\cE_{A_I^2A_O}/d^2\) for the Choi state of a feasible two-copy map.
Since the multi-qubit Clifford group is a unitary 3-design, the Clifford-twirled Choi state lies in the commutant of \(U\otimes U\otimes \overline U\), equivalently in the partially transposed permutation algebra. We use the basis $\{S_i\}$ from Lemma~\ref{lem:EW-positivity} and write
\begin{equation}
s_i = \tr[\Psi^\cE_{A_I^2A_O} S_i],\qquad i\in \{+, -, 0,1,2,3\}.
\end{equation}

In addition, we may symmetrize over the input-swap operation exchanging \(A_{I,1}\) and \(A_{I,2}\). This preserves the success probability and fidelity constraints, because \(Q\) and \(R\) are input-symmetric, and it does not increase robustness by convexity. Therefore, without loss of generality, the Choi state satisfies
\begin{equation}
\bP_3((12))\Psi^\cE_{A_I^2A_O}\bP_3((12))
=
\Psi^\cE_{A_I^2A_O}.
\end{equation}
We note that
\begin{equation}
\bP_3((12)) S_j \bP_3((12))=S_j,\qquad j\in\{+,-,0,1\},
\end{equation}
whereas
\begin{equation}
\bP_3((12)) S_2 \bP_3((12))=-S_2,
\qquad
\bP_3((12)) S_3 \bP_3((12))=-S_3.
\end{equation}
Thus $s_2 = s_3 = 0$, and we only need to consider $\{s_+, s_-, s_0, s_1\}$. The positivity criterion gives 
\begin{equation}\label{eq:S-positivity-criterion}
    s_+\ge0,\quad s_-\ge0,\quad s_0\ge0,\quad
s_1^2\le s_0^2.
\end{equation}
The number of free parameters can be further reduced to $\{s_+, s_-\}$ by solving the purification constraints
\begin{equation}
\tr[\Psi^\cE_{A_I^2A_O} Q^{T_{A_I^2}}_{A_I^2A_O}]= \frac{pf}{d^2},
\qquad
\tr[\Psi^\cE_{A_I^2A_O} R^{T_{A_I^2}}_{A_I^2A_O}]=\frac{p}{d^2}.
\end{equation}
For clarity and simplicity, we denote $Q_{A_I^2A_O}, R_{A_I^2A_O}$ calculated in Eq.~\eqref{eq: Q and R decomposition} as
\begin{equation}
\begin{aligned}
    Q_{A_I^2A_O}
&=
q_1 I+q_2 \bP_3((12))+q_3(\bP_3((23))+\bP_3((13)))+q_4(\bP_3((132))+\bP_3((123))),
\\
R_{A_I^2A_O}
&=
r_1 I+r_2\bP_3((12)),
\end{aligned}
\end{equation}
with
\begin{equation}
    q_1 =
\frac{d^2+4d\delta +(2-d)\delta^2}{(d+2)(d+1)d^3},\quad
q_3 =
\frac{(1-\delta)(d+2\delta)}{(d+2)(d+1)d^2},
\quad
q_2 = q_4 =
\frac{(1-\delta)^2}{(d+2)(d+1)d},
\end{equation}
and
\begin{equation}
    r_1=
\frac{d+\delta(2-\delta)}{(d+1)d^2},
\qquad
r_2=
\frac{(1-\delta)^2}{(d+1)d}.
\end{equation}

The equations become
\begin{equation}\label{eq:Q-S-constraint}
(q_1+q_2)s_+
+
(q_1-q_2)s_-
+
(q_1+dq_3+q_4)s_0
+
(q_2+q_3+dq_4)s_1
=
\frac{pf}{d^2},
\end{equation}
and
\begin{equation}\label{eq:R-S-constraint}
(r_1+r_2)s_+
+
(r_1-r_2)s_-
+
r_1s_0
+
r_2s_1
=
\frac{p}{d^2}.
\end{equation}

Now solve Eqs.~\eqref{eq:Q-S-constraint} and \eqref{eq:R-S-constraint} for
\(s_0,s_1\). Let
\begin{equation}
\Delta = (q_1+dq_3+q_4)r_2-(q_2+q_3+dq_4)r_1 = -\frac{\delta(1-\delta)\bigl(d(1-\delta)+\delta\bigr)}
{d^4(d+1)}
\neq0
\end{equation}
for \(d\ge2\) and \(0<\delta<1\). Hence the solution is
\begin{equation}\label{eq:s0-solution}
s_0
=
\frac{
\left(\frac{pf}{d^2}-(q_1+q_2)s_+-(q_1-q_2)s_-\right)r_2
-
\left(\frac{p}{d^2}-(r_1+r_2)s_+-(r_1-r_2)s_-\right)(q_2+q_3+dq_4)
}{\Delta},
\end{equation}
and
\begin{equation}\label{eq:s1-solution}
s_1
=
\frac{
(q_1+dq_3+q_4)\left(\frac{p}{d^2}-(r_1+r_2)s_+-(r_1-r_2)s_-\right)
-
r_1\left(\frac{pf}{d^2}-(q_1+q_2)s_+-(q_1-q_2)s_-\right)
}{\Delta}.
\end{equation}
Thus the purification constraints reduce the remaining free parameters to $\{s_+, s_-\}$. 

We now choose a witness whose expectation can be expressed in these remaining parameters and whose value is bounded on all stabilizer projectors.
Consider the two operators
\begin{equation}
W_1
=
\frac{-2I+(d-1)\bP_3((12))+\bP_3((23))^{T_3}+\bP_3((13))^{T_3}+\bP_3((132))^{T_3}+\bP_3((123))^{T_3}}{d+1}
\end{equation}
and
\begin{equation}
    W_2 = \frac{1}{2}(\bP_3((132))^{T_3}+\bP_3((123))^{T_3}).
\end{equation}

Define
\begin{equation}\label{eq:Wsm-def}
\begin{aligned}
W &= \frac1{d-1}W_1 + \left(1 - \frac1{d-1}\right)W_2 \\
&= 
\frac{1}{(d-1)(d+1)}  \left(-2I + 
\bP_3((23))^{T_3}
+
\bP_3((13))^{T_3}
\right)
+
\frac{1}{d+1}\bP_3((12)) 
 \\
&+
\frac{d}{2(d+1)}
\left(
\bP_3((123))^{T_3}
+
\bP_3((132))^{T_3}
\right).
\end{aligned}
\end{equation}
We will prove the robustness lower bound through the inequalities
\begin{equation}
    \cR(\Psi^\cE_{A_I^2A_O})
    \geq
    \tr[W\Psi^\cE_{A_I^2A_O}]
    \geq
    p\left[
    1+\cK_\cR^l(f-\lambda_0)
    \right].
\end{equation}

We prove the following lemma:
\begin{lemma}\label{lem:MD-dual-feasible}
For every pure tripartite \(3m\)-qubit stabilizer projector \(\phi\), one has \(|\tr[W_1\phi]|\le 1\).
\end{lemma}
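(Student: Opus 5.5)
The approach is to reduce $\tr[W_1\phi]$ to a few overlaps of $\phi$, bound it below by an operator inequality, and bound it above using the fact that for a stabilizer state these overlaps are counts of Pauli stabilizers lying in certain Pauli subgroups.

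\textbf{Rewriting $W_1$.} With $X=\bP_3((23))^{T_3}$ and $V=\bP_3((12))$ as in Lemma~\ref{lem:EW-positivity}, the four partially transposed permutations in $W_1$ are $X$, $VXV$, $XV$ and $VX$. Their sum is $(I+V)X(I+V)$, so
\begin{equation*}
W_1=\frac{1}{d+1}\Big[-2I+(d-1)V+(I+V)X(I+V)\Big].
\end{equation*}
Since $X=d\,\Phi^+_{A_{I,2}A_O}\otimes I_{A_{I,1}}$, a pure state $\phi=\ketbra{\phi}{\phi}$ gives
\begin{equation*}
(d+1)\tr[W_1\phi]=-2+(d-1)\langle V\rangle_\phi+d\,\big\|(\bra{\Phi^+_d}_{A_{I,2}A_O}\otimes I)(I+V)\ket{\phi}\big\|^2 .
\end{equation*}

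\textbf{Lower bound.} This needs no stabilizer structure. The last term is nonnegative and $\langle V\rangle_\phi\ge -1$, so $\tr[W_1\phi]\ge(-2-(d-1))/(d+1)=-1$ for every state. Equivalently, $W_1\ge -I$ as an operator; on the antisymmetric subspace of $A_{I,1}A_{I,2}$ it acts exactly as $-I$.

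\textbf{Upper bound: structure.} $W_1$ has operator norm larger than one, so the upper bound must use that $\phi$ is a stabilizer state. I would expand every term in the Pauli basis, using $V=d^{-1}\sum_P P\otimes P\otimes I$ and $\Phi^+_d=d^{-2}\sum_P P\otimes\overline P$. For a stabilizer state with stabilizer group $G$ (signed Paulis, $|G|=d^3$), each expectation then becomes a normalized signed count:
\begin{itemize}
\item $\langle V\rangle_\phi=d^{-1}\,\#\{g\in G\cap L_{12}\}$ with signs;
\item $\langle X\rangle_\phi$, $\langle VXV\rangle_\phi$ and the cross terms $\langle XV\rangle_\phi$, $\langle VX\rangle_\phi$ are similar counts over $G\cap L$.
\end{itemize}
Here $L_{12}=\{P\otimes P\otimes I\}$, and the other $L$'s are the analogous Pauli subgroups, such as $\{I\otimes P\otimes\overline P\}$, $\{P\otimes I\otimes\overline P\}$, and, for the cross terms, products of these.

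Because $G$ is abelian and all these $L$'s are subgroups, each intersection is a subgroup of order a power of two. The signs are constrained, since $-I\notin G$. Moreover the intersections satisfy inclusion relations, for instance $(G\cap L_{13})\cdot(G\cap L_{23})\subseteq G\cap L_{12}'$ up to conjugation conventions. This turns the claim $\tr[W_1\phi]\le 1$ into an inequality among a small number of integers $2^{a},2^{b},\dots$ subject to group-theoretic constraints.

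\textbf{Upper bound: reduction to a canonical form.} To control these integers, I would invoke the structure theorem cited in the main text~\cite{bravyi2006ghz,looi2011tripartite}: up to local Cliffords, $\phi$ is a tensor product of GHZ triples, Bell pairs, and single-qubit stabilizer states. The difficulty is that $W_1$ is only invariant under $C\otimes C\otimes\overline C$, not under independent local Cliffords. So I would use the decomposition only to compute the subgroup orders and signs, since these depend on $G$ through local-Clifford-covariant data once the Pauli sums are rewritten, and not to simplify $W_1$ itself.

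\textbf{Upper bound: tightness check.} The bound is saturated by $\ket{a}_{A_{I,1}}\otimes\ket{\Phi^+_d}_{A_{I,2}A_O}$. For this state $\langle V\rangle=1/d$, $\langle X\rangle=d$, $\langle VXV\rangle=1/d$ and $\langle XV\rangle=\langle VX\rangle=1$, so $(d+1)\tr[W_1\phi]=d+1$. The target inequality is therefore
\begin{equation*}
d\,\big\|\bra{\Phi^+_d}_{23}(I+V)\ket{\phi}\big\|^2\le d+3-(d-1)\langle V\rangle_\phi ,
\end{equation*}
to be checked on the discrete set of values allowed by the counting.

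\textbf{Main obstacle.} I expect the hardest step to be the cross term $\langle XV\rangle_\phi=d\bra{\phi}\Phi^+_{23}V\ket{\phi}$. It involves genuine three-body Pauli correlators, and its sign is not fixed. My first attempt would be Cauchy--Schwarz,
\begin{equation*}
|\langle XV\rangle_\phi|\le\sqrt{\langle X\rangle_\phi\,\langle VXV\rangle_\phi},
\end{equation*}
which reduces everything to the two Bell fidelities $F_{23},F_{13}$ (each of the form $2^{k}/d^2$ for stabilizer states) and $\langle V\rangle_\phi$. Then I would use monogamy: if $F_{23}$ is large, the subgroup $G\cap L_{23}$ is large, which forces $G\cap L_{13}$ and $G\cap L_{12}$ to be small. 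If this is too lossy for mixed GHZ/Bell configurations, I would instead evaluate the three counts blockwise on the canonical tensor-product form and show that the resulting expression is maximized when all of the $A_{I,2}A_O$ weight sits in Bell pairs.
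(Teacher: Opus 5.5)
\textbf{There is a genuine gap: the upper bound $\tr[W_1\phi]\le 1$ is never proved.} Some parts are correct. Your rewriting $W_1=\frac{1}{d+1}\big[-2I+(d-1)V+(I+V)X(I+V)\big]$ holds, and the lower bound $W_1\ge -I$ follows cleanly. It is slightly slicker than the paper, which gets the same nonnegativity $x_{23}+x_{13}+2\operatorname{Re}x_{123}\ge(\sqrt{x_{23}}-\sqrt{x_{13}})^2$ from Cauchy--Schwarz. Your tightness example $\ket{a}\ket{\Phi^+_d}$ is also right.

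The upper bound is the only place stabilizer structure enters. You list three candidate routes but carry none of them to an inequality:
\begin{itemize}
\item The Pauli-counting route is not in usable form. You never state the constraints on the exponents and signs. The inclusion $(G\cap L_{13})\cdot(G\cap L_{23})\subseteq G\cap L_{12}'$ is neither made precise nor justified; products $P\otimes Q\otimes\overline{P}\,\overline{Q}$ generally do not lie in $L_{12}$.
\item The claim that the maximum occurs ``when all of the $A_{I,2}A_O$ weight sits in Bell pairs'' is precisely what needs proof.
\end{itemize}

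\textbf{Your plan for using the normal form cannot work as stated.} You propose to use it ``only to compute the subgroup orders and signs''. But $x_{12}=\langle V\rangle$, $x_{23}=\langle X\rangle$ and $x_{13}=\langle VXV\rangle$ are not local-Clifford invariants. For example, $\langle V\rangle$ is $1$ on $\ket{0}\ket{0}$ and $0$ on $\ket{0}\ket{1}$. So the block counts do not determine these overlaps; they can only bound them after maximizing over the local orbit.

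\textbf{The missing step.} This orbit maximization is exactly the paper's argument, and it is the quantitative version of your monogamy heuristic (your Cauchy--Schwarz plus monogamy route is essentially the paper's). Apply $\rho_{\rm GHZ}\le I/2$ and $\Phi^+\le I$ to the two-party marginals of the normal form. Then use the local-unitary-invariant estimates $d\,\tr[\Phi^+(E\otimes F)]\le\min\{\mathrm{rank}\,E,\mathrm{rank}\,F\}$ and $\tr[\bP_2((12))(E\otimes F)]\le\min\{\mathrm{rank}\,E,\mathrm{rank}\,F\}$. These give
\[
x_{23}\le 2^{m_{23}-\max\{m_{12},m_{13}\}},\qquad x_{13}\le 2^{m_{13}-\max\{m_{12},m_{23}\}},\qquad x_{12}\le\min\{1,\,2^{m_{12}-\max\{m_{13},m_{23}\}}\}.
\]
Combine these with your bound $|x_{123}|\le\sqrt{x_{23}x_{13}}$ and do a case analysis over the ordering of $m_{12},m_{13},m_{23}$. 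For instance, when $m_{23}\ge m_{12}\ge m_{13}$, set $r=2^{m_{23}-m_{12}}$, $s=2^{m_{13}-m_{12}}$ and $h=d/r\ge1$. Then
\[
(d+1)\bigl(1-\tr[W_1\phi]\bigr)\ge(r-1)(h-1)+2(1-\sqrt{s})+(1-s)/r\ge0 .
\]

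As you suspected, the decay of $x_{12}$ is indispensable. With only $x_{12}\le1$, a large $x_{23}$ (up to $d$) would already break the bound because of the weight $d-1$ on $x_{12}$. Without these orbit-uniform estimates and the case analysis, the lemma remains unproved.
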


\begin{proof}
Let \(\phi=\ketbra{\psi}{\psi}\) be a pure tripartite stabilizer projector. Define
\begin{equation}
x_{12}=\tr[\bP_3((12))\phi],
\quad
x_{23}=\tr[\bP_3((23))^{T_3}\phi],
\quad
x_{13}=\tr[\bP_3((13))^{T_3}\,\phi],
\quad
x_{123}=\tr[\bP_3((132))^{T_3}\phi].
\end{equation}
We note that
\begin{equation}
\tr[(\bP_3((132))^{T_3}+\bP_3((123))^{T_3})\phi]
=
x_{123}+\overline{x_{123}}
=
2\operatorname{Re}(x_{123}).
\end{equation}
Therefore
\begin{equation}\label{eq:MD-expectation}
\tr[W_1\phi]
=
\frac{
-2+(d-1)x_{12}+x_{23}+x_{13}+2\operatorname{Re} (x_{123})
}{d+1}.
\end{equation}
We will first prove that $\tr[W_1\phi] \le 1$, which is to show
\begin{equation}
-2+(d-1)x_{12}+x_{23}+x_{13}+2\operatorname{Re} (x_{123})
\le d+1.
\end{equation}
To show this, we use the tripartite stabilizer normal form to identify the block structure of the state. We use the following lemma.
\begin{lemma}[Tripartition of stabilizer states~\cite{bravyi2006ghz, looi2011tripartite}]
For any tripartite qubit stabilizer state $\ket{\phi} \in \cH_A \otimes \cH_B \otimes \cH_C$, there exist local Clifford unitaries $U_A, U_B, U_C$ such that $U_A \otimes U_B \otimes U_C \ket{\phi}$ is a tensor product of GHZ states, maximally entangled states, and local single-qubit stabilizer states.
\end{lemma}

In our case, we express the local Clifford decomposition for the multi-qubit system as follows:
\begin{equation}
    \phi_{{A_{I}^2}{A_{O}}} = (U_{A_{I,1}} \otimes U_{A_{I,2}} \otimes U_{A_O}) \phi^{\mathrm{nf}}_{A_{I}^2  A_O}  (U_{A_{I,1}} \otimes U_{A_{I,2}} \otimes U_{A_O})^\dagger,  
\end{equation}
where \(\phi^{\mathrm{nf}}_{A_I^2A_O}\) denotes the tripartite stabilizer normal form, written as a tensor product of elementary one-qubit blocks. These blocks are GHZ states, maximally entangled states, or local single-qubit stabilizer states. Let
\begin{itemize}
    \item $m_{\mathrm{GHZ}}$ be the number of GHZ states;
    \item $m_{12}$ be the number of Bell pairs between $A_{I,1}$ and $A_{I,2}$, tensored with a single-qubit state on $A_O$;
    \item $m_{13}$ be the number of Bell pairs between $A_{I,1}$ and $A_O$, tensored with a single-qubit state on $A_{I,2}$;
    \item $m_{23}$ be the number of Bell pairs between $A_{I,2}$ and $A_O$, tensored with a single-qubit state on $A_{I,1}$;
    \item $m_{\mathrm{loc}}$ be the number of local single-qubit stabilizer states.
\end{itemize}
These numbers satisfy $m_{\mathrm{GHZ}} + m_{12} + m_{13} + m_{23} + m_{\mathrm{loc}} = m$.
We can write $\phi^{\mathrm{nf}}_{A_{I}^2  A_O}$ as
\begin{equation}
\begin{aligned}
    \phi^{\mathrm{nf}}_{A_{I}^2  A_O} &= \psi_{\mathrm{GHZ}}^{\otimes m_{\mathrm{GHZ}}} \otimes (\Phi^+_{A_{I}^2} \otimes \ketbra{+}{+}_{A_O})^{\otimes m_{12}} \otimes (\Phi^+_{A_{I,1} A_{O}} \otimes \ketbra{+}{+}_{A_{I,2}})^{\otimes m_{13}} \\
    & \otimes (\Phi^+_{A_{I,2} A_{O}} \otimes \ketbra{+}{+}_{A_{I,1}})^{\otimes m_{23}} \otimes \ketbra{+++}{+++}^{\otimes m_{\mathrm{loc}}},
\end{aligned}
\end{equation}
where \(\ket{+}\), \(\Phi^+\), and the standard GHZ state are chosen only as canonical representatives of the elementary stabilizer blocks. Any other one-qubit stabilizer state, Bell state, or GHZ stabilizer representative is Clifford equivalent to these choices and can be absorbed into the Clifford unitaries. The estimates below are taken over the resulting local Clifford orbit, so they do not depend on this particular choice of representatives. In addition, we also use 
\begin{equation}\label{eq:partial trace A1 of phi}
\begin{aligned}
    \tr_{A_{I, 1}}\phi^{\mathrm{nf}}_{A_{I}^2  A_O} &= \rho_{\mathrm{GHZ}}^{\otimes m_{\mathrm{GHZ}}} \otimes (\frac{I_{A_{I,2}}}{2} \otimes \ketbra{+}{+}_{A_O})^{\otimes m_{12}} \otimes ( \ketbra{+}{+}_{A_{I,2}} \otimes \frac{I_{A_{O}}}{2})^{\otimes m_{13}} \\
    & \otimes (\Phi^+_{A_{I,2} A_{O}} )^{\otimes m_{23}} \otimes \ketbra{++}{++}^{\otimes m_{\mathrm{loc}}},
\end{aligned}
\end{equation}
where $\rho_{\mathrm{GHZ}} = \frac{1}{2}(\ketbra{00}{00} + \ketbra{11}{11})$. Partial traces over the other two systems have analogous forms.

We will also use
\begin{equation}
\bP_3((23))^{T_3}
=
d I_{A_{I,1}}\otimes \Phi^+_{A_{I,2}A_O},
\end{equation}
We first calculate $x_{23}$:
\begin{equation}
\begin{aligned}
    x_{23} &= \tr[\bP_3((23))^{T_3} \phi_{{A_{I}^2}{A_{O}}}] \\
    &= d \tr\left[I_{A_{I,1}} \otimes \Phi^+_{A_{I,2}, A_O} \phi_{{A_{I}^2}{A_{O}}}\right] \\
    &= d \tr\left[ \Phi^+_{A_{I,2}, A_O} \tr_{A_{I,1}}\phi_{{A_{I}^2}{A_{O}}}\right] \\
    &= d \tr\left[ \Phi^+_{A_{I,2}, A_O}  (U_{A_{I,2}} \otimes U_{A_O}) \tr_{A_{I,1}} \phi^{\mathrm{nf}}_{A_{I}^2 A_O} (U_{A_{I,2}} \otimes U_{A_O})^\dagger \right] \\
    &\le \max_{U_{A_{I,2}} , U_{A_O}} d \tr\left[ \Phi^+_{A_{I,2}, A_O}  (U_{A_{I,2}} \otimes U_{A_O}) \tr_{A_{I,1}} \phi^{\mathrm{nf}}_{A_{I}^2 A_O} (U_{A_{I,2}} \otimes U_{A_O})^\dagger \right] \\
    &\le \max_{\widetilde{\Phi}^{+}_{A_{I,2} A_O}} d \tr\left[ \widetilde{\Phi}^{+}_{A_{I,2} A_O}   \tr_{A_{I,1}} \phi^{\mathrm{nf}}_{A_{I}^2 A_O}  \right],
\end{aligned}
\end{equation}
where $\widetilde{\Phi}^{+}_{A_{I,2} A_O} = (U_{A_{I,2}} \otimes U_{A_O})^\dagger\Phi^+_{A_{I,2}, A_O}(U_{A_{I,2}} \otimes U_{A_O})$ is still a maximally entangled state.
We bound $\tr_{A_{I, 1}}\phi^{\mathrm{nf}}_{A_{I}^2  A_O}$ by the inequalities
\begin{equation}\label{eq:approx ghz and epr}
      \rho_{\mathrm{GHZ}} \leq \frac{I_{A_{I,2} A_O}}{2}, \quad \Phi^+_{A_{I,2} A_O}  \leq I_{A_{I,2} A_O}.
\end{equation}
Thus
\begin{equation}\label{eq: bound of partial-trace phi^nf}
\begin{aligned}
    \tr_{A_{I, 1}}\phi^{\mathrm{nf}}_{A_{I}^2  A_O} &\le (\frac{I_{A_{I,2} A_O}}{2})^{\otimes m_{\mathrm{GHZ}}} \otimes (\frac{I_{A_{I,2}}}{2} \otimes \ketbra{+}{+}_{A_O})^{\otimes m_{12}} \otimes ( \ketbra{+}{+}_{A_{I,2}} \otimes \frac{I_{A_{O}}}{2})^{\otimes m_{13}} \\
    & \otimes I_{A_{I,2}, A_O} ^{\otimes m_{23}} \otimes \ketbra{++}{++}^{\otimes m_{\mathrm{loc}}} \\
    &= \frac{ I_{A_{I,2}}^{\otimes (m_{\mathrm{GHZ}} + m_{12}  + m_{23})} \otimes \ketbra{+}{+}_{A_{I,2}} ^ {\otimes (m_\mathrm{loc}+ m_{13}) }}{2^{m_{\mathrm{GHZ}} + m_{12}  + m_{13} }} \\
    &\otimes  I_{A_{O}}^{\otimes (m_{\mathrm{GHZ}} + m_{13} + m_{23})} \otimes \ketbra{+}{+}_{A_{O}} ^ {\otimes ( m_{12}  + m_\mathrm{loc})} \\
    &= \frac{E_{A_{I,2}} \otimes F_{A_O}}{2^{m_{\mathrm{GHZ}} + m_{12} + m_{13} }}. 
\end{aligned}
\end{equation}
Thus 
\begin{equation}
\begin{aligned}
    x_{23}
    &\le \max_{\widetilde{\Phi}^{+}_{A_{I,2} A_O}} 
    d \tr\left[ \widetilde{\Phi}^{+}_{A_{I,2} A_O}   
    \tr_{A_{I,1}} \phi^{\rm nf}_{A_{I,1} A_{I,2} A_O}  \right] \\
    &\le \max_{\widetilde{\Phi}^{+}_{A_{I,2} A_O}}
    \frac{d}{2^{m_{\mathrm{GHZ}} + m_{12} + m_{13} }}
    \tr\left[ \widetilde{\Phi}^{+}_{A_{I,2} A_O}   
    (E_{A_{I,2}} \otimes F_{A_O})  \right] \\
    &\le
    \frac{\min \{\mathrm{rank}\, E_{A_{I,2}},  
    \mathrm{rank}\, F_{A_O}\}}
    {2^{m_{\mathrm{GHZ}} + m_{12} + m_{13} }} \\
    &=
    \frac{
    2^{m_{\mathrm{GHZ}} + m_{23}
    +\min \{m_{12}, m_{13}\}}
    }
    {2^{m_{\mathrm{GHZ}} + m_{12} + m_{13}}} \\
    &=
    2^{m_{23}
    -\max \{m_{12}, m_{13}\}} \\
\end{aligned}
\end{equation}
where we use $\tr[\Phi^{+} (E \otimes F)] = \frac{1}{d}\tr[E^TF] \leq \frac{1}{d}\min \{\mathrm{rank} \;E,  \mathrm{rank} \;F\}$ for maximally entangled states. By the same argument, we obtain $x_{13} \leq 2^{m_{13}
    -\max \{m_{12}, m_{23}\}}$ by exchanging the input systems. 
    
For $x_{12}$, the calculation is slightly different: 
\begin{equation}
\begin{aligned}
    x_{12} &= \tr[\bP_3((12)) \phi_{{A_{I}^2}{A_{O}}}] \\
    &=  \tr\left[ \bP_2((12)) \otimes I_{A_O} \phi_{{A_{I}^2}{A_{O}}}\right] \\
    &=  \tr\left[ \bP_2((12)) \tr_{A_{O}}\phi_{{A_{I}^2}{A_{O}}}\right] \\
    &=  \tr\left[ \bP_2((12))  (U_{A_{I,1}} \otimes U_{A_{I,2}}) \tr_{A_{O}} \phi^{\mathrm{nf}}_{A_{I}^2 A_O} (U_{A_{I,1}} \otimes U_{A_{I,2}})^\dagger \right] \\
    &\le \max_{U_{A_{I,1}} , U_{A_{I,2}}}  \tr\left[ \bP_2((12))  (U_{A_{I,1}} \otimes U_{A_{I,2}}) \tr_{A_{O}} \phi^{\mathrm{nf}}_{A_{I}^2 A_O} (U_{A_{I,1}} \otimes U_{A_{I,2}})^\dagger \right]. \\
\end{aligned}
\end{equation}
Applying the same argument as in Eq.~\eqref{eq: bound of partial-trace phi^nf} with the partial trace over $A_O$, we write 
\begin{equation}
    \tr_{A_{O}} \phi^{\mathrm{nf}}_{A_{I}^2 A_O} \leq \frac{E_{A_{I,1}} \otimes F_{A_{I,2}}}{2^{m_{\mathrm{GHZ}} + m_{23} + m_{13} }}.
\end{equation}
Then we have
\begin{equation}
    \begin{aligned}
        x_{12}
    & \le \max_{U_{A_{I,1}} , U_{A_{I,2}}}  \tr\left[ \bP_2((12))  (U_{A_{I,1}} \otimes U_{A_{I,2}}) \tr_{A_{O}} \phi^{\mathrm{nf}}_{A_{I}^2 A_O} (U_{A_{I,1}} \otimes U_{A_{I,2}})^\dagger \right] \\
    &\le \max_{U_{A_{I,1}} , U_{A_{I,2}}}
    \frac{1}{2^{m_{\mathrm{GHZ}} + m_{23} + m_{13} }}
    \tr\left[ \bP_2((12))   
    (U_{A_{I,1}}E_{A_{I,1}}U_{A_{I,1}}^\dagger \otimes U_{A_{I,2}}F_{A_{I,2}}U_{A_{I,2}}^\dagger)  \right] \\
    &\le
    \frac{\min \{\mathrm{rank}\, E_{A_{I,1}},  
    \mathrm{rank}\, F_{A_{I,2}}\}}
    {2^{m_{\mathrm{GHZ}} + m_{23} + m_{13} }} \\
    &=
    \frac{
    2^{m_{\mathrm{GHZ}} + m_{12}
    +\min \{m_{13}, m_{23}\}}
    }
    {2^{m_{\mathrm{GHZ}} + m_{23} + m_{13}}} \\
    &=
    2^{m_{12}
    -\max \{m_{13}, m_{23}\}}, 
    \end{aligned}
\end{equation}
where we use the swap trick $\tr[\bP_2((12)) (E \otimes F)] = \tr[EF] \leq \min \{\mathrm{rank} \;E,  \mathrm{rank} \;F\}$.
Since \(\bP_3((12))\) is a Hermitian unitary, we also have \(x_{12}\le 1\). Hence
\[
x_{12}
\le
\min\left\{
1,\;
2^{m_{12}-\max\{m_{13},m_{23}\}}
\right\}.
\]

Summarizing the preceding calculation, we have
\begin{equation}
x_{23}
\le
2^{m_{23}-\max\{m_{12},m_{13}\}},
\quad
x_{13}
\le
2^{m_{13}-\max\{m_{12},m_{23}\}},
\quad
x_{12}
\le
\min\left\{
1,\,
2^{m_{12}-\max\{m_{13},m_{23}\}}
\right\}.
\end{equation}
Moreover, since \(\bP_3((23))^{T_3}\ge0\) and
\(\bP_3((132))^{T_3}=\bP_3((23))^{T_3}\bP_3((12))\), the Cauchy--Schwarz inequality gives
\begin{equation}
\begin{aligned}
|x_{123}|^2
&=
\left|
\bra{\psi}\bP_3((23))^{T_3}\bP_3((12))\ket{\psi}
\right|^2\\
&\le
\bra{\psi}\bP_3((23))^{T_3}\ket{\psi}
\bra{\psi}\bP_3((12))\bP_3((23))^{T_3}\bP_3((12))\ket{\psi}\\
&=
\bra{\psi}\bP_3((23))^{T_3}\ket{\psi}
\bra{\psi}\bP_3((13))^{T_3}\ket{\psi}\\
&=
x_{23}x_{13}.
\end{aligned}
\end{equation}
Thus
\begin{equation}\label{eq: bound of Re x_123}
    \operatorname{Re}(x_{123}) \le |x_{123}| \le\sqrt{x_{23}x_{13}}.
\end{equation}

We are now prepared to prove $\tr[W_1\phi] \leq 1$, which is equivalent to show 
\begin{equation}
-2+(d-1)x_{12}+x_{23}+x_{13}+2\operatorname{Re} (x_{123})
\le d+1.
\end{equation}
It remains to prove this for any tripartite stabilizer state $\phi$.
We divide the proof into several cases according to the ordering of
\(m_{12},m_{13},m_{23}\).

\begin{enumerate}
    \item First suppose $m_{12}\ge m_{13},\; m_{12}\ge m_{23}$. 
    The estimates above give
    \begin{equation}
            x_{12}\le1,
    \quad
    x_{23}\le1,
    \quad
    x_{13}\le1,
    \quad
    \operatorname{Re} (x_{123})\le 1.
    \end{equation}
    Therefore
    \begin{equation}
    \begin{aligned}
    -2+(d-1)x_{12}+x_{23}+x_{13}+2\operatorname{Re} (x_{123})
    &\le
    -2+(d-1)+1+1+2\\
    &=
    d+1.
    \end{aligned}
    \end{equation}

    \item Next suppose $m_{23}\ge m_{12}\ge m_{13}$.  
    Then
    \begin{equation}
    x_{12}\le 2^{m_{12}-m_{23}},
\quad
    x_{23}\le 2^{m_{23}-m_{12}},
\quad
    x_{13}\le 2^{m_{13}-m_{23}},\quad \operatorname{Re} (x_{123})
    \le
    2^{(m_{13}-m_{12})/2} \leq 1.
    \end{equation}
    For simplicity, we denote
    \begin{equation}
    r=  2^{m_{23}-m_{12}} \geq 1,
    \quad
    s=2^{m_{13}-m_{12}} \leq 1,
    \quad
    h=2^{m_{\mathrm{GHZ}}+2m_{12}+m_{13}+m_{\mathrm{loc}}}\ge 1,
    \end{equation}
    then we have
    \begin{equation}
       rh =  2^{m_{\mathrm{GHZ}} + m_{12} + m_{13} + m_{23} + m_{\mathrm{loc}}} = d, \quad x_{12} \leq \frac{1}{r}, \quad x_{23} \le r, \quad x_{13} \le \frac{s}{r}.
    \end{equation}

    Therefore
    \begin{equation}
    \begin{aligned}
    d+1 - (-2+(d-1)x_{12}+x_{23}+x_{13}+2\operatorname{Re} (x_{123})) &\ge
    rh+1 - (-2+\frac{rh - 1}{r}+r+\frac{s}{r}+2\sqrt{s})\\
    & \ge 
    (r-1)(h-1)
    +
    2(1-\sqrt{s})
    +
    \frac{1-s}{r} \\
    &\ge 0,
    \end{aligned}
    \end{equation}
    which means
    \begin{equation}    
    -2+(d-1)x_{12}+x_{23}+x_{13}+2\operatorname{Re} (x_{123})
    \le d+1.
    \end{equation}

    \item Suppose $m_{13}\ge m_{12}\ge m_{23}$. This case is symmetric to Case 2 of $m_{23}\ge m_{12}\ge m_{13}$, with \(m_{13}\) and \(m_{23}\) interchanged, and thus \(x_{13}\) and \(x_{23}\) interchanged. Similarly, we set
    \begin{equation}
    r=2^{m_{13}-m_{12}},
    \quad
    s=2^{m_{23}-m_{12}},\quad h=2^{m_{\mathrm{GHZ}}+2m_{12}+m_{23}+m_{\mathrm{loc}}}.
    \end{equation}
    Repeating the same argument as in Case 2 gives
    \[
    -2+(d-1)x_{12}+x_{23}+x_{13}+2\operatorname{Re} (x_{123})
    \le d+1.
    \]

    \item Suppose $m_{23}\ge m_{13}\ge m_{12}$.
    Then
    \begin{equation}
    x_{12}\le 2^{m_{12}-m_{23}}\le 2^{m_{13}-m_{23}},
    \quad
    x_{23}\le 2^{m_{23}-m_{13}},
    \quad
    x_{13}\le 2^{m_{13}-m_{23}},
    \quad
    \operatorname{Re} (x_{123}) \le1.
    \end{equation}
    We denote $r=2^{m_{23}-m_{13}}$,
    then \(1\le r\le d\).
    Thus
    \begin{equation}
    \begin{aligned}
    -2+(d-1)x_{12}+x_{23}+x_{13}+2\operatorname{Re} (x_{123}) &\le
    -2
    +(d-1)2^{m_{13}-m_{23}}
    +2^{m_{23}-m_{13}}
    +2^{m_{13}-m_{23}}
    +2\\
    &=
    \frac{d}{r}+r \\
    & \le d + 1.
    \end{aligned} 
    \end{equation}

    \item Suppose $m_{13}\ge m_{23}\ge m_{12}$. This case is symmetric to Case 4, again by interchanging \(m_{13}\) and \(m_{23}\), and thus \(x_{13}\) and \(x_{23}\). Set $r=2^{m_{13}-m_{23}}.$
    Then \(1\le r\le d\), and the same argument as in Case 4 gives
    \begin{equation}
            -2+(d-1)x_{12}+x_{23}+x_{13}+2\operatorname{Re} (x_{123})
    \le d+1.
    \end{equation}
\end{enumerate}

The above cases exhaust all possible orderings of $m_{12}, m_{13}, m_{23}.$
Therefore, for every pure tripartite stabilizer projector \(\phi\), we conclude that
\begin{equation}
-2+(d-1)x_{12}+x_{23}+x_{13}+2\operatorname{Re} (x_{123})
\le d+1,
\end{equation}
which means $\tr[W_1\phi] \leq 1$. Next we show the lower bound $\tr[W_1\phi] \geq -1$, which is equivalent to show
\begin{equation}
    -2+(d-1)x_{12}+x_{23}+x_{13}+2\operatorname{Re} (x_{123})
\ge -(d+1).
\end{equation}
We know that the spectrum of $\bP_3((12))$ is $\pm 1$, and hence $x_{12} = \tr[\bP_3((12)) \phi] \geq -1$. Moreover, since $x_{23} \geq 0$ and $x_{13} \geq 0$, Eq.~\eqref{eq: bound of Re x_123} gives
\begin{equation}
\begin{aligned}
    x_{23} + x_{13} + 2 \operatorname{Re} (x_{123}) &\geq x_{23} + x_{13} -2 |x_{123}| \\
    &\geq x_{23} + x_{13} -2 \sqrt{x_{23}x_{13}} \\
    &= (\sqrt{x_{23}} - \sqrt{x_{13}})^2\\
    & \ge 0.
\end{aligned}
\end{equation}
Together, we have
\begin{equation}
        -2+(d-1)x_{12}+x_{23}+x_{13}+2\operatorname{Re} (x_{123}) \ge -2-(d-1) + 0
\ge -(d+1).
\end{equation}
This means $\tr[W_1\phi] \geq -1$. Overall, this proves $|\tr[W_1\phi]| \leq 1$ for every pure tripartite stabilizer projector $\phi$.
\end{proof}

The same proof also shows that 
\begin{equation}
    |\tr[W_2\phi]| = |\operatorname{Re} (x_{123})| \le |x_{123}| \le\sqrt{x_{23}x_{13}}.
\end{equation}
Moreover, the estimates above give
\begin{equation}
x_{23}x_{13}
\le
2^{m_{23}-\max\{m_{12},m_{13}\}}
2^{m_{13}-\max\{m_{12},m_{23}\}}
\le 1.
\end{equation}
Thus $|\tr[W_2\phi]| \le 1$.

The above calculation implies that, for any Hermitian operator \(H\) with stabilizer decomposition $H=\sum_j c_j\phi_j$, where \(c_j\in\mathbb R\) and each \(\phi_j\) is a pure stabilizer projector, we have
\begin{equation}
    \begin{aligned}
|\tr[WH]| 
&=
\left|
\sum_j c_j\tr\left[\left( \frac1{d-1}W_1 + \left(1 - \frac1{d-1}\right)W_2 \right) \phi_j \right]
\right| \\
&\le
\frac1{d-1} \left|
\sum_j c_j\tr\left[ W_1 \phi_j \right]
\right| +  \left(1 - \frac1{d-1}\right)  \left|
\sum_j c_j\tr\left[ W_2 \phi_j \right]
\right|  \\
& \le \frac1{d-1}\sum_j |c_j|\,\left|\tr\left[ W_1 \phi_j \right]\right|
+ \left(1 - \frac1{d-1}\right)\sum_j |c_j|\,\left|\tr\left[ W_2 \phi_j \right]\right|
\\
& \le \frac1{d-1}\sum_j |c_j| + \left(1 - \frac1{d-1}\right)\sum_j |c_j|
\\
&= \sum_j |c_j|.
    \end{aligned}
\end{equation}
Taking the minimum over all stabilizer decompositions and applying it to
\(\Psi^\cE_{A_I^2A_O}\), we obtain
\begin{equation}\label{eq:Wsm-bound-robustness}
|\tr[W\Psi^\cE_{A_I^2A_O}]|
\le
\cR(\Psi^\cE_{A_I^2A_O}).
\end{equation}
Thus it remains to lower bound
\(\tr[W\Psi^\cE_{A_I^2A_O}]\).

Recall that $\Psi^\cE_{A_I^2A_O}$ can be represented in the
\(\{S_+,S_-,S_0,S_1\}\) basis as
\begin{equation}
s_i=\tr[\Psi^\cE_{A_I^2A_O}S_i],
\qquad
i\in\{+,-,0,1\},
\end{equation}
where \(s_0,s_1\) are determined by the purification constraints
Eqs.~\eqref{eq:s0-solution} and \eqref{eq:s1-solution}. 
We calculate the coefficients in the \(S\)-basis,
\begin{equation}
\begin{aligned}
    \tr[W_1\Psi^\cE_{A_I^2A_O}] &= \frac1{d+1}((d-3)s_+ - (d+1)s_- + (d-1)s_0 +2ds_1)\\
     \tr[W_2\Psi^\cE_{A_I^2A_O}] &= \frac{1}{2}(s_0 + ds_1).
\end{aligned}
\end{equation}
Thus
\begin{equation}\label{eq:Wsm-S-basis}
\begin{aligned}
\tr[W\Psi^\cE_{A_I^2A_O}]
&= \frac1{d-1} \tr[W_1\Psi^\cE_{A_I^2A_O}] + \left(1 - \frac1{d-1}\right)\tr[W_2\Psi^\cE_{A_I^2A_O}] \\
&= 
\frac{d-3}{(d-1)(d+1)}s_+
-
\frac{1}{d-1}s_- \\
&+
\left(
\frac{1}{d+1}
+
\frac{d-2}{2(d-1)}
\right)s_0
+
\left(
\frac{2d}{(d-1)(d+1)}
+
\frac{d(d-2)}{2(d-1)}
\right)s_1 \\
&= p\left[
1+
\frac{
\left(d^2+d\left(-2(\delta-2)\delta-3\right)+2(\delta-2)\delta+4\right)
\left(d(\delta+f-1)-\delta\right)
}{
2(d-1)(\delta-1)\delta\left(d(\delta-1)-\delta\right)
}
\right] 
+
c_+s_+
+
c_-s_- \\
&= p\left[
1+\frac{
\frac{d-2}{2} + \frac{1}{d-1}+2\delta-\delta^2
}
{\lambda_0\delta(1-\delta)}(f-\lambda_0)
\right]
+
c_+s_+
+
c_-s_- \\
&=
p\left[
1+\cK_\cR^l(f-\lambda_0)
\right]
+
c_+s_+
+
c_-s_-,
\end{aligned}
\end{equation}
where $\cK_\cR^l = \frac{
\frac{d-2}{2} + \frac{1}{d-1}+2\delta-\delta^2
}
{\lambda_0\delta(1-\delta)}$,  \(c_+\) and \(c_-\) are
\begin{equation}
\begin{aligned}
    c_+
&=
\frac{1}
{2\delta(d-1)(d+2)\bigl(d(1-\delta)+\delta\bigr)}\Big( (d-2)^4(\delta^2-2\delta+2) \\
&+ (d-2)^3(2\delta^3-4\delta+10) + (d-2)^2(6\delta^3-9\delta^2+6\delta+20) \\
&+ (d-2)(4\delta^3-8\delta^2+12\delta+24)
+
16 \Big)
\end{aligned}
\end{equation}
and
\begin{equation}
\begin{aligned}
    c_-
&=\frac{(d-1)}
{2(d+1)\bigl(d(1-\delta)+\delta\bigr)} \Big((d-2)^2(2-\delta)
+
(d-2)(-2\delta^2+\delta+6) + 4+8\delta-4\delta^2 \Big)
\end{aligned}
\end{equation}
For \(d=2^m\) and \(0<\delta<1\), the denominators of
\(c_+\) and \(c_-\) are positive.
For the numerator, we check
\begin{equation}
\begin{aligned}
& \delta^2-2\delta+2=(1-\delta)^2+1>0,
\quad
2\delta^3-4\delta+10>0, \\
& 6\delta^3-9\delta^2+6\delta+20>0,
\quad
4\delta^3-8\delta^2+12\delta+24>0.
\end{aligned}
\end{equation}
Since \(d-2\ge0\), this implies the numerator of $c_+$ is positive. 
Similarly,
\begin{equation}
2-\delta>0,
\quad
-2\delta^2+\delta+6>0,
\quad
4+8\delta-4\delta^2>0
\end{equation}
Hence the numerator of $c_-$ is also positive. Therefore $c_+>0,
\;
c_->0.$
Since the positivity criterion in Eq.~\eqref{eq:S-positivity-criterion}
implies $s_+\ge0,
\;
s_-\ge0,$
we conclude that
\begin{equation}
\tr[W\Psi^\cE_{A_I^2A_O}] = p\left[
1+\cK_\cR^l(f-\lambda_0)
\right]
+
c_+s_+
+
c_-s_-
\ge
p\left[
1+\cK_\cR^l(f-\lambda_0)
\right].
\end{equation}
Finally, by Eq.~\eqref{eq:Wsm-bound-robustness},
\begin{equation}
\cR(\Psi^\cE_{A_I^2A_O})
\ge
|\tr[W\Psi^\cE_{A_I^2A_O}]|
\ge
\tr[W\Psi^\cE_{A_I^2A_O}]\ge p\left[
1+\cK_\cR^l(f-\lambda_0)
\right].
\end{equation}
Therefore, every feasible Clifford-twirled Choi state satisfies
\begin{equation}\label{eq:multi-robustness-necessary-sm}
\frac{1}{p}\cR(\Psi^\cE_{A_I^2A_O})
\ge
1+
\cK_\cR^l
(f-\lambda_0).
\end{equation}
For an arbitrary feasible protocol, the Clifford twirl preserves the constraints and does not increase the Choi-state robustness. Hence the same lower bound holds before twirling. Taking the minimum over all feasible protocols gives
\begin{equation}
\cR_{\cD_\delta}(2,f,p,d)
\ge
1+
\cK_\cR^l
(f-\lambda_0).
\end{equation}
Combining this lower bound with the upper bound, we conclude that
\begin{equation}
1+
\cK_\cR^l
(f-\lambda_0)
\le \cR_{\cD_\delta}(2,f,p,d)  \le 
1+
\cK_\cR^u
(f-\lambda_0).
\end{equation}
where
\begin{equation}
    \cK_\cR^l = \frac{
\frac{d-2}{2}
+
\frac{1}{d-1} + 2\delta-\delta^2
} {\lambda_0\delta(1-\delta)} , \quad \cK_\cR^u = \frac{2d-3 + 2\delta-\delta^2}
{\lambda_0\delta(1-\delta)}.
\end{equation}

In particular, for a single-qubit system $m=1$, note that $2d-3 = 1, \frac{d-2}{2}
+
\frac{1}{d-1} = 1$. Thus the upper bound and lower bound collapse, i.e.
\begin{equation}
    \cR_{\cD_\delta}(2,f,p,2) = 1+
\frac{1 + 2\delta-\delta^2}
{\lambda_0\delta(1-\delta)}
(f-\lambda_0).
\end{equation}
This completes the proof. 
\end{proof}


\end{document}

%% file: pretex.tex
\usepackage[dvipsnames]{xcolor}
\usepackage{framed}
\definecolor{shadecolor}{rgb}{0.9,0.9,0.9}

\usepackage{mathtools}
\usepackage{amsmath}
\usepackage[shortlabels]{enumitem}

\usepackage{graphicx,epic,eepic,epsfig,amsmath,latexsym,amssymb,verbatim,color}
 
\usepackage{amsfonts}       
\usepackage{nicefrac}       

\usepackage{amsmath}
\usepackage{bbm}

\usepackage{float}
\usepackage{tikz}
\usetikzlibrary{chains}
\usetikzlibrary{fit}
\usepackage{pgflibraryarrows}		
\usepackage{pgflibrarysnakes}		

\usepackage{epsfig}
\usetikzlibrary{shapes.symbols,patterns} 
\usepackage{pgfplots}

\usepackage[strict]{changepage}
\usepackage{hyperref}
\hypersetup{colorlinks=true,citecolor=blue,linkcolor=blue,filecolor=blue,urlcolor=blue,breaklinks=true}

\usepackage[marginal]{footmisc}
\usepackage{url}
\usepackage{theorem}

\newtheorem{definition}{Definition}
\newtheorem{proposition}{Proposition}
\newtheorem{lemma}[proposition]{Lemma}

\newtheorem{theorem}[proposition]{Theorem}

\newtheorem{corollary}[proposition]{Corollary}

\def\squareforqed{\hbox{\rlap{$\sqcap$}$\sqcup$}}
\def\qed{\ifmmode\squareforqed\else{\unskip\nobreak\hfil
\penalty50\hskip1em\null\nobreak\hfil\squareforqed
\parfillskip=0pt\finalhyphendemerits=0\endgraf}\fi}
\def\endenv{\ifmmode\;\else{\unskip\nobreak\hfil
\penalty50\hskip1em\null\nobreak\hfil\;
\parfillskip=0pt\finalhyphendemerits=0\endgraf}\fi}
\newenvironment{proof}{\noindent \textbf{{Proof~} }}{\hfill $\blacksquare$}

\newcounter{remark}

\newcounter{example}

\mathchardef\ordinarycolon\mathcode`\:
\mathcode`\:=\string"8000
\def\vcentcolon{\mathrel{\mathop\ordinarycolon}}
\begingroup \catcode`\:=\active
  \lowercase{\endgroup
  \let :\vcentcolon
  }

\usepackage{cleveref}
\usepackage{graphicx}
\usepackage{xcolor}

\RequirePackage[framemethod=default]{mdframed}
\newmdenv[skipabove=7pt,
skipbelow=7pt,
backgroundcolor=darkblue!15,
innerleftmargin=5pt,
innerrightmargin=5pt,
innertopmargin=5pt,
leftmargin=0cm,
rightmargin=0cm,
innerbottommargin=5pt,
linewidth=1pt]{tBox}

\newmdenv[skipabove=7pt,
skipbelow=7pt,
backgroundcolor=red!15,
innerleftmargin=5pt,
innerrightmargin=5pt,
innertopmargin=5pt,
leftmargin=0cm,
rightmargin=0cm,
innerbottommargin=5pt,
linewidth=1pt]{rBox}

\newmdenv[skipabove=7pt,
skipbelow=7pt,
backgroundcolor=blue2!25,
innerleftmargin=5pt,
innerrightmargin=5pt,
innertopmargin=5pt,
leftmargin=0cm,
rightmargin=0cm,
innerbottommargin=5pt,
linewidth=1pt]{dBox}
\newmdenv[skipabove=7pt,
skipbelow=7pt,
backgroundcolor=darkkblue!15,
innerleftmargin=5pt,
innerrightmargin=5pt,
innertopmargin=5pt,
leftmargin=0cm,
rightmargin=0cm,
innerbottommargin=5pt,
linewidth=1pt]{sBox}
\definecolor{darkblue}{RGB}{0,76,156}
\definecolor{darkkblue}{RGB}{0,0,153}
\definecolor{blue2}{RGB}{102,178,255}
\definecolor{darkred}{RGB}{195,0,0}

\newcommand{\nc}{\newcommand}
\nc{\rnc}{\renewcommand}
\nc{\lbar}[1]{\overline{#1}}
\nc{\bra}[1]{\langle#1|}
\nc{\ket}[1]{|#1\rangle}
\nc{\ketbra}[2]{|#1\rangle\!\langle#2|}
\nc{\braket}[2]{\langle#1|#2\rangle}

\nc{\proj}[1]{| #1\rangle\!\langle #1 |}
\nc{\avg}[1]{\langle#1\rangle}
\nc{\smfrac}[2]{\mbox{$\frac{#1}{#2}$}}
\nc{\tr}{\operatorname{Tr}}
\nc{\ox}{\otimes}
\nc{\dg}{\dagger}
\nc{\dn}{\downarrow}
\nc{\cA}{{\cal A}}
\nc{\cB}{{\cal B}}
\nc{\cC}{{\cal C}}
\nc{\cD}{{\cal D}}
\nc{\cE}{{\cal E}}
\nc{\cF}{{\cal F}}
\nc{\cG}{{\cal G}}
\nc{\cH}{{\cal H}}
\nc{\cI}{{\cal I}}
\nc{\cJ}{{\cal J}}
\nc{\cK}{{\cal K}}
\nc{\cL}{{\cal L}}
\nc{\cM}{{\cal M}}
\nc{\cN}{{\cal N}}
\nc{\cO}{{\cal O}}
\nc{\cP}{{\cal P}}
\nc{\cQ}{{\cal Q}}
\nc{\cR}{{\cal R}}
\nc{\cS}{{\cal S}}
\nc{\cT}{{\cal T}}
\nc{\cU}{{\cal U}}
\nc{\cV}{{\cal V}}
\nc{\cX}{{\cal X}}
\nc{\cY}{{\cal Y}}
\nc{\cZ}{{\cal Z}}
\nc{\cW}{{\cal W}}
\nc{\csupp}{{\operatorname{csupp}}}
\nc{\qsupp}{{\operatorname{qsupp}}}
\nc{\var}{{\operatorname{var}}}
\nc{\rar}{\rightarrow}
\nc{\lrar}{\longrightarrow}
\nc{\polylog}{{\operatorname{polylog}}}
\nc{\wt}{{\operatorname{wt}}}
\nc{\av}[1]{{\left\langle {#1} \right\rangle}}
\nc{\supp}{{\operatorname{supp}}}

\nc{\argmin}{{\operatorname{argmin}}}

\def\x{\xi}

\nc{\RR}{{{\mathbb R}}}
\nc{\CC}{{{\mathbb C}}}
\nc{\FF}{{{\mathbb F}}}
\nc{\NN}{{{\mathbb N}}}
\nc{\ZZ}{{{\mathbb Z}}}
\nc{\PP}{{{\mathbb P}}}
\nc{\QQ}{{{\mathbb Q}}}
\nc{\UU}{{{\mathbb U}}}
\nc{\EE}{{{\mathbb E}}}
\nc{\id}{{\operatorname{id}}}

\nc{\CHSH}{{\operatorname{CHSH}}}
\newcommand{\bP}{\mathbf{P}}

\nc{\be}{\begin{equation}}
\nc{\ee}{{\end{equation}}}
\nc{\bea}{\begin{eqnarray}}
\nc{\eea}{\end{eqnarray}}
\nc{\<}{\langle}
\rnc{\>}{\rangle}
\nc{\rU}{\mbox{U}}

\nc{\ob}[1]{#1}

\nc{\SEP}{{\text{\rm SEP}}}
\nc{\NS}{{\text{\rm NS}}}
\nc{\LOCC}{{\text{\rm LOCC}}}
\nc{\PPT}{{\text{\rm PPT}}}
\nc{\EXT}{{\text{\rm EXT}}}
\nc{\Sym}{{\operatorname{Sym}}}

\nc{\ERLO}{{E_{\text{r,LO}}}}
\nc{\ERLOCC}{{E_{\text{r,LOCC}}}}
\nc{\ERPPT}{{E_{\text{r,PPT}}}}
\nc{\ERLOCCinfty}{{E^{\infty}_{\text{r,LOCC}}}}
\nc{\Aram}{{\operatorname{\sf A}}}

\usepackage{tikz}
\usepackage{hyperref}
\hypersetup{colorlinks=true,citecolor=blue,linkcolor=blue,filecolor=blue,urlcolor=blue,breaklinks=true}

\makeatletter
\def\grd@save@target#1{%
  \def\grd@target{#1}}
\def\grd@save@start#1{%
  \def\grd@start{#1}}
\tikzset{
  grid with coordinates/.style={
    to path={%
      \pgfextra{%
        \edef\grd@@target{(\tikztotarget)}%
        \tikz@scan@one@point\grd@save@target\grd@@target\relax
        \edef\grd@@start{(\tikztostart)}%
        \tikz@scan@one@point\grd@save@start\grd@@start\relax
        \draw[minor help lines,magenta] (\tikztostart) grid (\tikztotarget);
        \draw[major help lines] (\tikztostart) grid (\tikztotarget);
        \grd@start
        \pgfmathsetmacro{\grd@xa}{\the\pgf@x/1cm}
        \pgfmathsetmacro{\grd@ya}{\the\pgf@y/1cm}
        \grd@target
        \pgfmathsetmacro{\grd@xb}{\the\pgf@x/1cm}
        \pgfmathsetmacro{\grd@yb}{\the\pgf@y/1cm}
        \pgfmathsetmacro{\grd@xc}{\grd@xa + \pgfkeysvalueof{/tikz/grid with coordinates/major step}}
        \pgfmathsetmacro{\grd@yc}{\grd@ya + \pgfkeysvalueof{/tikz/grid with coordinates/major step}}
        \foreach \x in {\grd@xa,\grd@xc,...,\grd@xb}
        \node[anchor=north] at (\x,\grd@ya) {\pgfmathprintnumber{\x}};
        \foreach \y in {\grd@ya,\grd@yc,...,\grd@yb}
        \node[anchor=east] at (\grd@xa,\y) {\pgfmathprintnumber{\y}};
      }
    }
  },
  minor help lines/.style={
    help lines,
    step=\pgfkeysvalueof{/tikz/grid with coordinates/minor step}
  },
  major help lines/.style={
    help lines,
    line width=\pgfkeysvalueof{/tikz/grid with coordinates/major line width},
    step=\pgfkeysvalueof{/tikz/grid with coordinates/major step}
  },
  grid with coordinates/.cd,
  minor step/.initial=.2,
  major step/.initial=1,
  major line width/.initial=2pt,
}
\makeatother

\usepackage{thmtools}
\usepackage{thm-restate}
\usepackage{etoolbox}
\makeatletter
\def\problem@s{}
\newcounter{problems@cnt}

\newcommand{\allproblems}{\problem@s}
\makeatother